\theoremstyle{definition}
\newtheorem{question}[theorem]{Question}
\newcommand{\lovasz}{Lov{\'{a}}sz}
\DeclareMathOperator{\GL}{GL}
\newcommand{\pr}{\mathrm{pr}}
\newcommand{\homs}[2]{\mbox{\ensuremath{\mathsf{Hom}(#1 \to #2)}}}
\newcommand{\embs}[2]{\mbox{\ensuremath{\mathsf{Emb}(#1 \to #2)}}}
\newcommand{\subs}[2]{\mbox{\ensuremath{\mathsf{Sub}(#1 \to #2)}}}
\newcommand{\edgesubs}[2]{\mbox{\ensuremath{\mathsf{EdgeSub}(#1 \to #2)}}}
\newcommand{\auts}[1]{\ensuremath{\mathsf{Aut}(#1)}}
\newcommand{\cphoms}[2]{\ensuremath{\mathsf{cp}\text{-}\mathsf{Hom}}(#1 \to #2)}
\newcommand{\cfhoms}[2]{\ensuremath{\mathsf{cf}\text{-}\mathsf{Hom}}(#1 \to #2)}
\newcommand{\coledgesubs}[2]{\mbox{\ensuremath{\mathsf{ColEdgeSub}(#1 \to #2)}}}
\newcommand{\fptred}{\ensuremath{\leq^{\mathrm{fpt}}_{\mathrm{T}}}}
\newcommand{\colmatch}{\text{\sc{ColMatch}}}
\newcommand{\homsprob}{\text{\sc{Hom}}}
\newcommand{\cphomsprob}{\text{\sc{cp-Hom}}}
\newcommand{\edgesubsprob}{\text{\sc{EdgeSub}}}
\newcommand{\coledgesubsprob}{\text{\sc{ColEdgeSub}}}
\newcommand{\NP}{\ensuremath{\mathsf{NP}}}
\newcommand{\ccP}{\ensuremath{\mathsf{P}}}
\newcommand{\W}[1]{\ensuremath{\mathsf{W[#1]}}}
\newcommand{\pW}[1]{\ensuremath{\mathsf{Mod}_p\mathsf{W[#1]}}}
\newcommand{\threeW}[1]{\ensuremath{\mathsf{Mod}_3\mathsf{W[#1]}}}
\def\fracture#1#2{\ensuremath{#1\raisebox{.2ex}{\rotatebox[origin=c]{-15}{$\sharp$}}#2}}
\newcommand{\tu}{\ensuremath{\vartriangle}}
\newcommand{\td}{\ensuremath{\triangledown}}
\newcommand{\tl}{\ensuremath{\vartriangleleft}}
\newcommand{\tr}{\ensuremath{\vartriangleright}}
\def\basegraph#1{\ensuremath{\mathcal{H}(#1)}}
\newcommand{\redc}[2][red,fill=red]{\tikz[baseline=-0.5ex]\draw[#1,radius=#2] (0,0) circle ;}
\newcommand{\greenc}[2][black!30!green,fill=black!30!green]{\tikz[baseline=-0.5ex]\draw[#1,radius=#2] (0,0) circle ;}
\newcommand{\bluec}[2][blue,fill=blue]{\tikz[baseline=-0.5ex]\draw[#1,radius=#2] (0,0) circle ;}
\newcommand{\yellowc}[2][black!10!yellow,fill=black!10!yellow]{\tikz[baseline=-0.5ex]\draw[#1,radius=#2] (0,0) circle ;}
\newcommand{\blackc}[2][black,fill=black]{\tikz[baseline=-0.5ex]\draw[#1,radius=#2] (0,0) circle ;}
\newcommand{\brownc}[2][brown,fill=brown]{\tikz[baseline=-0.5ex]\draw[#1,radius=#2] (0,0) circle ;}
\title{Parameterized (Modular) Counting and Cayley Graph Expanders}
\author{Norbert Peyerimhoff}{Department of Mathematical Sciences, Durham University, United Kingdom}{norbert.peyerimhoff@durham.ac.uk}{}{}
\author{Marc Roth}{Merton College, University of Oxford, United Kingdom}{marc.roth@merton.ox.ac.uk}{https://orcid.org/0000-0003-3159-9418}{}
\author{Johannes Schmitt}{Mathematical Institute, University of Bonn, Germany}{schmitt@math.uni-bonn.de}{https://orcid.org/0000-0001-5774-3508}{}
\author{Jakob Stix}{Mathematical Institute, Goethe-Universität Frankfurt, Germany}{stix@math.uni-frankfurt.de}{}{}
\author{Alina Vdovina}{School of Mathematics and Statistics, Newcastle University, United Kingdom}{Alina.Vdovina@newcastle.ac.uk}{}{}
\authorrunning{N. Peyerimhoff, M. Roth, J. Schmitt, J. Stix and A. Vdovina}
\keywords{Cayley graphs, counting complexity, expander graphs, fine-grained complexity, parameterized complexity}
\begin{document}
\maketitle

\begin{abstract}
    We study the problem $\#\edgesubsprob(\Phi)$ of counting $k$-edge subgraphs satisfying a given graph property $\Phi$ in a large host graph $G$. Building upon the breakthrough result of Curticapean, Dell and Marx (STOC 17), we express the number of such subgraphs as a finite linear combination of graph homomorphism counts and derive the complexity of computing this number by studying its coefficients. 
    
    Our approach relies on novel constructions of low-degree Cayley graph expanders of $p$-groups, which might be of independent interest. The properties of those expanders allow us to analyse the coefficients in the aforementioned linear combinations over the field $\mathbb{F}_p$ which gives us significantly more control over the cancellation behaviour of the coefficients. Our main result is an exhaustive and fine-grained complexity classification of $\#\edgesubsprob(\Phi)$ for minor-closed properties $\Phi$, closing the missing gap in previous work by Roth, Schmitt and Wellnitz (ICALP 21). 
    
    Additionally, we observe that our methods also apply to modular counting. Among others, we obtain novel intractability results for the problems of counting $k$-forests and matroid bases modulo a prime $p$. Furthermore, from an algorithmic point of view, we construct algorithms for the problems of counting $k$-paths and $k$-cycles modulo $2$ that outperform the best known algorithms for their non-modular counterparts.
    
    In the course of our investigations we also provide an exhaustive parameterized complexity classification for the problem of counting graph homomorphisms modulo a prime $p$.
\end{abstract}

\section{Introduction}
In this work we study the problem of counting small patterns in large host graphs. With applications in a diverse set of disciplines such as constraint satisfaction problems~\cite{DalmauJ04,BulatovZ20}, database theory~\cite{DurandM15,ChenM16} and network science~\cite{Miloetal02,Nogaetat08,Schilleretal15}, it is unsurprising that this problem has received significant attention from the viewpoint of parameterized and fine-grained complexity theory in recent years~\cite{ArvindR02,FlumG04,McCartin06,Curticapean13,CurticapeanM14,CurticapeanDM17,BrandDH18,RothS18,RothSW20,LokshtanovSZ21,RothSW20unpub}.

We continue this line of work and study the problem of counting $k$-edge subgraphs that satisfy a graph property $\Phi$: For any fixed $\Phi$, the problem $\#\edgesubsprob(\Phi)$ asks, on input a graph $G$ and a positive integer $k$, to compute the number of (not necessarily induced) subgraphs with $k$ edges in $G$ that satisfy $\Phi$. In particular, we focus on instances in which $k$ is significantly smaller than $G$. Formally, we choose $k$ to be the \emph{parameter} of the problem and ask for which $\Phi$ there is a function $f$ such that $\#\edgesubsprob(\Phi)$ can be solved in time $f(k)\cdot |V(G)|^{O(1)}$; in this case we call the problem \emph{fixed-parameter tractable} with respect to the parameter $k$. 

If $\#\edgesubsprob(\Phi)$ is not fixed-parameter tractable, it is desirable to improve the exponent of $|V(G)|$ in the running time as far as possible. For example, the best known algorithm for counting $k$-edge subgraphs~\cite{CurticapeanDM17} can be used to solve $\#\edgesubsprob(\Phi)$ in time $f(k)\cdot |V(G)|^{0.174 k +o(k)}$~\cite{RothSW20unpub}. Additionally, it was shown in recent work that $\#\edgesubsprob(\Phi)$ is fixed-parameter tractable whenever $\Phi$ has \emph{bounded matching number}, that is, whenever there is a constant upper bound on the size of the largest matching of any graph satisfying $\Phi$~\cite{RothSW20unpub}. If, for each $k$, the property $\Phi$ is true for only one graph on $k$ edges, then the previous fixed-parameter tractability result is best possible: In this case, $\#\edgesubsprob(\Phi)$ becomes an instance of the counting version of the parameterized subgraph isomorphism problem which has been fully classified by Curticapean and Marx~\cite{CurticapeanM14}. 

However, for arbitrary $\Phi$, much less is known about the complexity of $\#\edgesubsprob(\Phi)$. In~\cite{RothSW20unpub}, two of the authors, together with Wellnitz, presented first results for more general properties such as connectivity, Eulerianity and, in particular, an \emph{almost} exhaustive classification for minor-closed properties $\Phi$, leaving (partially) open the case of forbidden minors of degree at most $2$. In this work, we close this gap and provide a full dichotomy result:
\begin{restatable}{theorem}{minclosehard}\label{thm:minclose-hard}
Let $\Phi$ be a minor-closed graph property. If $\Phi$ is trivially true or of bounded matching number, then $\#\edgesubsprob(\Phi)$ is fixed-parameter tractable. 
Otherwise, $\#\edgesubsprob(\Phi)$ is $\#\W{1}$-hard and, assuming the Exponential Time Hypothesis, it cannot be solved in time \[f(k)\cdot |G|^{o(k/\log k)}\] 
for any function $f$.
\end{restatable}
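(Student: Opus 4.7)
The FPT direction (when $\Phi$ is trivially true or has bounded matching number) is already established in prior work by Roth, Schmitt and Wellnitz, so the task is to prove the hardness assertions. My plan is to follow the Curticapean--Dell--Marx paradigm and express
\[
\#\edgesubsprob(\Phi)(G,k) \;=\; \sum_{H} a_\Phi(H,k) \cdot \hom(H \to G),
\]
a finite linear combination indexed by (isomorphism classes of) quotients of $k$-edge graphs satisfying $\Phi$. By the complexity monotone principle for homomorphism counts, together with the hardness of counting homomorphisms to patterns of large treewidth (Dalmau--Jonsson and Marx), it suffices to exhibit, for infinitely many $k$, a graph $H_k$ in the support of $a_\Phi(\cdot,k)$ whose treewidth grows like $\Omega(k/\log k)$. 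The ETH-based lower bound $|G|^{o(k/\log k)}$ then drops out by tracking the reduction quantitatively.

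The central difficulty is that the coefficients $a_\Phi(H,k)$ are alternating sums and may vanish over $\mathbb{Z}$ for incidental algebraic reasons, so I would gain control over their support by reducing modulo a small prime $p$: if $a_\Phi(H_k,k) \not\equiv 0 \pmod p$, then a fortiori $a_\Phi(H_k,k) \neq 0$. Choosing $H_k$ to be a Cayley graph of a $p$-group $\Gamma$ has the decisive effect that $\Gamma$ acts freely not only on $V(H_k)$ but also on most of the auxiliary combinatorial data (fractures, quotient maps, intermediate contractions) indexing the defining sum of $a_\Phi(H_k,k)$. Terms whose $\Gamma$-orbits have length divisible by $p$ vanish modulo $p$, leaving only a controllable contribution of $\Gamma$-invariant terms, which in favourable circumstances can be shown to sum to a nonzero element of $\mathbb{F}_p$.

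The success of this strategy relies on the existence of low-degree Cayley graph expanders of $p$-groups, which is exactly the combinatorial input announced in the abstract. Low degree is essential precisely in the regime left open by prior work: when all forbidden minors of $\Phi$ have degree at most $2$, the host graphs admitted by $\Phi$ are themselves subject to degree constraints, and only bounded-degree $H_k$ can be realised as quotients of $k$-edge graphs in $\Phi$. With such an expander $H_k$ in hand, one exploits the hypothesis that $\Phi$ is neither trivially true nor of bounded matching number: minor-closure plus unbounded matching number guarantees that a suitable combination of $H_k$ with a sufficiently large matching (or a comparably thin gadget) lies in $\Phi$ and contracts onto $H_k$, placing $H_k$ into the support of $a_\Phi(\cdot,k)$.

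The main obstacle lies in bringing these three requirements together simultaneously: $H_k$ must expand (to obtain the treewidth bound $\Omega(k/\log k)$), must be of low enough degree to be compatible with the degree-$\leq 2$ forbidden minors of $\Phi$, and its coefficient must be shown to be nonzero modulo $p$. The first two are delivered by the new Cayley expander construction of $p$-groups developed in this paper; the third reduces, via the orbit analysis above, to a finite combinatorial calculation over $\mathbb{F}_p$ describing the action of $\auts{H_k}$ on the lattice of fractures of $H_k$. Verifying that this residual sum is nonzero modulo $p$ for the specific expanders we construct is where the bulk of the technical work lies, and it is precisely this step that closes the gap left by the previous almost-classification.
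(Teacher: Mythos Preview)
Your high-level strategy matches the paper: analyse the indicator $a(\Phi,H)$ modulo a prime $p$ on a $p$-group Cayley graph expander, so that the orbit argument collapses the sum to a handful of $\Gamma$-invariant fractures. But there is a genuine gap. You assume that minor-closure plus unbounded matching number alone suffices for the expander analysis to go through; it does not. The paper's fixed-point calculation (Lemma~\ref{lem:obstruction}, on the $4$-regular $3$-group expanders) needs two further hypotheses: that $\Phi$ is \emph{suitable}, i.e.\ $\Phi(C_t)=0$ for some $t$, and that $\Phi$ has \emph{unbounded wedge-number}, i.e.\ $\Phi$ is true on arbitrarily large disjoint unions of $P_2$. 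Suitability is what eliminates the fixed-point fractures whose fractured graphs contain long cycles; unbounded wedge-number is what forces the four remaining fractures (whose fractured graphs are wedge-packings) to lie in $\mathcal{L}(\Phi,G_i)$ and contribute nontrivially. A minor-closed $\Phi$ of unbounded matching number can fail either condition: $\Phi$ may be true on every cycle while still having a degree-$\le 2$ forbidden minor, or $\Phi$ may be false on all large wedge-packings. The paper handles these side cases by separate reductions (Lemmas~\ref{lem:minor_trick} and~\ref{lem:minor_trick_wedges}): unsuitable properties are reduced to suitable ones by padding with a fixed cycle-path configuration, and bounded-wedge-number properties are reduced to colourful matching counting. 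Your sketch contains no analogue of this case split, and the sentence ``a suitable combination of $H_k$ with a sufficiently large matching \ldots contracts onto $H_k$'' does not correspond to any step that actually appears.

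A smaller misconception: the reason low degree matters is not that degree-$\le 2$ forbidden minors force $H_k$ itself to have bounded degree (they do not; $H_k$ is never required to satisfy $\Phi$, only its fractured graphs are). Low degree is purely what keeps the number of $\Gamma$-invariant fractures --- partitions of the $2d$-element generating set --- small enough that the residual $\mathbb{F}_p$-sum can be evaluated by hand. The earlier $2$-group expanders failed in the degree-$\le 2$ regime because the indicator happened to vanish modulo $2$, not because of any degree incompatibility.
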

Here, $\#\W{1}$ is the parameterized counting analogue of $\NP$; a formal definition is provided in Section~\ref{sec:prelims}. Particular cases for which we obtain novel intractability results are given by the following (minor-closed) properties; the formal intractability results are stated and proved in Section~\ref{sec:lowerbounds} as Corollaries~\ref{cor:forest_exact_and_3}, \ref{cor:treedepth}, and~\ref{cor:CDVinv}.
\begin{itemize}
    \item $\Phi(H)=1$ if $H$ is a forest.
    \item $\Phi(H)=1$ if $H$ is a linear forest.
    \item $\Phi(H)=1$ if the tree-depth of $H$ is bounded by a constant.
    \item $\Phi(H)=1$ if the Colin de Verdi\`ere Invariant of $H$ is bounded by a constant.
\end{itemize}
Additionally, we investigate the property of being bipartite. For this case, we present not only a novel fine-grained lower bound, but also a $\#\W{1}$-hardness result, which was not known before.
\begin{restatable}{theorem}{bipartitehard}\label{thm:bipartite-hard}
Let $\Phi$ be the property of being bipartite.
Then $\#\edgesubsprob(\Phi)$ is $\#\W{1}$-hard and, assuming the Exponential Time Hypothesis, it cannot be solved in time \[f(k)\cdot |G|^{o(k/\log k)}\] 
for any function $f$.
\end{restatable}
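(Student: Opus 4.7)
\textbf{Proof plan for Theorem~\ref{thm:bipartite-hard}.} The proof follows the same architecture as Theorem~\ref{thm:minclose-hard}, adapted to the property of being bipartite (which is not minor-closed, and hence not covered by that theorem). First, applying the Curticapean--Dell--Marx decomposition gives
\[
\#\edgesubsprob(\Phi_{\text{bip}})(G,k) \;=\; \sum_{F} a_{F,k}\cdot \#\hom(F\to G), \qquad a_{F,k} \;=\; \sum_{\substack{H\colon |E(H)|=k\\ H \text{ bipartite}}} \frac{c_{H,F}}{|\auts{H}|},
\]
where $c_{H,F}$ counts surjective quotients $H\twoheadrightarrow F$ that identify only non-adjacent vertex pairs. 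By the complexity monotonicity principle, it suffices to exhibit, for infinitely many $k$, a graph $F_k$ of treewidth $\Omega(k/\log k)$ with $a_{F_k,k}\neq 0$. Marx's theorem on the complexity of parameterized homomorphism counting then transfers both $\#\W{1}$-hardness and the claimed ETH lower bound from $\#\hom(F_k\to\cdot)$ to $\#\edgesubsprob(\Phi_{\text{bip}})$.

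For $F_k$, the plan is to take one of the bipartite low-degree Cayley graph expanders of a $p$-group constructed earlier in this paper. Bipartiteness of $F_k$ can be ensured by a bipartite-double construction (or by choosing generators lying in a non-trivial coset of an index-two subgroup), at no cost to the expansion properties. Constant-degree expanders on $n$ vertices have treewidth $\Omega(n/\log n)$ via standard vertex-expansion arguments, and since $F_k$ has $\Theta(k)$ vertices this yields the required $\Omega(k/\log k)$ treewidth bound.

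The crux is to show $a_{F_k,k}\not\equiv 0\pmod p$. The underlying $p$-group $G_k$ embeds into $\auts{F_k}$ via left translation and thereby acts freely on the set of admissible pairs $(H,\phi)$ with $H$ bipartite, $|E(H)|=k$, and $\phi:H\twoheadrightarrow F_k$ an admissible quotient; crucially, the bipartiteness constraint depends only on the isomorphism type of $H$, so the set of contributing pairs is stable under this action. Orbits under $G_k$ therefore have size a positive power of $p$, except for $G_k$-fixed configurations. Modulo $p$, the coefficient reduces to a count of fixed configurations, and the key technical step — the main obstacle of the proof — is to show that the only such configuration surviving modulo $p$ comes from $H\cong F_k$ with $\phi$ an automorphism, contributing $\Phi_{\text{bip}}(F_k)\cdot|\auts{F_k}|/|\auts{F_k}|=1$. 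This fixed-point classification is precisely where the fine structure (expansion and $p$-group symmetry) of the novel Cayley graph expanders constructed in this paper becomes indispensable, mirroring the corresponding step in the proof of Theorem~\ref{thm:minclose-hard}.
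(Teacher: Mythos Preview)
Your overall architecture is right --- expand into homomorphism counts, pick a Cayley-graph expander of a $p$-group, and analyse the coefficient modulo $p$ via the orbit/fixed-point trick --- and this is essentially what the paper does. But the ``key technical step'' you describe is not correct, and the bipartiteness requirement on $F_k$ is both unnecessary and problematic.

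First, your assertion that the only $G_k$-fixed configuration surviving modulo $p$ is $H\cong F_k$ with $\phi$ an automorphism, contributing $1$, is false. In the paper's equivalent formulation via fractures, the $\Gamma$-fixed fractures are exactly the partitions $\sigma$ of the generating set $S$, and one must keep \emph{every} such $\sigma$ for which the fractured graph $\fracture{G_i}{\sigma}$ is bipartite. For the $6$-regular case there are $79$ such partitions across seven isomorphism types of the auxiliary graph $\basegraph{\sigma}$ (Table~\ref{tab:bipartbasegraph}), and their weighted sum is $-16$, not $1$. The proof succeeds only because $-16\not\equiv 0\pmod 5$; with $4$-regular expanders the analogous sum is $0$ for every $p$ (see Remark~\ref{rem:sage_results}), so the conclusion is genuinely sensitive to the degree and to $p$. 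The real work, which you have black-boxed, is Proposition~\ref{Pro:basegraphbipartite} (reducing bipartiteness of $\fracture{G_i}{\sigma}$ to bipartiteness of the tiny graph $\basegraph{\sigma}$, crucially using that $|\Gamma|$ is odd) followed by the explicit enumeration.

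Second, there is no need for $F_k$ to be bipartite --- the graphs appearing in the homomorphism expansion need not satisfy $\Phi$ --- and your suggested ways of enforcing it do not work here: an odd $p$-group has no subgroup of index two, and passing to the bipartite double replaces $\Gamma$ by $\Gamma\times\mathbb{Z}/2$, destroying the pure $p$-group structure you rely on. Finally, note that the paper works with the edge-coloured indicator $a(\Phi,H)$, which is an integer; your uncoloured coefficient $a_{F,k}$ carries factors $1/|\auts{H}|$ that are not in general invertible modulo $p$, so the reduction mod $p$ you want to perform is not well-defined in your formulation.
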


Our hardness results crucially rely on a novel construction of families of low-degree Cayley graph expanders of $p$-groups, which might be of independent interest. We will present the new Cayley graph expanders in the following theorem; their construction, as well as their role in the hardness proofs for $\#\edgesubsprob(\Phi)$ will be elaborated on in Section~\ref{sec:intro_techniques}.
\begin{restatable}{theorem}{main_expanders}\label{thm:main_expanders_intro}
Let $p \ge 3$ be a prime number, and $d \geq 2$ be an integer. We assume that $d \ge (p+3)/2$ if $p \ge 7$. 

Then there is an explicit construction of a sequence of finite $p$-groups $\Gamma_i$ of orders that tend to infinity, with symmetric generating sets $S_i$ of cardinality $2d$ such that the Cayley graphs $\mathcal{C}(\Gamma_i, S_i)$ form a family of expanders (of fixed valency $2d$ on a set of vertices of $p$-power orders and with vertex transitive automorphism groups).
\end{restatable}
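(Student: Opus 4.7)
The plan is to reduce the theorem to the construction of a single finitely generated group $\Gamma$, equipped with a symmetric generating set $S$ of cardinality $2d$, that simultaneously enjoys Kazhdan's property $(T)$ and admits a descending chain of finite-index normal subgroups $N_1 \supset N_2 \supset \cdots$ with each quotient $\Gamma/N_i$ a $p$-group of order tending to infinity. Setting $\Gamma_i := \Gamma/N_i$ and letting $S_i$ be the image of $S$, the Cayley graphs $\mathcal{C}(\Gamma_i,S_i)$ then automatically have valency $2d$, vertex-transitive automorphism groups, and vertex sets of $p$-power cardinality. The standard dictionary between Kazhdan constants of $\Gamma$ and spectral gaps of its finite quotient Cayley graphs converts property $(T)$ into uniform expansion of the family $\{\mathcal{C}(\Gamma_i,S_i)\}_i$.

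For the construction of $\Gamma$, the natural route is to take a cocompact lattice in a higher-rank $p$-adic Lie group such as $\mathrm{SL}_n(\mathbb{Q}_p)$ for $n \geq 3$, or in a product of Bruhat--Tits trees (in the spirit of Burger--Mozes and subsequent work on lattices in products of trees), arranged so that $\Gamma$ is contained in the first congruence kernel of the integral structure. That kernel is an open pro-$p$ subgroup, so its principal congruence quotients intersected with $\Gamma$ furnish the required normal subgroups $N_i$ of $p$-power index. Property $(T)$ of the ambient $p$-adic group passes to $\Gamma$, supplying a uniform Kazhdan constant. The numerical bound $d \geq (p+3)/2$ for $p \geq 7$ plausibly reflects the minimal rank needed so that the pro-$p$ congruence kernel admits a symmetric topological generating set of size at most $2d$, while $d \geq 2$ excludes the obvious obstructions from abelian and dihedral quotients.

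The main obstacle is the tension between property $(T)$ and the existence of an infinite tower of $p$-group quotients: property $(T)$ forces $\Gamma$ to be far from solvable, whereas $p$-groups are nilpotent. The analogous construction over $\mathbb{Z}$, say with $\Gamma$ a congruence subgroup of $\mathrm{SL}_3(\mathbb{Z})$, collapses precisely here because the pro-$p$ completion of $\Gamma$ is finite. Working from the outset inside a $p$-adic group, and producing the $p$-group quotients via the pro-$p$ congruence filtration, is exactly what resolves this tension. The remaining technical work is to exhibit an explicit symmetric generating set of size $2d$ for the lattice while keeping control of the Kazhdan constant, which one expects to achieve through local-spectral criteria (Żuk, Garland, Ballmann--Świątkowski) applied to the action of $\Gamma$ on the associated building. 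The concrete dependence of admissible $d$ on $p$ is then read off from the link geometry that these criteria require.
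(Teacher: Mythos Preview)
Your proposal has a genuine structural gap. You want a lattice $\Gamma$ in a $p$-adic group such as $\mathrm{SL}_n(\mathbb{Q}_p)$ that is ``arranged so that $\Gamma$ is contained in the first congruence kernel of the integral structure.'' But the first congruence kernel is compact, and a discrete subgroup of a compact group is finite; an infinite lattice can never sit inside a compact open (pro-$p$) subgroup. So the very mechanism you propose for producing the $p$-group tower is incompatible with $\Gamma$ being a lattice in the same group. You correctly sense the tension between property~$(T)$ and residual-$p$ behaviour, but your proposed resolution does not actually resolve it.

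The paper's route is genuinely different and worth contrasting. It takes an arithmetic lattice $\Gamma_{p;\alpha,\beta}$ in a quaternion algebra over a function field in characteristic~$p$, acting on a product of two $(p{+}1)$-regular trees. The expander property for the congruence quotients does \emph{not} come from property~$(T)$ (these product-of-tree lattices need not have it) but from Ramanujan-type spectral bounds, ultimately relying on Drinfeld's work on the geometric Langlands programme for $\GL_2$. The $p$-group tower is obtained from a \emph{separate} faithful representation $\Gamma_{p;\alpha,\beta}\hookrightarrow \GL_3(\mathbb{F}_p[[t]])$ (conjugation on imaginary quaternions, localised at a chosen place), in which the image is not discrete; truncating the power series gives the finite $p$-groups $\Gamma_i$. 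Thus the lattice structure and the pro-$p$ filtration live in two different ambient groups, which is exactly what your single-ambient-group picture forbids. Finally, the bound $d\ge (p+3)/2$ has nothing to do with rank: the natural generating set has $p{+}1$ generators from each tree factor, and a concrete combinatorial argument about the square-complex relations (transitivity of a certain permutation action) shows one can drop to $(p+3)/2$ generators; the cases $p=3,5$ are pushed down to $d=2$ by explicit inspection. The change-of-generators step preserves expansion by a standard Rayleigh-quotient comparison.
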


Our methods do not only apply to exact counting, but also to modular pattern counting problems: Here the goal is to compute the number of occurrences of the pattern \emph{modulo a fixed prime} $p$. In classical complexity theory, the study of modular counting problems has a rich history, such as the algorithm for computing the permanent modulo $2^\ell$~\cite{Valiant79}, the so-called accidental algorithms~\cite{Valiant08}, Toda's Theorem~\cite{Toda91}, classifications for modular \#CSPs and Holants~\cite{GuoHLX11,GuoLV13} and the line of research on the modular homomorphism counting problem~\cite{FabenJ15,GobelGR14,GobelGR16,GobelLS18,KazeminiaB19,FockeGRZ21,GobelLCF21}, only to name a few. 

While results are scarcer, the \emph{parameterized} complexity of modular (pattern) counting problems has also been studied in recent years~\cite{BjorklundDH15,DorflerRSW19,CurticapeanDH21}, and we contribute to this line of research as follows: First, we provide a novel intractability result for modular counting of forests and matroid bases. We write $\#_p\textsc{Forests}$ for the problem of, given a graph $G$ and a positive integer~$k$, computing the number of forests with $k$ edges in $G$, modulo $p$. Similarly, we write $\#_p\textsc{Bases}$ for the problem of, given a linear matroid $M$ of rank $k$ in matrix representation, computing the number of bases of $M$, modulo $p$; the parameter of both problems is given by $k$.

\begin{restatable}{theorem}{forestmatroids}\label{thm:forest_matroids}
For each prime $p\geq 3$, the problems $\#_p\textsc{Forests}$ and $\#_p\textsc{Bases}$ are $\mathsf{Mod}_p\W{1}$-hard and, assuming the randomised Exponential Time Hypothesis, cannot be solved in time $f(k)\cdot |G|^{o(k/\log k)}$ (resp.\ $f(k)\cdot |M|^{o(k/\log k)}$), for any function $f$.
\end{restatable}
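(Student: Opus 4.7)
The plan is to derive both intractability results from a common modular adaptation of the machinery developed earlier in the paper. I would first establish hardness of $\#_p\textsc{Forests}$ and then reduce it to $\#_p\textsc{Bases}$ via matroid truncation.

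For $\#_p\textsc{Forests}$ I would follow the same route as Theorem~\ref{thm:minclose-hard}, but executed over $\mathbb{F}_p$ instead of $\mathbb{Z}$. The property ``$H$ is a forest'' is minor-closed, and a disjoint union of $k$ edges witnesses that its matching number is unbounded, so it falls in the hard case of that theorem. Via the Curticapean--Dell--Marx expansion one has
\[
    \#\{k\text{-edge forests in }G\} \;=\; \sum_{H} a_H^{(k)} \cdot \#\hom(H\to G),
\]
where the sum is finite. Reducing modulo $p$ and combining with the modular classification of counting homomorphisms established earlier in the paper, it suffices to exhibit an infinite family of pattern graphs $H_i$ of unbounded treewidth and associated sizes $k_i$ with $a_{H_i}^{(k_i)}\not\equiv 0\pmod{p}$; the $\mathsf{Mod}_p\W{1}$-hardness and the $f(k)\cdot|G|^{o(k/\log k)}$ lower bound under rETH then follow from the standard parameter-preserving reduction from counting homomorphisms.

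The main technical obstacle is producing such $H_i$. An arbitrary family of bounded-degree expanders, which would be sufficient in the exact-counting setting, may yield coefficients lying in $p\mathbb{Z}$ after cancellations, and this is precisely the purpose of the $p$-group Cayley expanders constructed in Theorem~\ref{thm:main_expanders_intro}. I would set $H_i=\mathcal{C}(\Gamma_i,S_i)$ and exploit the vertex-transitive action of the $p$-group $\Gamma_i$ to organise the summation defining $a_{H_i}^{(k_i)}$ into $\Gamma_i$-orbits; a $p$-adic valuation analysis then shows that most contributions vanish in multiples of $p$, leaving a controllable residue in $\mathbb{F}_p$ that can be checked to be non-zero for a dense enough set of indices~$i$ (this is where $p\ge 3$ and the valency bound of Theorem~\ref{thm:main_expanders_intro} are used).

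For the reduction to $\#_p\textsc{Bases}$, a $k$-edge forest in $G$ is exactly a size-$k$ independent set of the graphic matroid $M_G$, and size-$k$ independent sets of any matroid $M$ coincide with the bases of its $k$-truncation $M^{(k)}$. The graphic matroid is linear over every field via the signed incidence matrix. For a linear matroid represented by $A\in\mathbb{F}_q^{m\times n}$, the $k$-truncation $M^{(k)}$ is represented by $RA$ for a sufficiently generic $R\in\mathbb{F}_q^{k\times m}$; sampling $R$ uniformly at random over a polynomially large prime field $\mathbb{F}_q$ with $q$ coprime to $p$ succeeds with high probability by Schwartz--Zippel. This yields a randomized polynomial-time, parameter-preserving reduction $\#_p\textsc{Forests}\le\#_p\textsc{Bases}$ that transports the lower bound, and the use of randomization here is exactly what forces the statement to be conditioned on the \emph{randomised} ETH rather than ETH itself.
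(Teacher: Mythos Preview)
Your high-level architecture matches the paper: prove hardness for $\#_p\textsc{Forests}$ via the $p$-group Cayley expanders and then pass to $\#_p\textsc{Bases}$ by truncation. There are, however, two concrete problems.

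\textbf{The indicator computation is the whole proof, and you have not done it.} Saying that a ``$p$-adic valuation analysis'' leaves ``a controllable residue in $\mathbb{F}_p$ that can be checked to be non-zero'' is exactly the step that requires work, and the valency bound from Theorem~\ref{thm:main_expanders_intro} alone is not what makes it go through. The paper uses, specifically, valency $2(p-2)$ for $p\ge 5$ (Theorem~\ref{thm:savegeneratorsforpquotients}), so that the auxiliary graph $\basegraph{\sigma}$ has $p-2$ edges; the only surviving partitions $\sigma$ modulo $p$ are those for which $\basegraph{\sigma}$ is a forest with $p-1$ or $p$ vertices, and the sum over those is evaluated in closed form using Cayley's tree-counting formula and Wilson's theorem to obtain $(p+1)/2\not\equiv 0\pmod p$ (with a separate direct check for $p=5$). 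The case $p=3$ is handled by an entirely different family of $4$-regular expanders (Corollary~\ref{cor:forest_exact_and_3}). None of this is visible in your sketch, and a generic family from Theorem~\ref{thm:main_expanders_intro} with the ``wrong'' valency need not give a non-zero residue.

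\textbf{Your explanation of why rETH appears is incorrect.} The reliance on rETH is already present in the hardness of $\#_p\textsc{Forests}$ itself, because the base of the reduction chain is Lemma~\ref{lem:hardness_basis_mod}, which uses a randomised isolation (Schwartz--Zippel over $\mathbb{F}_p$) to reduce the decision problem $\cphomsprob(\mathcal{H})$ to $\#_p\cphomsprob(\mathcal{H})$. The truncation step has nothing to do with it: the paper uses the \emph{deterministic} polynomial-time $k$-truncation of linear matroids due to Lokshtanov, Misra, Panolan and Saurabh, giving a parsimonious deterministic reduction $\#_p\textsc{Forests}\le \#_p\textsc{Bases}$. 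Your randomised truncation would work too, but it is unnecessary, and attributing the rETH hypothesis to it misidentifies where the randomness enters.
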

Here, $\mathsf{Mod}_p\W{1}$ is the parameterized modular counting version of $\NP$. Roughly speaking, a problem is $\mathsf{Mod}_p\W{1}$ if it is at least as hard as counting $k$-cliques modulo $p$; we give a formal definition in Section~\ref{sec:prelims}.
Additionally, we provide an algorithmic result for counting $k$-paths and $k$-cycles modulo $2$:

\begin{restatable}{theorem}{maincyclesmod}\label{thm:main_cycles_mod_2}
The problems of counting $k$-paths and $k$-cycles in a graph $G$ modulo $2$ can be solved in time $k^{O(k)}\cdot |V(G)|^{k/6 + O(1)}$.
\end{restatable}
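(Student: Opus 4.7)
The plan is to combine the Curticapean--Dell--Marx reduction to homomorphism counts with an involution-based parity argument and standard tree-decomposition dynamic programming. The starting point is the Curticapean--Dell--Marx identity
\[
|\mathsf{Aut}(H)| \cdot \#\sub(H \to G) \;=\; \sum_{\sigma \in P(V(H))} \mu(\hat 0, \sigma) \cdot \#\hom(H/\sigma \to G),
\]
valid for every pattern $H$ and host $G$, where $P(V(H))$ is the partition lattice of $V(H)$ and $\mu$ its M\"obius function. For $H\in\{P_k,C_k\}$ the factor $|\mathsf{Aut}(H)|$ equals $2$ or $2k$, so computing $\#\sub(H\to G)\bmod 2$ reduces to evaluating the right-hand side modulo $4$ respectively $4k$.

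The key $\mathbb{F}_2$ cancellation comes from the fact that $\mathsf{Aut}(H)$ acts on $P(V(H))$ while preserving both $\mu(\hat 0,\sigma)$ and the isomorphism type of the quotient $H/\sigma$. Grouping the sum by $\mathsf{Aut}(H)$-orbits shows that each orbit contributes an integer multiple of its own size; after dividing out $|\mathsf{Aut}(H)|$, only partitions $\sigma$ fixed by a nontrivial involution of $\mathsf{Aut}(H)$ can contribute a nonzero term modulo~$2$. For $H=P_k$ the only such involution is the flip, and for $H=C_k$ these are the reflections of the dihedral group together with the antipodal rotation when $k$ is even.

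The technical heart, and the main obstacle, is the structural claim that every involution-invariant quotient $F=H/\sigma$ has treewidth at most $k/6+O(1)$. An $\iota$-invariant $\sigma$ is determined by its restriction to a fundamental domain of $\iota$---an arc of roughly $k/2$ edges---together with the pairing of this domain with its mirror image. The resulting quotient can be described as a graph obtained by gluing a half-length arc to itself along a length-preserving involution, and a careful examination shows that it admits a covering by three overlapping arcs of $\approx k/3$ edges whose pairwise overlaps have size $\approx k/6$ and serve as the separators of a tree decomposition of width $k/6+O(1)$. Making this geometric-combinatorial decomposition rigorous requires a case analysis of vertex-centred reflections, edge-centred reflections, and, for even $k$, the antipodal rotation, together with a verification that the residual M\"obius values do not resurrect the discarded even-coefficient contributions.

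Once the structural lemma is established, the algorithm enumerates all $k^{O(k)}$ involution-invariant partitions of $V(H)$, computes each $\#\hom(F\to G)\bmod 2$ in time $|V(G)|^{\mathsf{tw}(F)+O(1)}=|V(G)|^{k/6+O(1)}$ by standard dynamic programming on a tree decomposition of $F$, and aggregates the contributions with the rescaling dictated by $|\mathsf{Aut}(H)|$. This yields the claimed running time $k^{O(k)}\cdot|V(G)|^{k/6+O(1)}$.
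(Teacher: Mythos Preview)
Your orbit argument does not do what you claim. For $H=P_k$ with $|\mathsf{Aut}(H)|=2$, a partition $\sigma$ that is \emph{not} fixed by the flip has orbit $\{\sigma,\text{flip}(\sigma)\}$ of size~$2$; its contribution to the right-hand side is $2\,\mu(\hat 0,\sigma)\,\#\hom(H/\sigma\to G)$, and after dividing by $|\mathsf{Aut}(H)|=2$ the contribution to $\#\sub(H\to G)$ is exactly $\mu(\hat 0,\sigma)\,\#\hom(H/\sigma\to G)$ --- an integer, but not one that is zero modulo~$2$ for any structural reason. (Try $H=P_4$ and $\sigma=\{\{0,3\},\{1\},\{2\},\{4\}\}$: the quotient is a triangle with a pendant, and nothing forces its homomorphism count to be even.) So the reduction to involution-fixed $\sigma$ is unjustified. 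The same issue is worse for $C_k$, where you would need the right-hand side modulo $4k$, not merely modulo $4$. Your treewidth claim is also only a sketch: once $\sigma$ is allowed to have large blocks, the quotient can have large degree, and the ``three overlapping arcs'' picture does not obviously survive.

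The paper sidesteps both problems with a different idea. Instead of dividing by $|\mathsf{Aut}(H)|$, it kills the automorphism group by passing to $2$-labelled graphs: it counts $s$--$t$ paths via $\#\embs{(P_k,\{0\},\{k\})}{(G,\{s\},\{t\})}$, whose automorphism group is trivial, so the M\"obius inversion
\[
\#\embs{J}{J'}=\sum_{\rho}\mu(\bot,\rho)\,\#\homs{J/\rho}{J'}
\]
is already an integer identity. The mod-$2$ cancellation then comes from the \emph{M\"obius function itself}: since $\mu(\bot,\rho)=(-1)^{k+1-|\rho|}\prod_{B\in\rho}(|B|-1)!$, the value is even as soon as some block has size $\geq 3$. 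Hence only partitions into blocks of size $\leq 2$ survive, every surviving quotient has maximum degree $\leq 4$, and the bound $\mathsf{tw}\leq k/6+O(1)$ follows from the Fomin--Gaspers--Saurabh--Stepanov bound on the treewidth of bounded-degree graphs. Cycles are handled by deleting one edge and counting $u$--$v$ paths; paths are reduced back to cycles by a simple gadget. The M\"obius-parity step is precisely the ingredient your proposal is missing.
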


We emphasize that the algorithm in the previous theorem is faster than the best known algorithms for (non-modular) counting of $k$-cycles/$k$-paths, which run in time $k^{O(k)}\cdot |V(G)|^{13k/75 + o(k)}$~\cite{CurticapeanDM17}. Furthermore, it follows from a result by Curticapean, Dell and Husfeldt~\cite{CurticapeanDH21} that counting $k$-paths modulo $2$ is $\mathsf{Mod}_2\W{1}$-hard, implying that we cannot hope for an algorithm for $k$-paths running in time $f(k)\cdot |V(G)|^{O(1)}$. \pagebreak

\noindent Finally, we study the parameterized complexity of counting homomorphisms modulo~$p$. In the classical setting, the related problem of modular counting homomorphisms with \emph{right-hand side} restrictions received much attention: For any fixed graph $H$, the problem $\#_p\textsc{HomsTo}(H)$\footnote{For $p=2$, the problem is usually denoted by $\oplus\textsc{HomsTo}(H)$.} asks, on input a graph $G$, to compute the number of homomorphisms from~$G$ to~$H$ modulo $p$. Despite significant effort~\cite{FabenJ15,GobelGR14,GobelGR16,GobelLS18,KazeminiaB19,FockeGRZ21,GobelLCF21}, the problem has not been fully classified for each graph $H$.

In this work, we consider the related \emph{left-hand side} version of the problem. Adapting the definitions of Grohe, Dalmau and Jonsson~\cite{Grohe07,DalmauJ04} for detecting and exact counting of homomorphisms, we define a problem $\#_p\homsprob(\mathcal{H})$ for each class of graphs $\mathcal{H}$ and for each prime $p$: This problem expects as input a graph
$H\in \mathcal{H}$ and an arbitrary graph $G$, and the goal is to compute the number of homomorphisms from $H$ to $G$, modulo $p$. The problem is parameterized by the size of $H$, that is, we assume $H$ to be significantly smaller than $G$.

It is known that the decision version $\homsprob(\mathcal{H})$ is fixed-parameter tractable (even polynomial-time solvable) if the treewidth of the cores of $\mathcal{H}$ is bounded by a constant, and $\W{1}$-hard otherwise~\cite{Grohe07}. Similarly, the (exact) counting version $\#\homsprob(\mathcal{H})$ is known to be fixed-parameter tractable (even polynomial-time solvable) if the treewidth of the graphs of $\mathcal{H}$ is bounded by a constant, and $\#\W{1}$-hard otherwise~\cite{DalmauJ04}.

In case of counting modulo $p$, we establish an exhaustive classification for $\#_p\homsprob(\mathcal{H})$ along what we call the $p$\emph{-reduced quotients} of the graphs in $\mathcal{H}$. Let $H$ be a graph and let $\alpha$ be an automorphism of $H$ of order $p$. Then we define the quotient graph $H/\alpha$ to have a vertex for each orbit of the action of $\alpha$ on $V(H)$, and two vertices corresponding to orbits $\mathcal{O}_1$ and $\mathcal{O}_2$ are made adjacent if and only if there are vertices $v_1\in \mathcal{O}_1$ and $v_2 \in \mathcal{O}_2$ such that $\{v_1,v_2\}\in E(H)$. This induces a finite (possibly trivial) sequence $H=H_1,\dots,H_\ell$ where for $i=1, \ldots, \ell-1$ we set $H_{i+1}=H_i/\alpha_i$ for some automorphism $\alpha_i$ of order $p$ of $H_i$ and where the last graph $H_\ell$ does not have an automorphism of order $p$. Then $H^\ast_p:=H_\ell$ is called the $p$-reduced quotient of $H$.\footnote{We remark that $H^\ast_p$ is related to the notion of involution-free reductions used in the analysis of the right-hand side version of the problem~\cite{FabenJ15,GobelLS18}. However, the difference is that the $p$-reduced quotient identifies non-fixed points of an order-$p$ automorphism by including a vertex for each orbit, while the involution-free reduction just deletes all non-fixed points.}
We will see that $H^\ast_p$ is well-defined by proving that each of the aforementioned sequences yields the same graph, up to isomorphism. 

Let us now state our classification for $\#_p\homsprob(\mathcal{H})$. In what follows, given a class $\mathcal{H}$, we write $\mathcal{H}^\ast_p$ for the $p$-reduced quotients without self-loops of graphs in $\mathcal{H}$. We first present the algorithmic part:

\begin{restatable}{theorem}{mainhomsalgo}\label{thm:main_homs_algo}
Let $p\geq 2$ be a prime and let $\mathcal{H}$ be a class of graphs.
The problem $\#_p\homsprob(\mathcal{H})$ can be solved in time
\[\exp(\mathsf{poly}(|V(H)|)) \cdot  |V(G)|^{\mathsf{tw}(H^\ast_p)+O(1)}\,.\]
In particular, $\#_p\homsprob(\mathcal{H})$ is fixed-parameter tractable if the treewidth of $\mathcal{H}^\ast_p$ is bounded.
\end{restatable}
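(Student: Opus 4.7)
The plan is to combine a Burnside-type fixed-point argument with the classical dynamic programming for counting homomorphisms on a tree decomposition. The Burnside step collapses the problem, modulo $p$, to the $p$-reduced quotient $H^\ast_p$; the dynamic program then counts homomorphisms from $H^\ast_p$ into $G$ within the claimed running time.

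First I would prove the key congruence: for every automorphism $\alpha$ of $H$ of order $p$,
\[
\#\hom(H \to G) \equiv \#\hom(H/\alpha \to G) \pmod p.
\]
The cyclic group $\langle \alpha\rangle$ acts on $\hom(H \to G)$ via $\varphi \mapsto \varphi \circ \alpha^{-1}$. Since its order is the prime $p$, every orbit has size $1$ or $p$, so only the fixed points contribute modulo $p$. A homomorphism $\varphi$ is fixed iff it is constant on the orbits of $\langle\alpha\rangle$ acting on $V(H)$, and such $\varphi$ correspond bijectively to homomorphisms from $H/\alpha$ to $G$ (mapping an orbit to its common image under $\varphi$). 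The self-loop case is consistent: if some orbit contains both endpoints of an edge of $H$, then a fixed $\varphi$ would have to send that edge to a self-loop in $G$, which is impossible; correspondingly $H/\alpha$ carries a self-loop and admits no homomorphism into the simple graph $G$, so both sides vanish.

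Iterating along any reduction sequence $H = H_1,\ldots,H_\ell = H^\ast_p$ gives
\[
\#\hom(H \to G) \equiv \#\hom(H^\ast_p \to G) \pmod p.
\]
If $H^\ast_p$ contains a self-loop the algorithm outputs $0$; otherwise $H^\ast_p \in \mathcal{H}^\ast_p$ is loopless, and I invoke the textbook dynamic program on a nice tree decomposition of $H^\ast_p$ of width $\mathsf{tw}(H^\ast_p)$ to compute $\#\hom(H^\ast_p \to G)$, and hence its residue modulo $p$, in $|V(G)|^{\mathsf{tw}(H^\ast_p)+O(1)}$ arithmetic operations. To produce $H^\ast_p$ from $H$, I perform the naive preprocessing: at each step, brute-force over all permutations of $V(H_i)$ to find an automorphism of order $p$ (by Cauchy's theorem such an element exists iff $p$ divides $|\mathrm{Aut}(H_i)|$), take the quotient, and iterate. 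The number of rounds is bounded by $|V(H)|$ and each round runs in time $\exp(\mathsf{poly}(|V(H)|))$, which is absorbed into the prefactor.

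The main obstacle, and strictly a prerequisite for the theorem statement to make sense, is the well-definedness of $H^\ast_p$, i.e.\ the assertion that every reduction sequence yields the same graph up to isomorphism. I would handle this in a separate lemma, most cleanly by fixing a Sylow $p$-subgroup $P$ of $\mathrm{Aut}(H)$ and showing, by induction on $|P|$, that every reduction sequence terminates at the quotient of $H$ by the action of $P$. The inductive step exploits the fact that any nontrivial $p$-group has a nontrivial centre, and therefore contains a normal order-$p$ subgroup, which allows one to commute an arbitrary first reduction step with a preferred one coming from $P$. Once this well-definedness is in place, the Burnside step together with the tree-decomposition dynamic program completes the proof, and the FPT corollary follows immediately by observing that a uniform bound on $\mathsf{tw}(\mathcal{H}^\ast_p)$ turns the above running time into $f(|V(H)|)\cdot |V(G)|^{O(1)}$.
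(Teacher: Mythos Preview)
Your algorithmic core is correct and matches the paper's proof: the orbit argument reducing $\#_p\homs{H}{G}$ to $\#_p\homs{H^\ast_p}{G}$ is exactly Lemma~\ref{lem:quotient_reduction}, and then running the standard treewidth dynamic program on $H^\ast_p$ gives the stated bound. Your brute-force search for an order-$p$ automorphism is cruder than the paper's route via the Arvind--Beigel--Lozano reductions to $\textsc{GI}$ (Lemma~\ref{lem:compute_quotient}), but both fit into $\exp(\mathsf{poly}(|V(H)|))$, so this difference is immaterial for the theorem as stated.

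The genuine gap is in your proposed proof that $H^\ast_p$ is well-defined. The claim that every reduction sequence terminates at $H/P$ for $P$ a Sylow $p$-subgroup of $\mathrm{Aut}(H)$ is false. Take $p=2$ and $H$ the path on three vertices: then $\mathrm{Aut}(H)\cong\mathbb{Z}/2\mathbb{Z}$, so $P=\mathrm{Aut}(H)$ and $H/P\cong K_2$. But $K_2$ still has an order-$2$ automorphism, so it is not $2$-reduced; one further step collapses it to a single vertex with a self-loop, which is the actual $H^\ast_2$. The underlying problem is that $\mathrm{Aut}(H/\alpha)$ is not controlled by any quotient of $\mathrm{Aut}(H)$---new automorphisms can appear after quotienting---so an induction on $|P|$ cannot track the full reduction, and your ``commute the first step with a central one from $P$'' plan does not survive past the first quotient. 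The paper sidesteps the group theory entirely: since every terminal point $H'$ of a reduction sequence satisfies $\#_p\homs{H'}{\star}=\#_p\homs{H}{\star}$ and has no order-$p$ automorphism, the modular \lovasz-type Lemma~\ref{lem:homs_lovasz} forces any two such $H'$ to be isomorphic.
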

Here $\mathsf{tw}$ denotes treewidth.
\begin{remark}
Using the quasi-polynomial time algorithm for $\textsc{GI}$ due to Babai~\cite{Babai16}, we will also show how the algorithm in the previous theorem can be improved to run in quasi-polynomial time. Additionally, proving that the construction of the $p$-reduced quotient is at least as hard as the graph automorphism problem, we observe that a polynomial-time algorithm is unlikely, unless the construction of the $p$-reduced quotients can be avoided.
\end{remark} \pagebreak

\noindent For the intractability part of our classification, we show that unbounded treewidth of $\mathcal{H}^\ast_p$ yields hardness:

\begin{restatable}{theorem}{mainhomshard}\label{thm:main_homs_hard}
Let $p\geq 2$ be a prime and let $\mathcal{H}$ be a computable class of graphs. If the treewidth of $\mathcal{H}^\ast_p$ is unbounded, then $\#_p\homsprob(\mathcal{H})$ is $\pW{1}$-hard and, assuming the randomised Exponential Time Hypothesis, cannot be solved in time
\[f(|H|) \cdot |G|^{o(\mathsf{tw}(H^\ast_p)/\log \mathsf{tw}(H^\ast_p))} \]
for any function $f$.
\end{restatable}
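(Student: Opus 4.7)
The plan is to prove the hardness in two stages: first, a reduction from $\#_p\homsprob(\mathcal{H}^\ast_p)$ to $\#_p\homsprob(\mathcal{H})$ via a mod-$p$ quotient identity, and then a modular analogue of the Dalmau-Jonsson theorem applied to the class $\mathcal{H}^\ast_p$ itself.

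\textbf{The mod-$p$ identity and fpt reduction.} I would first establish that, for every graph $H$ and every loopless host graph $G$,
\[
|\hom(H \to G)| \equiv |\hom(H^\ast_p \to G)| \pmod{p}.
\]
If $\alpha \in \auts{H}$ has order $p$, then $\langle \alpha \rangle$ acts on $\hom(H \to G)$ by precomposition $\varphi \mapsto \varphi \circ \alpha$; because $p$ is prime every orbit has size $1$ or $p$, so Burnside's lemma gives $|\hom(H \to G)| \equiv |\mathrm{Fix}(\alpha)| \pmod{p}$. Fixed homomorphisms are precisely the ones that factor through $H/\alpha$, so $|\mathrm{Fix}(\alpha)| = |\hom(H/\alpha \to G)|$ (where self-loops in $H/\alpha$ created by edge-reversing orbits force the count to vanish on the loopless $G$). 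Iterating along the reduction sequence defining $H^\ast_p$ then yields the claim, and comparing two reduction sequences via the same congruence proves well-definedness of $H^\ast_p$ up to isomorphism. This identity provides an fpt Turing reduction $\#_p\homsprob(\mathcal{H}^\ast_p) \fptred \#_p\homsprob(\mathcal{H})$: on input $H^\ast$, enumerate the computable class $\mathcal{H}$ to locate some $H$ with $H^\ast_p \cong H^\ast$ and query the oracle on $(H, G)$.

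\textbf{Modular Dalmau-Jonsson for $\mathcal{H}^\ast_p$.} Graphs in $\mathcal{H}^\ast_p$ have no automorphism of order $p$ by construction, and by hypothesis $\mathcal{H}^\ast_p$ has unbounded treewidth; it remains to establish $\pW{1}$-hardness of $\#_p\homsprob(\mathcal{H}^\ast_p)$ with the required rETH bound. For each $J \in \mathcal{H}^\ast_p$ of treewidth $t$, Möbius inversion on the partition lattice of $V(J)$ writes the count of colourful $J$-subgraphs in a $V(J)$-coloured host $G'$ as
\[
\sum_{\sigma} \mu(\hat 0, \sigma)\, |\hom(J/\sigma \to G')|,
\]
whose leading coefficient $\mu(\hat 0, \hat 0) = 1$ is nonzero modulo $p$. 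The Curticapean-Dell-Marx complexity monotone principle in $\mathbb{F}_p$, combined with an induction on $|V(J)|$ to discharge the strictly smaller quotient terms $J/\sigma$ (re-applying the mod-$p$ identity whenever some $J/\sigma$ acquires a fresh order-$p$ automorphism in order to fold it into its own $p$-reduced target), then shows that an oracle for $\#_p\homsprob(\mathcal{H}^\ast_p)$ computes the colourful $J$-subgraph count modulo $p$. Composing this with Marx's extraction of a $\Omega(t/\log t)$-clique minor inside $J$ via the excluded-grid theorem and the modular clique-counting hardness of Curticapean, Dell and Husfeldt~\cite{CurticapeanDH21} delivers both the $\pW{1}$-hardness and the rETH-conditional $|G|^{o(\mathsf{tw}(H^\ast_p)/\log \mathsf{tw}(H^\ast_p))}$ lower bound.

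\textbf{Main obstacle.} The central difficulty is making the complexity-monotone step go through modulo $p$: many Möbius coefficients $\mu(\hat 0, \sigma)$ can vanish modulo $p$, and the quotient graphs $J/\sigma$ need not lie in $\mathcal{H}^\ast_p$. Both issues are handled by the induction on $|V(J)|$: quotients $J/\sigma$ with $\sigma \neq \hat 0$ have strictly fewer vertices, so re-applying the mod-$p$ identity collapses them to smaller $p$-reduced targets that can be fed back into the oracle at a smaller parameter value. Only the leading coefficient $\mu(\hat 0, \hat 0) = 1$ needs to survive modulo $p$ for the reduction to detect the hard pattern $J$, and it does so automatically.
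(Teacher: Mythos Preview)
Your first stage—the mod-$p$ identity $\#\homs{H}{G}\equiv\#\homs{H^\ast_p}{G}\pmod p$ via the $\langle\alpha\rangle$-action, and the resulting reduction $\#_p\homsprob(\mathcal{H}^\ast_p)\fptred\#_p\homsprob(\mathcal{H})$—is correct and is exactly the paper's Lemma~\ref{lem:quotient_reduction}.

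The gap is in your second stage. Your M\"obius inversion over the partition lattice of $V(J)$ writes the colourful $J$-subgraph count as $\sum_\sigma \mu(\hat 0,\sigma)\,\#\homs{J/\sigma}{G'}$, and you then want to evaluate this sum using an oracle for $\#_p\homsprob(\mathcal{H}^\ast_p)$. The term $\sigma=\hat 0$ is fine since $J\in\mathcal{H}^\ast_p$, but for $\sigma\neq\hat 0$ the quotients $J/\sigma$—and their $p$-reduced quotients $(J/\sigma)^\ast_p$—have no reason whatsoever to lie in $\mathcal{H}^\ast_p$. Your ``induction on $|V(J)|$'' does not resolve this: the inductive hypothesis would have to supply $\#_p\homs{J/\sigma}{\star}$, but the oracle only answers queries with pattern in $\mathcal{H}^\ast_p$, and you have no mechanism to compute hom counts from arbitrary smaller graphs. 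Complexity monotonicity also goes in the wrong direction here: it says an oracle for the \emph{linear combination} recovers each nonvanishing term, not that an oracle for one term recovers the sum.

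The paper sidesteps this entirely by reducing from colour-\emph{prescribed} homomorphisms $\#_p\cphomsprob(\mathcal{H}^\ast_p)$ (hard by Lemma~\ref{lem:hardness_basis_mod}, which packages Marx's bound with a Schwartz--Zippel isolation step) and using inclusion--exclusion over \emph{colour classes of the host} rather than over partitions of the pattern:
\[
\#\cfhoms{H^\ast_p}{G}=\sum_{J\subseteq V(H^\ast_p)}(-1)^{|J|}\,\#\homs{H^\ast_p}{G-J}\,.
\]
The crucial feature is that this only deletes vertices from $G$ and never changes the pattern, so every oracle query is of the form $(H,G-J)$ with a single fixed $H\in\mathcal{H}$; the mod-$p$ identity then replaces $H^\ast_p$ by $H$ in each term. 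Finally $\#\cfhoms{H^\ast_p}{G}=\#\auts{H^\ast_p}\cdot\#\cphoms{H^\ast_p}{G}$, and since $H^\ast_p$ has no order-$p$ automorphism, $\#\auts{H^\ast_p}$ is invertible in $\mathbb{F}_p$. Your partition-lattice route can be repaired by switching to this colour-class inclusion--exclusion.
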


\paragraph*{Can You Beat Treewidth?}
We conclude the presentation of our results by commenting on the factor of $1/(\log \dots)$ in the exponents of all of our fine-grained lower bounds. This factor is related to the conjecture of whether it is possible to ``beat treewidth''~\cite{Marx10}. In particular, we point out that the factor can be dropped in \emph{all} of our lower bounds if this conjecture, formally stated as Conjecture~1.3 in~\cite{Marx13}, is true.

\subsection*{Technical Overview}\label{sec:intro_techniques}
Our central approach follows the so-called Complexity Monotonicity framework due to Curticapean, Dell and Marx~\cite{CurticapeanDM17}. We express the counting problems considered in this work as formal linear combination of homomorphism counts, which allows us to derive the complexity of the problem at hand by analyzing the coefficients. 

More precisely, let us fix a graph property $\Phi$ and a positive integer $k$. Given a graph $G$, we furthermore write $\#\edgesubs{\Phi,k}{G}$ for the number of $k$-edge subgraphs of $G$ that satisfy $\Phi$. It was shown in~\cite{RothSW20unpub} that there exists a function of finite support $a_{\Phi,k}$ from graphs to rationals such that for every graph $G$ we have
\begin{equation}\label{eq:intro_hom_lincomb}
    \#\edgesubs{\Phi,k}{G} = \sum_{H} a_{\Phi,k}(H) \cdot \#\homs{H}{G}\,,
\end{equation}
where $\#\homs{H}{G}$ is the number of graph homomorphisms from $H$ to $G$. Curticapean, Dell and Marx~\cite{CurticapeanDM17} have shown that computing a linear combination as in~\eqref{eq:intro_hom_lincomb} is \emph{precisely as hard as} computing its hardest term. Fortunately, the complexity of counting and detecting homomorphisms from $H$ to $G$ is thoroughly classified~\cite{DalmauJ04,Marx10}: Roughly speaking, the higher the treewidth of $H$, the harder it is to compute the number of homomorphisms from~$H$ to~$G$. Therefore, proving hardness of computing $\#\edgesubs{\Phi,k}{G}$ reduces to the purely combinatorial problem of determining which of the coefficients $a_{\Phi,k}(H)$ in~\eqref{eq:intro_hom_lincomb} for high-treewidth graphs $H$ are non-zero.

Unfortunately, it has turned out that the coefficients of such linear combinations for related pattern counting problems are often determined by (or even equal to) a variety of algebraic and topological invariants, whose analysis is known to be a difficult problem in its own right. For example, in case of the vertex-induced subgraph counting problem, the coefficient of the clique is the reduced Euler characteristic of a simplicial graph complex~\cite{RothS18}, the coefficient of the biclique is the so-called alternating enumerator~\cite{DorflerRSW19}, and, more generally, the coefficients of dense graphs are related to the $h$-and $f$-vectors associated with the property of the patterns that are to be counted~\cite{RothSW20}. In all of the previous works mentioned here, the complexity analysis of the respective pattern counting problems therefore amounted to understanding the cancellation behaviour of those invariants. To do so, the papers used tools from combinatorial commutative algebra and, to some extent, topological fixed-point theorems. \pagebreak

\noindent In case of $\#\edgesubsprob(\Phi)$, two of the authors, together with Wellnitz, observed that the coefficients of high-treewidth low-degree vertex-transitive graphs can be analysed much easier than generic graphs of high treewidth such as the clique or the biclique~\cite{RothSW20unpub}. First, it was shown that the coefficient of a graph $H$ with $k$ edges in~\eqref{eq:intro_hom_lincomb} is equal to the \emph{indicator} of $\Phi$ and $H$, defined as follows:\footnote{To be precise, the identity in~\eqref{eq:indicator_intro} was obtained for a coloured version of $\#\edgesubsprob(\Phi)$. However, we will mostly rely on this result in a blackbox manner; all details of the coloured version necessary for the treatment in this paper will be carefully introduced when needed.}
\begin{equation}\label{eq:indicator_intro}
    a(\Phi,H) := \sum_{\sigma \in \mathcal{L}(\Phi,H)} \prod_{v\in V(H)} (-1)^{|\sigma_v|-1} (|\sigma_v|-1)!\,.
\end{equation}
Here, $\mathcal{L}(\Phi,H)$ is the set of \emph{fractures} $\sigma$ of $H$ such that the associated \emph{fractured graph} $\fracture{H}{\sigma}$ satisfies $\Phi$. Here, a fracture of a graph $H$ is a tuple $\rho = (\rho_v)_{v\in V(H)}$, where~$\rho_v$ is a partition of the set of edges $E_H(v)$ of $H$ incident to $v$. 
Given a fracture $\rho$ of $H$, the {fractured graph} $\fracture{H}{\rho}$ is obtained from $H$ be splitting each vertex $v\in V(H)$ according to $\rho_v$; an illustration is provided in Figure~\ref{fig:simpler_fractures_intro}. 

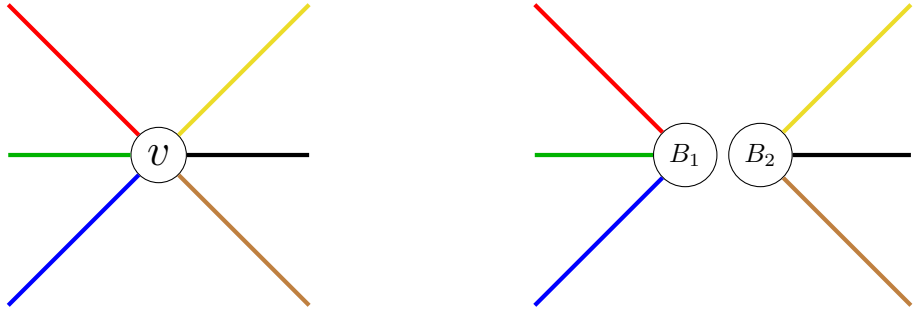
\begin{figure}[t!]
    \centering
    \begin{tikzpicture}[scale=1]
    \node[circle,draw] (m) at (0,0) {\LARGE $v$};
    \draw[ultra thick,red] (m) -- (-2,2);
    \draw[ultra thick,black!30!green] (m) -- (-2,0);
    \draw[ultra thick,blue] (m) -- (-2,-2);
    
    \draw[ultra thick,black!10!yellow] (m) -- (2,2);
    \draw[ultra thick] (m) -- (2,0);
    \draw[ultra thick,brown] (m) -- (2,-2);
    \begin{scope}[shift={(7,0)}]
    \node[circle,draw] (b1) at (0,0) {$B_1$};
    \draw[ultra thick,red] (b1) -- (-2,2);
    \draw[ultra thick,black!30!green] (b1) -- (-2,0);
    \draw[ultra thick,blue] (b1) -- (-2,-2);
    \begin{scope}[shift={(1,0)}]
    \node[circle,draw] (b2) at (0,0) {$B_2$};
    \draw[ultra thick,black!10!yellow] (b2) -- (2,2);
    \draw[ultra thick] (b2) -- (2,0);
    \draw[ultra thick,brown] (b2) -- (2,-2);
    \end{scope}
    \end{scope}
    \end{tikzpicture}
    \caption{\label{fig:simpler_fractures_intro} Illustration of the construction of a fractured graph. The left picture shows a vertex $v$ of a graph~$H$ with incident edges $E_H(v)=\{  \redc{2pt},\greenc{2pt},\bluec{2pt},\yellowc{2pt},\blackc{2pt},\brownc{2pt}\}$. The right picture shows the splitting of $v$ in the construction of the fractured graph~$\fracture{H}{\rho}$ for a fracture $\rho$ satisfying that the partition $\rho_v$ contains two blocks $B_1 =\{ \redc{2pt},\greenc{2pt},\bluec{2pt}\}$, and $B_2=\{\yellowc{2pt},\blackc{2pt},\brownc{2pt}\}$.}
\end{figure}

As a consequence, the $\#\W{1}$-hardness results of Theorems~\ref{thm:minclose-hard} and~\ref{thm:bipartite-hard} can be obtained if we find a family of graphs $H$ of unbounded treewidth, such that $a(\Phi,H)\neq 0$ for infinitely many graphs $H$ in this family. The almost tight conditional lower bound under the Exponential Time Hypothesis will, additionally, require sparsity of the graphs. In combination with the main observation in~\cite{RothSW20unpub}, stating that the indicator $a(\Phi,H)$ can be analysed much easier for vertex-transitive graphs, we propose that regular Cayley graph expanders are the right choice for the family of graphs to be considered. Indeed, those graphs are sparse, have high treewidth and are always vertex transitive. A particular family of Cayley graph expanders was already used in~\cite{RothSW20unpub}, but it turned out to be impossible to prove Theorems~\ref{thm:minclose-hard} and~\ref{thm:bipartite-hard} relying only on this family of Cayley graph expanders.

In this work, we therefore present novel constructions of families of low-degree Cayley graph expanders. Those will not only allow us to prove most of our main theorems by analysing their indicators, but might be of independent interest. For the sake of presentation, we decided to encapsulate the treatment of our constructions in separate sections, both in the extended abstract and the main part of the paper. We hope that this makes the paper accessible both for readers primarily interested in the novel construction of Cayley graph expanders, as well as for readers  mainly interested in the analysis of the pattern counting problems. In particular, this last group may safely skip the next subsection and rely only on Theorem~\ref{thm:main_expanders_intro}.

\paragraph*{Construction of Low-Degree Cayley Graph Expanders}
We prove \cref{thm:main_expanders_intro} via an explicit  construction of the groups $\Gamma_i$ and the symmetric generating sets $S_i$ in \cref{Sect:CayleyExpanders} motivated by number theoretic objects. 

Let us fix a prime $p \geq 3$. The starting point is an explicit arithmetic lattice (a discrete subgroup) in a group of generalized quaternions over a function field in characteristic $p$. The quaternion algebra is at the heart of the mathematical properties of extracting the finite $p$-groups and the expansion property of the resulting Cayley graphs, but it is not crucial for understanding the construction. Concretely, for any choice of elements $\alpha \neq \beta \in \mathbb{Z}/(p-1) \mathbb{Z}$ we construct an infinite group $\Gamma_{p;\alpha,\beta}$ defined in terms of $2(p+1)$ generators $a_k, b_j$ (where the indices $k, j$ run through sets $K,J \subseteq \mathbb{Z}/(p^2-1) \mathbb{Z}$ defined depending on $\alpha,\beta$) and relations of length $4$. The set of relations is described by explicit algebraic equations in the field $\mathbb{F}_{p^2}$. In \cite{RSV} these groups were realized by mapping the generators $a_k$, $b_j$ to explicit generalized quaternions, leading ultimately to an explicit injective group homomorphism 
\begin{equation} \label{eqn:PsiIntro}
    \Psi \colon \Gamma_{p;\alpha,\beta}\to \GL_3(\mathbb{F}_p[[t]])\,.
\end{equation}
In other words, every element of $\Gamma_{p;\alpha,\beta}$ is sent to an invertible $3 \times 3$-matrix whose entries are power series in some formal variable $t$, whose coefficients live in the finite field $\mathbb{F}_p$ with $p$ elements. This is made explicit for $p=3$ in \cref{sec:3_cayley_expanders}, but could also be made explicit for any $p \geq 5$. Since the applications do not depend on concrete matrices, we merely state its existence.

To construct the finite $p$-groups $\Gamma_i$, consider the group homomorphism
\[
\pi_i : \GL_3(\mathbb{F}_p[[t]]) \to  \GL_3(\mathbb{F}_p[t]/(t^{i+1}))
\]
taking a matrix with power series entries and truncating the power series after the term of order $t^i$. Then the group $\GL_3(\mathbb{F}_p[t]/(t^{i+1}))$ is finite, and we define $\Gamma_i$ to be the image of the group $\Gamma_{p;\alpha,\beta}$ under the composition $\pi_i \circ \Psi$. These groups $\Gamma_i$ are easily shown to be $p$-groups and they are what is called congruence quotients (by construction).
The generators $a_k, b_j$ from the construction of $\Gamma_{p;\alpha,\beta}$ map to symmetric generating sets $T_i$ of $\Gamma_i$, i.e., to the set of cosets $a_kN_i, b_jN_i$ when $\Gamma_i = \Gamma_{p;\alpha,\beta}/N_i$ is considered as a factor group.
Using results from \cite{RSV}, we know that the Cayley graphs $G_i = \mathcal{C}(\Gamma_i,T_i)$ associated to the congruence quotient groups $\Gamma_i$ with respect to the generating sets $T_i$ are expanders. 
This argument is worked out in \cite{RSV} by Rungtanapirom and two of the authors, and it is based on a similar approach in the classical papers by Lubotzky, Phillips and Sarnak  \cite{LPS} and by Morgenstern \cite{Morgenstern}.
We note here that the results of \cite{RSV} ultimately rely on deep number theoretic results, namely a translation of the spectrum of the adjacency operator into Satake parameters of an associated automorphic representation and most crucially on work of Drinfeld on the geometric Langlands programme for $\GL_2$.

At this point we have proven \cref{thm:main_expanders_intro}
for the particular valency $2d = 2(p+1)$.
In order to obtain the more general valencies stated in the theorem, we recall in \cref{sec:3_change_of_gens} that a uniformly controlled change of the generating sets $T_i$ of the groups $\Gamma_i$ (the generators must be mutually expressible in words of uniformly bounded length) preserves the expander property. This change of generating set is best performed by finding a smaller generating set for the underlying infinite group $\Gamma_{p;\alpha,\beta}$. This is done in \cref{prop:generateWithBandoneA} reducing to $d = (p+3)/2$ for all $p \geq 3$. The reduction is based on the explicit form of the relations and a combinatorial group theoretic result from \cite{SV} on the local permutation structure of the underlying geometric square complex. To improve even further for $p=3$ we consider in \cref{sec:3_cayley_expanders} a concrete presentation of $\Gamma_{3;0,1}$ which in \cref{thm:good_expanders} is shown to reduce to $2$ generators. For $p=5$, the explicit \cref{example:p5actionOnT6xT6generators3} achieves a reduction to $2$ generators for $\Gamma_{5;0,2}$. It follows again from the theory recalled in \cref{sec:3_change_of_gens} that by adding generators (as necessary) we obtain \cref{thm:main_expanders_intro} for all $d$'s in the range that the theorem promises.

\noindent While this is not needed for the purposes of our hardness results, all of the constructions above are explicit, certainly in the weak sense that for a fixed $p$, the sequence of graphs $G_i$ from \cref{thm:main_expanders_intro} is computable. We also would like to emphasize again, that the expanders constructed for the proof of \cref{thm:main_expanders_intro} consist of vertex transitive graphs, of prime power number of vertices, with a fairly low bound on the degree. All of this is made possible by working with very specific generalized quaternion groups in positive characteristic. 

\paragraph*{Analysis of the Indicators}

Having established the existence of the low-degree Cayley graph expanders, we turn back to the analysis of the indicator 
\begin{equation} \label{eqn:indicatorreminder}
    a(\Phi,H)=\sum_{\sigma \in \mathcal{L}(\Phi,H)} \prod_{v\in V(H)} (-1)^{|\sigma_v|-1} (|\sigma_v|-1)!\,.
\end{equation}
Recall that we claimed the analysis of $a(\Phi,H)$ to be easier for vertex-transitive graphs. Let us now elaborate on this claim. First of all, we restate the formal definition of Cayley graphs for readers who skipped the explicit construction of our expanders: the {\itshape Cayley graph}  of a group $\Gamma$ together with a symmetric generating set\footnote{This means a subset $S \subseteq \Gamma$ of the group that generates this group and satisfies $S^{-1} = S$.} $S \subseteq \Gamma$ is the graph $G = \mathcal{C}(\Gamma,S)$ with vertex set $V(G) = \Gamma$ and edge set 
\[
E(G) = \{(x,xs) \in V(G) \times V(G) ; x \in \Gamma, \ s \in S\}.
\]
Since $S$ is symmetric, with any edge $(x,xs)$ the Cayley graph also contains the edge with opposite orientation $(xs,x) = (xs,(xs)s^{-1})$. Hence we consider Cayley graphs as the underlying unoriented graph. 

Given a Cayley graph $G$ as above, the group $\Gamma$ acts on the graph by letting $g \in \Gamma$ send the vertex $v \in V(G) = \Gamma$ to $gv$. 
 This action extends to the set of fractures $\mathcal{L}(\Phi,H)$ and since the terms $\prod_{v\in V(H)} (-1)^{|\sigma_v|-1} (|\sigma_v|-1)!$ in the formula \eqref{eqn:indicatorreminder} are shown to be invariant under this action, the group $\Gamma$ naturally permutes these summands.
Since our Cayley graph expanders $G_i$ arise from $p$-groups $\Gamma_i$, it follows that when evaluating the indicator $a(\Phi,G_i)$ \emph{modulo $p$},
only those contributions from fractures fixed under $\Gamma_i$ survive.
Now recall that~$\sigma_v$ is a partition of the edges incident to~$v$. The fixed-point fractures~$\sigma$ will satisfy that all~$\sigma_v$ are equal if we identify the edges incident to $v$ with the elements of the generating set. Since, for fixed $p$, our Cayley graph expanders have constant degree, we can thus prove the indicator to be non-zero modulo $p$ by considering just a constant number of fractures. This approach was first used in~\cite{RothSW20unpub}, and we will show that it becomes significantly more powerful if applied to our novel Cayley graph expanders.

Now let us illustrate and sketch this approach for the property $\Phi$ of being bipartite, that is, for proving Theorem~\ref{thm:bipartite-hard}. 
By Theorem~\ref{thm:main_expanders_intro}, there is a family $\mathcal{G}$ of $5$-group Cayley graph expanders of degree $6$. For graphs $G\in \mathcal{G}$, we can show that the indicator $a(\Phi,G)$ does not vanish, given that $\Phi$ is the property of being bipartite. Theorem~\ref{thm:bipartite-hard} will then follow by the argument outlined above; the detailed and formal proof is presented in Section~\ref{sec:bipartite_hardness}.

In the first step, given a graph $G=\mathcal{C}(\Gamma_i,S_i) \in \mathcal{G}$ we need to establish which fixed-point fractures $\sigma$ are contained in $\mathcal{L}(\Phi,G)$, that is, for which $\sigma$ the fractured graph $\fracture{G}{\sigma}$ is bipartite. Since fixed-point fractures $\sigma=(\sigma_v)_{v\in V(G)}$ of $G$ satisfy that all $\sigma_v$ correspond to one particular partition of $S_i$, we will ease notation and identify $\sigma$ with this partition. Using that $S_i$ is a symmetric set of generators of cardinality $6$, that is, $S_i=\{g_1,g_2,g_3,g_1^{-1},g_2^{-1},g_3^{-1}\}$, we define a graph $\basegraph{\sigma}$ as follows: 

\noindent It has a vertex $w^B$ for each block $B$ of $\sigma$, and its set of (multi)edges is given by
\begin{equation} 
E(\basegraph{\sigma}) = \left\{ \{w^{B}, w^{B'} \} : \text{ one multiedge for each } g \in \{g_1,g_2,g_3\} \text{ s.t. } g \in B, g^{-1} \in B' \right\}\,.
\end{equation}
Note that we see $\basegraph{\sigma}$ as a graph with possible loops and possible multiedges. In particular, the graph $\basegraph{\sigma}$ has precisely $3$ edges.

The important property of $\basegraph{\sigma}$, which we will prove in Section~\ref{sec:bipartite_hardness}, is that $\basegraph{\sigma}$ is bipartite if and only if $\fracture{G}{\sigma}$ is bipartite. Consequently, for $\Phi$ being the property of being bipartite, the indicator $a(\Phi,G)$ is given by the following drastically simplified\footnote{A priori, the formula \eqref{eqn:indicatorreminder} leads to the version of formula \eqref{eqn:indicatorsimplifiedintro} with all summands taken to the power $|V(G)|$. However, since the number of vertices is a power of $p=5$, by Fermat's little theorem it can be omitted.} expression, if considered modulo $5$:
\begin{equation} \label{eqn:indicatorsimplifiedintro}
    a(\Phi,G)\equiv \sum_{\sigma: \basegraph{\sigma}\text{ is bipartite}} (-1)^{|\sigma|-1}\cdot (|\sigma|-1)! \mod 5\,,
\end{equation}
where the sum is over partitions $\sigma$ of $S_i$. We provide the evaluation of the above expression in Table~\ref{tab:bipartbasegraph_intro} and observe that the result is $-16 \neq 0 \mod 5$. Since this argument applies to all members of the family of $5$-group Cayley graph expanders, we conclude that the indicator is non-zero infinitely often, which ultimately proves Theorem~\ref{thm:bipartite-hard}.

\begin{table}[htb]
    \centering
    \begin{tabular}{cccc}
        $\basegraph{\sigma}$ & No. of $\sigma$ & $|\sigma|$ &  $(-1)^{|\sigma|-1}\cdot (|\sigma|-1)!$ \\ \hline 
         \begin{tikzpicture}[main node/.style={circle,draw,font=\Large,scale=0.5}]
         \draw[white] (-0.2,0.2)  rectangle (1.2, 1);
         \node[main node] (A) at (0,0) {};
         \node[main node] (B) at (1,0) {};
         \node[main node] (C) at (0,0.3) {};
         \node[main node] (D) at (1,0.3) {};
         \node[main node] (E) at (0,0.6) {};
         \node[main node] (F) at (1,0.6) {};         
         \draw (A) -- (B);
         \draw (C) -- (D);
         \draw (E) -- (F);
         \end{tikzpicture} & 1 & 6 & $-120\cdot 1$\\
         \begin{tikzpicture}[main node/.style={circle,draw,font=\Large,scale=0.5}]
         \draw[white] (-0.2,0.2)  rectangle (1.2, 1);
         \node[main node] (A) at (0,0) {};
         \node[main node] (B) at (1,0) {};
         \node[main node] (C) at (0,0.3) {};
         \node[main node] (D) at (1,0.3) {};
         \node[main node] (E) at (2,0) {};
         \draw (A) -- (B) -- (E);
         \draw (C) -- (D);
         \end{tikzpicture} & 12 & 5 & $24\cdot 12$\\
         \begin{tikzpicture}[main node/.style={circle,draw,font=\Large,scale=0.5}]
         \draw[white] (-0.2,0.2)  rectangle (1.2, 1);
         \node[main node] (A) at (0,0) {};
         \node[main node] (B) at (1,0) {};
         \node[main node] (C) at (3,0) {};
         \node[main node] (E) at (2,0) {};
         \draw (A) -- (B) -- (E) -- (C);
         \end{tikzpicture} & 24 & 4 & $- 6\cdot 24$\\    
         \begin{tikzpicture}[main node/.style={circle,draw,font=\Large,scale=0.5}]
         \draw[white] (-0.2,0.2)  rectangle (1.2, 1);
         \node[main node] (A) at (0,0) {};
         \node[main node] (B) at (1,0) {};
         \node[main node] (C) at (1,0.3) {};
         \node[main node] (D) at (1,-0.3) {};
         \draw (A) -- (B);
         \draw (A) -- (C);
         \draw (A) -- (D);
         \end{tikzpicture} & 8 & 4 & $-6\cdot 8$\\    
         \begin{tikzpicture}[main node/.style={circle,draw,font=\Large,scale=0.5}]
         \draw[white] (-0.2,0.2)  rectangle (1.2, 1);
         \node[main node] (A) at (0,0) {};
         \node[main node] (B) at (1,0) {};
         \node[main node] (C) at (2,0) {};
         \node[main node] (D) at (3,0) {};
         \draw (A) to[bend left] (B);
         \draw (A) to[bend right] (B);
         \draw (C) -- (D);
         \end{tikzpicture} & 6 & 4 & $-6\cdot 6$\\     
         \begin{tikzpicture}[main node/.style={circle,draw,font=\Large,scale=0.5}]
         \draw[white] (-0.2,0.2)  rectangle (1.2, 1);
         \node[main node] (A) at (0,0) {};
         \node[main node] (B) at (1,0) {};
         \node[main node] (C) at (2,0) {};
         \draw (A) to[bend left] (B);
         \draw (A) to[bend right] (B);
         \draw (B) -- (C);
         \end{tikzpicture} & 24 & 3 & $2\cdot 24$\\       
         \begin{tikzpicture}[main node/.style={circle,draw,font=\Large,scale=0.5}]
         \draw[white] (-0.5,0.2)  rectangle (1.2, 1);
         \node[main node] (A) at (0,0) {};
         \node[main node] (B) at (1,0) {};
         \draw (A) to[bend left] (B);
         \draw (A) to (B);
         \draw (A) to[bend right] (B);
         \end{tikzpicture} & 4 & 2 & $-1\cdot 4$\\\hline
         Total contribution &  &  & $-16 \equiv 4 \mod 5$\\
         ~
    \end{tabular}
    \caption{List of bipartite graphs $\basegraph{\sigma}$ for $3$ generators; here we give the isomorphism class of $\basegraph{\sigma}$, the number of partitions $\sigma$ with the corresponding isomorphism class, the number of blocks of sigma and the total contribution to $a(\Phi,G) \mod 5$ for the property $\Phi$ of being bipartite.}
    \label{tab:bipartbasegraph_intro}
\end{table}

The proof of Theorem~\ref{thm:minclose-hard}, presented in Section~\ref{sec:minorclosed}, will follow a comparable technique, but it will require multiple, more involved cases.

\paragraph*{Extension to Modular Counting}
Our understanding of the cancellation behaviour of the indicators $a(\Phi,H)$ modulo $p$ does not only allow us to analyse the complexity of exact counting of $k$-edge subgraphs satisfying~$\Phi$, but also extends to counting $k$-edge subgraphs modulo $p$. 

To this end, we provide the necessary set-up for parameterized modular counting. In particular, the basis for our intractability results on the modular counting versions is given by the \emph{decision} problem of detecting so-called colour-prescribed homomorphisms, which is known to be hard for graphs of high treewidth due to Marx~\cite{Marx10}. Using a version of the Schwartz-Zippel-Lemma due to Williams et al.\ \cite{WilliamsWWY15}, we are able to reduce an instance $I$ of this problem to an instance $I'$, such that, with high probability, $I'$ has precisely one solution if $I$ has at least one solution, and $I'$ has no solutions if $I$ has no solutions. In a second step, the obtained instance $I'$ can then easily be reduced to the modular counting version for each prime $p$. Since the first step of this reduction is randomised, we need to assume the randomised Exponential Time Hypothesis for our fine-grained lower bounds.

Afterwards, we prove a variant of the Complexity Monotonicity principle for modular counting in the case of colour-prescribed homomorphisms. As a consequence, our hardness results for modular subgraph counting problems, including Theorem~\ref{thm:forest_matroids}, can be proved following the same strategy as outlined in the previous subsection.

Moreover, instead of only presenting intractability results, we investigate whether the expression as a linear combination of homomorphism counts can also be used to achieve improved algorithms for modular subgraph counting problems. And indeed, considering the linear combination~\eqref{eq:intro_hom_lincomb} modulo $2$ for the property $\Phi$ of being a path or a cycle, allows us to prove that each graph $H$ with degree at least $5$ vanishes in the linear combination, that is, $a_{\Phi,k}(H)=0$. More precisely, we will prove this for a version of the problem in which two vertices of the $k$-paths or the $k$-cycles are already fixed. This must be done to avoid automorphisms of even order, which turns out to be necessary for~\eqref{eq:intro_hom_lincomb} to be well-defined modulo $2$, since some of the coefficients $a_{\Phi,k}(H)=0$ are of the form $\#\auts{H}^{-1}$. 

\noindent The algorithms for counting $k$-paths and $k$-cycles modulo $2$ turn then out to be very simple: Essentially, we will see that it suffices to guess the two fixed vertices, and thereafter the algorithm evaluates Equation~\eqref{eq:intro_hom_lincomb} modulo $2$, by computing each non-vanishing term using a standard treewidth-based dynamic programming algorithm for counting homomorphisms. Since each graph $H$ whose coefficient survives modulo $2$ has degree at most $4$, we can rely on known results on the treewidth of bounded degree graphs~\cite{FominGSS09}. Ultimately, this allows us to prove Theorem~\ref{thm:main_cycles_mod_2}.

Finally, our classification for counting homomorphisms modulo $p$ builds upon the well-established algorithms and reduction sequences used both in the classification for the decision problem~\cite{Grohe07}, as well as in the classification for the exact counting problem~\cite{DalmauJ04}. However, the difficulty in proving our classification for counting modulo $p$ is due to graphs $H$ which have high treewidth but admit automorphisms of order $p$. For those graphs, we can neither rely on an algorithm for exact counting, nor does the known hardness proof transfer.

We solve this problem by considering the $p$-reduced quotients. Let us denote the function that maps a graph $G$ to the number of homomorphisms from $H$ to $G$, modulo $p$, by $\#_p\homs{H}{\star}$. We show that for each graph $H$ we have
$\#_p\homs{H}{\star} = \#_p\homs{H^\ast_p}{\star}$.
As a consequence, it suffices to consider the $p$-reduced quotients for our classification. Since, by definition, those graphs to not admit an automorphism of order $p$, we are able to show that the known methods for proving classifications for homomorphism problems apply.

\noindent Let us conclude by pointing out that, while proving that the $p$-reduced quotient is uniquely defined up to isomorphism, we also establish a modular variant of \lovasz' criterion for graph isomorphism via homomorphism counts~(see Chapter~5 in~\cite{Lovasz12}):
\begin{restatable}{lemma}{homslovasz}\label{lem:homs_lovasz}
Let $H$ and $H'$ be graphs, neither of which has an automorphism of order $p$. Suppose that for all graphs $G$ we have that
\[\#_p\homs{H}{G} = \#_p\homs{H'}{G} \,.\]
Then $H$ and $H'$ are isomorphic.
\end{restatable}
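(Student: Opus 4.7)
The plan is to adapt Lov\'asz's classical isomorphism-via-homomorphism-counts argument to work inside $\mathbb{F}_p$, using the order-$p$-automorphism hypothesis on $H$ and $H'$ precisely at the two points where the classical argument would otherwise break modulo $p$. Write $\embs{F}{G}$ for the set of injective homomorphisms $F\to G$.

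I would start from the standard partition identity
\[
\#\homs{H}{G} \;=\; \sum_{\sigma} \#\embs{H/\sigma}{G},
\]
obtained by classifying each homomorphism $\phi\colon H\to G$ by its fibre partition $\sigma$ of $V(H)$, so that $\phi$ factors uniquely as $H\twoheadrightarrow H/\sigma \hookrightarrow G$; here the sum ranges over partitions $\sigma$ of $V(H)$ such that no block contains two $H$-adjacent vertices, so that $H/\sigma$ is a simple graph. Grouping by the isomorphism class of $H/\sigma$ yields
\[
\#\homs{H}{G} \;=\; \sum_{F} q(H,F)\cdot \#\embs{F}{G}, \qquad q(H,F) := |\{\sigma : H/\sigma \cong F\}|,
\]
with $q(H,H)=1$ (trivial partition) and $q(H,F)=0$ whenever $|V(F)|>|V(H)|$.

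The key observation for the mod-$p$ setting is that $\#\embs{F}{G}\equiv 0\pmod p$ whenever $F$ admits an automorphism $\alpha$ of order $p$: precomposition by $\alpha$ acts on $\embs{F}{G}$, and a fixed point $\phi=\phi\circ\alpha$ forces $\alpha=\mathrm{id}_F$ by injectivity of $\phi$, so every orbit has size $p$. Reducing the expansion modulo $p$ therefore gives
\[
\#_p\homs{H}{G} \;=\; \sum_{F :\, p\,\nmid\, |\auts{F}|} \overline{q(H,F)}\cdot \#_p\embs{F}{G},
\]
where by Cauchy's theorem the index ranges precisely over isomorphism classes $F$ with no order-$p$ automorphism and the overline denotes reduction mod~$p$. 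The core step is then to show that the functions $\bigl(G\mapsto \#_p\embs{F}{G}\bigr)$ indexed by such $F$ are linearly independent over $\mathbb{F}_p$: enumerate such $F$ as $F_1,F_2,\dots$ in lexicographic order by $(|V(F_i)|,|E(F_i)|)$, and observe that the matrix $M_{ij} := \#_p\embs{F_i}{F_j}$ is upper triangular (since an injective homomorphism $F_i\hookrightarrow F_j$ forces $(|V(F_i)|,|E(F_i)|)\leq (|V(F_j)|,|E(F_j)|)$) with diagonal entries $M_{ii}=|\auts{F_i}|\not\equiv 0\pmod p$; hence $M$ is invertible.

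Finally, assume $\#_p\homs{H}{G}=\#_p\homs{H'}{G}$ for every $G$. By linear independence, $\overline{q(H,F)}=\overline{q(H',F)}$ for every $F$ with $p\nmid|\auts{F}|$. Taking $F = H$, which is admissible by assumption, yields $\overline{q(H',H)} = \overline{q(H,H)} = 1 \not\equiv 0 \pmod p$, so $q(H',H) > 0$ and there is a surjective homomorphism $H'\twoheadrightarrow H$, forcing $|V(H)|\leq|V(H')|$. Symmetrically $|V(H')|\leq |V(H)|$, so both surjective homomorphisms are vertex-bijections; a bijective homomorphism injects edges into edges, giving $|E(H)|\le |E(H')|$ and (by symmetry) equality, whence each of them is in fact a graph isomorphism. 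The delicate step is the linear independence argument, which requires the order-$p$-automorphism hypothesis to ensure that the diagonal entries $|\auts{F_i}|$ are units modulo $p$; without it, the transition matrix could become singular and the deduction of $\overline{q(H,F)}=\overline{q(H',F)}$ from equality of $\#_p\homs{\cdot}{G}$ would fail.
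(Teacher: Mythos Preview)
Your proof is correct, but it takes a genuinely different route from the paper's argument. The paper works with \emph{surjective} homomorphisms rather than embeddings: it observes that by inclusion--exclusion over induced subgraphs of $G$ one has
\[
\#_p\mathsf{Sur}(H\to G) \;=\; \sum_{S\subseteq V(G)} (-1)^{|V(G)|-|S|}\,\#_p\homs{H}{G[S]},
\]
so the hypothesis immediately yields $\#_p\mathsf{Sur}(H\to G)=\#_p\mathsf{Sur}(H'\to G)$ for all $G$. Plugging in $G=H$ and $G=H'$ gives $\#_p\mathsf{Sur}(H'\to H)=\#_p\auts{H}\not\equiv 0$ and symmetrically, whence mutual surjections and isomorphism. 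No linear-independence step is needed.

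Your approach instead decomposes $\#\homs{H}{\star}$ into embeddings via quotient graphs, kills the terms with $p\mid |\auts{F}|$ by the free-action observation, and then proves linear independence of the surviving $\#_p\embs{F}{\star}$ via a triangular matrix argument. This is dual to the paper's approach (embeddings vs.\ surjections, partitions of $V(H)$ vs.\ subsets of $V(G)$) and buys you a slightly stronger intermediate statement---that the coefficients $\overline{q(H,F)}$ are uniquely determined---at the cost of more machinery. The paper's argument is shorter and reaches isomorphism in one evaluation step per graph, but both are legitimate modular adaptations of the Lov\'asz argument and invoke the no-order-$p$-automorphism hypothesis at the same essential point (ensuring $|\auts{H}|$ and $|\auts{H'}|$ are units in $\mathbb{F}_p$).
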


\subsection*{Conclusion and Open Questions}
All of our hardness results for modular subgraph counting problems only apply to primes $p\geq 3$ and have, using the randomised Exponential Time Hypothesis as a slightly stronger assumption, the same complexity as their counterparts from exact counting. However, for $p=2$ the complexity landscape seems different: We obtained an improvement for counting $k$-cycles and $k$-paths modulo $2$. Moreover, there are known instances of the counting version of the parameterized subgraph isomorphism problem, such as counting $k$-matchings, where exact counting, as well as counting modulo $p$ for each prime $p\geq 3$ is fixed-parameter \textbf{in}tractable, while the computation becomes fixed-parameter tractable if done modulo $2$~\cite{Curticapean13,CurticapeanDH21}.

Since, additionally, many of our hardness proofs do not apply to the case of counting modulo $2$, we propose a thorough investigation of the complexity of the parameterized subgraph counting problem modulo $2$ as the next step in this line of research. As a starting point, we suggest the problem of counting bipartite $k$-edge subgraphs modulo $2$: While our proofs extend to counting such subgraphs modulo $p$ for some primes $p\geq 3$, a computer-aided search revealed that, for $p=2$, our approach \emph{cannot} work for any family of Cayley graph expanders of degree at most $12$; details are provided in Remark~\ref{rem:sage_results}. Indeed, we conjecture that our methods for proving intractability can be used to show that the problem is intractable for each prime $p>2$, but not for $p=2$, which leads to the question of whether this problem might be fixed-parameter tractable.

There are also interesting open questions concerning $p$-group Cayley graph expanders with low degree. To describe them, fix some prime $p$ and consider the set $D(p) \subseteq \mathbb{Z}_{\geq 0}$ of integers $d$ such that there exists a sequence of finite $p$-groups $\Gamma_i$ of orders that tend to infinity, with symmetric generating sets $S_i$ of cardinality $2d$ such that the Cayley graphs $\mathcal{C}(\Gamma_i, S_i)$ form a family of expanders. 
With any $d \in D(p)$ actually any $d' \ge d$ also lies in $D(p)$, because we can find a uniform bound on the length of a word in $S_i$ to produce a new additional generator for $\Gamma_i$, showing $d+1 \in D(p)$. So the ultimate question is the following:
\begin{question}
What is the behaviour of the function $p \mapsto d(p) = \min D(p)$?
\end{question}
Since $2$-regular graphs are never expanders, we know that $d(p) \geq 2$ for all primes $p$. Moreover, combining the construction of \cite{PeyerimhoffV11} (for $p=2$) with \cref{thm:main_expanders_intro}, \cref{prop:generateWithBandoneA}, and some further examples that we computed, we obtain the following values and bounds for the function $d$:
\begin{center}
\begin{tabular}{c||c|c|c}
    p & $p \in \{2, 3, 5, 7, 11, 13\}$ & $17 \le p \le 83$ & $89 \le p$ \\ \hline
    d(p) & 2 & $d(p) \in \{2,3\}$ & $2 \leq d(p) \leq (p+3)/2$
\end{tabular}
\end{center}

Based on this experimental evidence we make the following conjecture:
\begin{conjecture}
For every $p \geq 3$ there is a group among the $\Gamma_{p,\alpha,\beta}$ that is $3$-generated. In particular, there are $p$-group Cayley graph expanders of fixed valency $2d$ for all $p \geq 3$ and all $d \geq 3$.
\end{conjecture}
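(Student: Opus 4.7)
The plan is to attempt the conjecture by extending the explicit low-prime reductions of \cref{thm:good_expanders} (for $p=3$) and \cref{example:p5actionOnT6xT6generators3} (for $p=5$) to a uniform scheme covering all primes $p \geq 3$. The second (``In particular'') clause of the conjecture is essentially automatic from the first: once some $\Gamma_{p,\alpha,\beta}$ has been shown to be $3$-generated, the change-of-generators principle recalled in \cref{sec:3_change_of_gens} promotes this to a family of $p$-group Cayley graph expanders of valency $6$ at every congruence level $i$, and then to expanders of any higher even valency $2d$, $d \geq 3$, by adjoining arbitrary further generators. So the entire content of the conjecture reduces to the first sentence.

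First I would fix, for each $p$, a pair $(\alpha,\beta) \in (\mathbb{Z}/(p-1)\mathbb{Z})^2$ exhibiting maximal symmetry in the defining length-$4$ relations of $\Gamma_{p,\alpha,\beta}$, by analogy with the successful choices $\Gamma_{3;0,1}$ and $\Gamma_{5;0,2}$. The reduction to $d \leq (p+3)/2$ in \cref{prop:generateWithBandoneA} already expresses every generator as a short word in one $a$-generator together with all of the $b_j$'s, so the heart of the problem is to write the $p+1$ elements $\{b_j : j \in J\}$ as short words in only two of them, modulo conjugation by a single $a_k$. The concrete strategy is to identify an element $a_k$ whose conjugation action on the index set $J$, encoded by the length-$4$ relations and the local permutation structure of \cite{SV}, decomposes $J$ into at most two orbits; picking one $b_{j_0}$ from each orbit together with $a_k$ then yields the desired $3$-element generating set.

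The main obstacle will be uniformity in $p$. The index sets $K, J \subseteq \mathbb{Z}/(p^2-1)\mathbb{Z}$ and the relations themselves are cut out by algebraic equations in $\mathbb{F}_{p^2}$ depending delicately on $(\alpha,\beta)$, and the local permutation groups computed in \cite{SV} are known to vary substantially with $p$. A successful proof would most likely require either (i) a canonical family of choices $(\alpha,\beta)$ together with a number-theoretic identity in $\mathbb{F}_{p^2}$ forcing the conjugation action of some distinguished $a_k$ to have at most two orbits on $J$, or (ii) a case split according to arithmetic data of $p$ (for instance the factorisation of $p^2 - 1$ or the structure of small subgroups of $\mathbb{F}_{p^2}^\times$), with a separate but still explicit triple-generating set constructed in each case. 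Once such a triple has been exhibited, the lengths of the expressions of the original $2(p+1)$ generators in the new three elements are automatically bounded independently of the congruence level $i$, since the computation takes place in the infinite group $\Gamma_{p,\alpha,\beta}$, and so the final statement on fixed valency $6$ (and hence $2d$) Cayley graph expanders follows directly by combining \cref{thm:main_expanders_intro} with the change-of-generators argument of \cref{sec:3_change_of_gens}.
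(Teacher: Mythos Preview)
The statement you are attempting is a \emph{conjecture} in the paper, not a theorem: the paper offers no proof and explicitly presents it as an open problem supported only by experimental evidence for small primes. So there is no ``paper's own proof'' to compare against.

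Your proposal is not a proof but a strategy outline, and you are candid about this. You are correct that the second sentence of the conjecture follows from the first via the change-of-generators argument of \cref{sec:3_change_of_gens}, exactly as the paper itself uses this mechanism in \cref{thm:savegeneratorsforpquotients}. The substantive content is therefore entirely in the first sentence, as you say.

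The genuine gap is that your strategy for the first sentence does not go beyond restating the problem. Saying that one should ``identify an element $a_k$ whose conjugation action \ldots\ decomposes $J$ into at most two orbits'' is precisely the statement to be proved, rephrased in combinatorial language; you give no mechanism, identity in $\mathbb{F}_{p^2}$, or structural property of the relations that would force such an $a_k$ to exist for general $p$. Your two proposed routes (i) and (ii) are descriptions of what a proof might look like, not steps toward one. The paper's authors, who constructed these groups and computed the relevant local permutation structures, were unable to establish this uniformly in $p$ and left it as a conjecture; nothing in your outline supplies the missing idea. In short, the proposal correctly isolates where the difficulty lies but does not resolve it.
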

If the conjecture is satisfied, the function $d$ above would be uniformly bounded from above by $3$.

\newpage

\tableofcontents

\newpage

\section{Preliminaries}\label{sec:prelims}
Given a function $f:X\times Y \rightarrow Z$ and an element $x\in X$, we write $f(x,\star): Y \rightarrow Z$ for the function $y\mapsto f(x,y)$. Furthermore, given a finite set $S$, we write $|S|$ and $\#S$ for the cardinality of $S$, and given a prime $p\geq 2$, we set $\#_pS:=\# S ~\mathsf{mod}~ p$. Similarly, for a function $f: X \rightarrow \mathbb{N}$, we write $\#_pf : X \rightarrow \mathbb{F}_p$ for the function that maps $x\in X$ to $f(x) ~\mathsf{mod}~ p$.

\subsection{Graphs and Homomorphisms}
We consider \emph{undirected graphs} $G = (V(G),E(G))$ with vertex set $V(G)$ and set of edges $E(G)$. Our graphs are without self-loops and multiple edges, unless stated otherwise. 
For a positive integer $k$, we write $C_k$, $K_k$ and $M_k$ for the $k$-cycle, the $k$-clique, and the $k$-matching, respectively.

A \emph{(graph) homomorphism} from $F$ to $G$ is an edge-preserving mapping $\varphi:V(F)\rightarrow V(G)$, and we write $\homs{F}{G}$ for the set of all homomorphisms from $F$ to $G$. A \emph{(graph) isomorphism} from $F$ to $G$ is a bijection $\pi: V(F)\rightarrow V(G)$ such that for all pairs of vertices $u,v\in V(F)$ we have $\{u,v\}\in E(F)$ if and only if $\{\pi(u),\pi(v)\}\in E(G)$. We say that $F$ and $G$ are \emph{isomorphic}, denoted by $F\cong G$, if an isomorphism from $F$ to $G$ exists. An isomorphism from $F$ to itself is called an \emph{automorphism}, and we write $\auts{F}$ for the set of all automorphisms of $F$.

Given a graph $G$ and a subset of edges $A\subseteq E(G)$, we write $G[A]$ for the graph obtained from $(V(G),A)$ by deleting all isolated vertices. The graph $G[A]$ is naturally a subgraph of the graph~$G$.

Given a graph $H$ and a partition $\rho$ of $V(H)$, the \emph{quotient graph} $H/\rho$ has as vertices the blocks of $\rho$, and two blocks $B$ and $\hat{B}$ are made adjacent if and only if there are vertices $u\in B$ and $v\in \hat{B}$ such that $\{u,v\}\in E(H)$. We emphasize that quotient graphs may contain self-loops. There is a natural surjective map of graphs $H \to H/\rho$ that determines $\rho$.

We say that a graph $F$ is a \emph{minor} of a graph $H$, if $F$ can be obtained from $H$ by a sequence of vertex and edge deletions, and edge contractions; self-loops and multiple edges are deleted after the latter.

\paragraph*{Coloured Graphs}
Let $H$ be a graph.
An $H$\emph{-coloured graph} is a pair of a graph $G$ and a homomorphism $c: G \to H$,
called the $H$\emph{-colouring}. To avoid notational clutter, we might say that a graph $G$ is $H$-coloured if the $H$-colouring is implicit or clear from the context. 

A homomorphism $\varphi$ from a graph $H$ to an $H$-coloured graph $G$ with colouring $c$ is called \emph{colour-prescribed} if for each $v\in V(F)$ we have $c(\varphi(v))=v$. We write $\cphoms{H}{G}$ for the set of all colour-prescribed homomorphisms from $H$ to $G$.

\paragraph*{Expander Graphs}
Let $d>0$ be an integer and let $c>0$ be a rational. Let furthermore $\mathcal{G}$ be an infinite class of graphs $\{G_1,G_2,\dots\}$ with $|V(G_i)|=n_i$. The class $\mathcal{G}$ is called a \emph{family of} $(n_i,d,c)$\emph{-expanders} if for all $i\geq 0$, the graph $G_i$ is $d$-regular and furthermore satisfies
\[\forall X \subseteq V(G_i): |S(X)| \geq c\left(1-\frac{|X|}{|V(G_i)|} \right)|X| \,. \]
Here $S(X)$ denotes the set of all vertices in $V(G_i)\setminus X$ that are adjacent to a vertex in $X$.

In Section~\ref{sec:3_change_of_gens} we recall a spectral reformulation of the expander property in terms of the non-trivial eigenvalues of the graph Laplace operator on $\mathbb{R}$-valued functions on the set of vertices.

\subsection{Parameterized and Fine-grained Complexity Theory}
We will follow the notation of the textbook of Flum and Grohe~\cite{FlumG06}. A \emph{parameterized (counting) problem} is a pair of a (counting) problem $P$ and a polynomial-time computable\footnote{In some literature, the parameterization is not enforced to be polynomial-time; we refer the reader to the discussion in~\cite[Section~1.2]{FlumG06} on that matter and point out that all parameterizations in this work are polynomial-time computable.} \emph{parameterization} $\kappa: \{0,1\}^\ast \rightarrow \mathbb{N}$. 

An algorithm $\mathbb{A}$ is called a \emph{fixed-parameter tractable algorithm} (``\emph{fpt-algorithm}'') with respect to a parameterization $\kappa$ if there is a computable function $f$ such that, on input $x$, the running time of $\mathbb{A}$ is bounded by $f(\kappa(x))\cdot \mathsf{poly}(|x|)$. A parameterized (counting) problem with parameterization $\kappa$ is called \emph{fixed-parameter tractable} if it can be solved by an fpt-algorithm with respect to $\kappa$.

Given two parameterized (counting) problems $(P,\kappa)$ and $(\hat{P},\hat{\kappa})$, a \emph{parameterized Turing-reduction} from $(P,\kappa)$ to $(\hat{P},\hat{\kappa})$ is an $\mathbb{A}$ equipped with oracle access to $\hat{P}$ that satisfies the following constraints:
\begin{enumerate}
    \item $\mathbb{A}$ is an fpt-algorithm with respect to $\kappa$.
    \item There is a computable function $g$ such that, on input $x$, every oracle query $y$ satisfies $\hat{\kappa}(y)\leq g(\kappa(x))$.
\end{enumerate}
We write $(P,\kappa) \fptred (\hat{P},\hat{\kappa})$ if a parameterized Turing-reduction exists.

We write $\textsc{Clique}$ for the problem that is given as input a graph $G$ and a positive integer $k$ and is expected to decide whether $G$ contains a clique of size $k$; the parameterization is given by $k$, that is, $\kappa(G,k):=k$. Similarly, we write $\#\textsc{Clique}$ for the problem of counting cliques of size $k$. It is known that neither $\textsc{Clique}$ nor $\#\textsc{Clique}$ are fixed-parameter tractable (not even solvable in time $f(k)\cdot |V(G)|^{o(k)}$ for any function $f$) unless the Exponential Time Hypothesis fails~\cite{Chenetal05,Chenetal06}:

\begin{conjecture}[Exponential Time Hypothesis~\cite{ImpagliazzoP01}]
The \emph{Exponential Time Hypothesis (ETH)} asserts that $3$\textsc{-SAT} cannot be solved in time $\exp(o(n))$, where $n$ is the number of variables of the input formula.\lipicsEnd
\end{conjecture}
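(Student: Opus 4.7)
The final statement is not a theorem but a conjecture (the Exponential Time Hypothesis of Impagliazzo--Paturi), so strictly speaking there is no proof to sketch: ETH is one of the central open problems in complexity theory, and a proof would be a vastly stronger result than $\ccP \neq \NP$. Any honest ``proof proposal'' must therefore take the form of a plan of attack on an open problem, with the understanding that every known approach faces substantial, well-documented barriers.

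The plan would be to establish an unconditional exponential lower bound on the time complexity of $3$-$\textsc{SAT}$. Concretely, one would have to show that for some explicit $\varepsilon > 0$, no deterministic (or, in the stronger versions of ETH, no randomized) algorithm decides $3$-$\textsc{SAT}$ in time $2^{\varepsilon n}$ on $n$-variable instances. The natural route would be through circuit or branching-program lower bounds: reduce $3$-$\textsc{SAT}$ to an appropriate circuit-evaluation problem and then exhibit a family of Boolean functions computable by $3$-$\textsc{CNF}$ of linear size that requires circuits (or formulas, or non-uniform algorithms) of exponential size. An alternative indirect route would be to derive ETH from a weaker but still open combinatorial-geometric statement, for instance strong lower bounds on resolution refutations of random $k$-$\textsc{CNF}$s together with a generic simulation argument turning any $\exp(o(n))$ algorithm into a sub-exponential proof system.

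The obstacles are severe and essentially structural. First, any proof of ETH immediately implies $\ccP \neq \NP$, so it must overcome the relativization barrier of Baker--Gill--Solovay, the algebrization barrier of Aaronson--Wigderson, and, for circuit-based approaches, the natural-proofs barrier of Razborov--Rudich. In particular, no combinatorial lower bound technique currently known is powerful enough to separate even $\mathsf{NEXP}$ from general polynomial-size circuits, let alone to yield $2^{\Omega(n)}$ lower bounds against uniform algorithms. Second, sub-exponential algorithms for $k$-$\textsc{SAT}$ with $k$ growing (e.g.\ Schöning-style or PPSZ-style algorithms) are already delicate, and proving they cannot be improved would require fine control over the structure of satisfying assignments that has so far eluded all probabilistic and algebraic techniques.

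Given these obstacles, the realistic proposal is not to prove ETH but to \emph{use} it as an axiom, exactly as the present paper does: assume ETH, and derive tight conditional lower bounds for problems of interest (here, parameterized and modular subgraph counting). The conjecture is presented in the preliminaries precisely because it serves as the hardness assumption underpinning the fine-grained lower bounds of Theorems~\ref{thm:minclose-hard}, \ref{thm:bipartite-hard}, \ref{thm:forest_matroids}, and \ref{thm:main_homs_hard}; the paper neither proves nor claims to prove it, and the correct ``next step'' after the statement is not a proof but its deployment as a working hypothesis in subsequent reductions.
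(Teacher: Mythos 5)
You are correct: ETH is stated as a conjecture, not a theorem, and the paper offers (and claims) no proof — it is invoked purely as a hardness assumption for the conditional lower bounds. Your identification of this, together with the brief survey of why any proof is far out of reach, matches the paper's treatment exactly.
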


A parameterized (counting) problem is $\W{1}$-hard (resp.\ $\#\W{1}$-hard) if it can be reduced from $\textsc{Clique}$ (resp.\ $\#\textsc{Clique}$) via parameterized Turing-reductions. In particular, $(\#)\W{1}$-hard problems are not fixed-parameter tractable unless ETH fails.

The following four problems are the main objects of study in this work:
\begin{itemize}
    \item Let $\mathcal{H}$ be a class of graphs. The problem $\#\homsprob(\mathcal{H})$ asks, on input a graph $H\in \mathcal{H}$ and a graph $G$, to compute the number of homomorphisms from $H$ to $G$; the parameterization is given by $|H|$.
    \item The problem $\#\cphomsprob(\mathcal{H})$ is the colour-prescribed version of $\#\homsprob(\mathcal{H})$: On input a graph $H\in \mathcal{H}$ and an $H$-coloured graph $G$, the goal is to compute the number of colour-prescribed homomorphisms from $H$ to $G$.
    \item Let $\Phi$ be a graph property. The problem $\#\edgesubsprob(\Phi)$ asks, on input a graph $G$ and a positive integer $k$, to compute the number of $k$-edge subgraphs of $G$ that satisfy $\Phi$, that is, the number of edge subsets $A\in E(G)$ of size $k$ such that $\Phi(G[A])=1$. The parameter is $k$.
    \item The problem $\#\coledgesubsprob(\Phi)$ is the edge-colourful version of $\#\edgesubsprob(\Phi)$: Here, the edges of the graph $G$ are coloured with $k$ distinct colours and the goal is to count only the $k$-edge subsets satisfying $\Phi$ that contain each colour precisely once.
\end{itemize}
The corresponding decision versions are defined similarly and drop the ``$\#$'' in their notation.

\paragraph*{Parameterized Modular Counting}
For what follows, let $p\geq 2$ be a fixed prime. We denote the modular counting versions of the aforementioned problems by using the symbol $\#_p$; for example, $\#_p\textsc{Clique}$ denotes the problem of counting $k$-cliques modulo $p$.
The notion for intractability of parameterized counting modulo $p$ is given by $\pW{1}$\emph{-hardness}, which we define via parameterized Turing-reductions from $\#_p\textsc{Clique}$.\footnote{We note that in~\cite{BjorklundDH15,CurticapeanDH21}, hardness for $\pW{1}$ (also denoted by $\oplus\W{1}$ for $p=2$) is defined via parameterized parsimonious reductions. However, since some of our reductions rely on the Complexity Monotonicity principle and thus on solving systems of linear equations by posing multiple oracle queries~\cite{CurticapeanDM17,RothSW20unpub}, we are required to use parameterized Turing-reductions instead.}

For the purpose of this paper, all hardness results for modular counting problems will be obtained by reducing from the problem $\#_p\cphomsprob(\mathcal{H})$ of counting colour-prescribed homomorphisms from graphs in~$\mathcal{H}$, modulo $p$.

For the formal statement we rely on the randomised version of ETH (cf.\ \cite{Delletal14}):
\begin{conjecture}[Randomised Exponential Time Hypothesis]
The \emph{randomised Exponential Time Hypothesis (rETH)} asserts that $3$\textsc{-SAT} cannot be solved in time $\exp(o(n))$ by a randomised algorithm with error probability at most $1/3$, where $n$ is the number of variables of the input formula.\lipicsEnd
\end{conjecture}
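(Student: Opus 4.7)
The final statement is not a theorem but the \emph{Randomised Exponential Time Hypothesis} itself, a working hypothesis taken as a postulate throughout the paper in order to derive conditional fine-grained lower bounds. There is therefore no proof to sketch in any conventional sense: an unconditional proof would imply in particular that $\NP \not\subseteq \mathsf{BPTIME}(2^{o(n)})$, hence $\ccP \ne \NP$, and so is well out of reach of current techniques. What I can offer is a plan describing what a proof would need to establish and which candidate strategies presently fail.

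A direct plan would proceed by contradiction: fix a hypothetical randomised algorithm $\mathbb{A}$ deciding $3$-\textsc{SAT} with error at most $1/3$ in time $\exp(o(n))$, and derive a contradiction from an unconditional lower bound. The first natural step is to apply the Sparsification Lemma of Impagliazzo, Paturi, and Zane to reduce to $3$-\textsc{SAT} instances with $O(n)$ clauses, and then to amplify the success probability to $1 - 2^{-n}$ at the cost of only a polynomial factor in the running time, yielding an algorithm in $\mathsf{BPTIME}(2^{o(n)})$ for linear-size formulas. The hoped-for contradiction would then come from a sufficiently strong unconditional lower bound against subexponential randomised time; this is precisely where the argument stalls, because no such lower bound is known for any $\NP$-complete problem. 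Even the deterministic counterpart ETH itself is still open.

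An alternative indirect plan would be to reduce the randomised statement to the deterministic ETH via a derandomisation hypothesis: if one could exhibit a pseudorandom generator fooling subexponential-time algorithms with seed length $o(n)$, then $\mathbb{A}$ could be converted into a deterministic $\exp(o(n))$-time algorithm for $3$-\textsc{SAT}, contradicting ETH and therefore reducing rETH to ETH. The main obstacle here is that existing derandomisation theorems in the Nisan--Wigderson and Impagliazzo--Wigderson tradition require circuit lower bounds that are themselves open, and the parameters of known unconditional generators do not yet permit a sufficiently tight reduction. In summary, the final statement is a widely believed conjecture that is currently unprovable, and any serious attack on it reduces to one of the central open problems in complexity theory; accordingly, the paper (like most of the fine-grained complexity literature) treats rETH as an axiom rather than a target for proof.
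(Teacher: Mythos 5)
You correctly recognise that the statement is not a theorem but a conjecture, namely the Randomised Exponential Time Hypothesis, which the paper (like the rest of the fine-grained complexity literature) adopts as a working assumption and never proves. The paper offers no proof, and your discussion of why an unconditional proof is out of reach — and of the obstacles facing both the direct sparsification-plus-amplification route and the indirect derandomisation route to ETH — is an accurate account of the state of the art; there is nothing to compare, and nothing to fix.
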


Using a result of Marx~\cite{Marx10} on the (decision) problem $\cphomsprob(\mathcal{H})$, we can invoke a version of the Schwartz-Zippel-Lemma due to Williams et al.\ \cite{WilliamsWWY15} to obtain the following almost tight lower bound for $\#_p\cphomsprob(\mathcal{H})$. We point out that the construction is identical to the argument in~\cite{CurticapeanDH21} on counting vertex-colourful $k$-edge subgraphs, and we provide the proof only for the sake of completeness.  
\begin{lemma}\label{lem:hardness_basis_mod}
Let $\mathcal{H}$ be a recursively enumerable class of graphs. If the treewidth of $\mathcal{H}$ is unbounded then $\#_p\cphomsprob(\mathcal{H})$ is $\pW{1}$-hard and cannot be solved in time
\[f(|H|)\cdot |G|^{o(\mathsf{tw}(H)/\log \mathsf{tw}(H))}\,, \]
for any function $f$, unless rETH fails.
\end{lemma}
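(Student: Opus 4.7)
The plan is to combine Marx's conditional lower bound for the decision problem $\cphomsprob(\mathcal{H})$~\cite{Marx10} with a randomised fpt Turing reduction from this decision problem to its modular counting version, using the isolation-via-Schwartz-Zippel argument of Curticapean, Dell, and Husfeldt~\cite{CurticapeanDH21} instantiated with the $\mathbb{F}_p$-version of Schwartz-Zippel due to Williams et al.~\cite{WilliamsWWY15}. Recall that Marx~\cite{Marx10} shows that, whenever the treewidth of $\mathcal{H}$ is unbounded, $\cphomsprob(\mathcal{H})$ is $\W{1}$-hard and, under ETH, admits no algorithm of running time $f(|H|)\cdot|G|^{o(\mathsf{tw}(H)/\log\mathsf{tw}(H))}$; the proof goes through an fpt Turing-reduction from $\textsc{Clique}$ to $\cphomsprob(\mathcal{H})$ that preserves the pattern's treewidth up to constants.

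The core step is the randomised fpt reduction from $\cphomsprob(\mathcal{H})$ to $\#_p\cphomsprob(\mathcal{H})$. Given an input $(H,G,c)$, associate the formal polynomial
\[
P\bigl((x_v)_{v\in V(G)}\bigr) \;=\; \sum_{\varphi\in\cphoms{H}{G}} \prod_{u\in V(H)} x_{\varphi(u)},
\]
and observe that distinct colour-prescribed homomorphisms produce distinct monomials (each $\varphi$ picks one vertex per colour class), so $P\not\equiv 0$ iff $\cphoms{H}{G}\neq\emptyset$. Schwartz-Zippel over an extension $\mathbb{F}_{p^k}$ with $k=\Theta(\log|V(H)|)$ then yields that a uniformly random evaluation $P(r)$ is non-zero with probability at least $1/2$. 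To compute $P(r)\bmod p$ via the $\#_p\cphomsprob(\mathcal{H})$-oracle, one replaces each vertex $v$ of $G$ by a colour-respecting gadget whose number of colour-prescribed completions is congruent to $r_v$ modulo $p^k$; polynomially many oracle queries, together with the standard $\mathbb{F}_{p^k}\cong\mathbb{F}_p^k$ interpolation, then recover $P(r)\bmod p$. Crucially, the pattern $H$ is left untouched, so $|H|$ and $\mathsf{tw}(H)$ are preserved exactly, while $G$ blows up only polynomially.

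Chaining the two reductions yields the claimed fine-grained lower bound: a hypothetical algorithm for $\#_p\cphomsprob(\mathcal{H})$ violating the bound would, with constant probability, solve $\cphomsprob(\mathcal{H})$ within the same time budget, contradicting Marx's bound under rETH. The $\pW{1}$-hardness follows along the same chain, using that Marx's reduction is count-preserving on the colour-prescribed level, so it lifts to $\#_p\textsc{Clique}\fptred\#_p\cphomsprob(\mathcal{H})$. I expect the main technical obstacle to be the design of the $H$-colour-respecting vertex gadgets encoding arbitrary $\mathbb{F}_{p^k}$-labels; this is a careful but standard adaptation of the construction in~\cite{CurticapeanDH21}.
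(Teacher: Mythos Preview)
Your high-level plan---reduce the decision problem $\cphomsprob(\mathcal{H})$ to $\#_p\cphomsprob(\mathcal{H})$ via a Schwartz--Zippel isolation, then invoke Marx's lower bound---is exactly the paper's strategy. However, your implementation of the Schwartz--Zippel step is both more complicated than necessary and contains a gap.

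The paper uses the \emph{Boolean} Schwartz--Zippel lemma of Williams et al.: for a multilinear polynomial $P$ of degree $d$ over $\mathbb{F}_p$, a uniformly random $a\in\{0,1\}^n$ gives $P(a)\neq 0$ with probability at least $2^{-d}$. The point is that evaluating $P$ at $a\in\{0,1\}^n$ is simply $\#_p\cphoms{H}{G'}$, where $G'$ is obtained from $G$ by deleting every vertex $v$ with $a_v=0$. No gadgets are needed; one oracle call on a subgraph of $G$ suffices per sample, and $O(2^{|V(H)|})$ independent repetitions amplify the success probability to a constant. Since colour-prescribed homomorphisms are injective, $P$ is indeed multilinear of degree $|V(H)|$, so the lemma applies.

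Your route via $\mathbb{F}_{p^k}$-evaluations runs into a real obstacle. You write that the gadget should have ``number of colour-prescribed completions congruent to $r_v$ modulo $p^k$'', but $r_v$ lives in the field $\mathbb{F}_{p^k}$, not in $\mathbb{Z}/p^k\mathbb{Z}$; these are different rings. More fundamentally, the oracle only returns values in $\mathbb{F}_p$, and the isomorphism $\mathbb{F}_{p^k}\cong\mathbb{F}_p^k$ is only $\mathbb{F}_p$-linear, not multiplicative, so there is no ``standard interpolation'' that recovers a product $\prod_u r_{\varphi(u)}\in\mathbb{F}_{p^k}$ from polynomially many $\mathbb{F}_p$-valued oracle answers in the way you suggest. (The construction in~\cite{CurticapeanDH21} that you cite, and which the paper says its argument is identical to, also uses the Boolean version, not $\mathbb{F}_{p^k}$-labels.)

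Finally, for $\pW{1}$-hardness the paper does not rely on Marx's reduction being count-preserving (Marx's paper concerns decision); it instead cites a separate known parsimonious reduction from counting $k$-cliques to $\#\cphomsprob(\mathcal{H})$.
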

\begin{proof}
We begin with the conditional lower bound under rETH. Marx~\cite{Marx10} has shown that, unless ETH fails, the decision version $\cphomsprob(\mathcal{H})$\footnote{To be precise, Marx stated the result for the related problem $\textsc{PartitionedSub}(\mathcal{H})$, which is however, equivalent to $\cphomsprob(\mathcal{H})$ as shown e.g.\ in~\cite[Section 2]{RothSW20}.} cannot be solved in time $f(|H|)\cdot |G|^{o(\mathsf{tw}(H)/\log \mathsf{tw}(H))}$. We construct a (randomised) reduction to the modular counting version using a Schwartz-Zippel-Lemma due to Williams et al.\ \cite[Lemma~2.1]{WilliamsWWY15} which reads as follows: Let $m\geq 2$ be an integer and let $P(x_1,\dots,x_n)$ be a non-zero multilinear degree $d$ polynomial over the ring $\mathbb{Z}/m\mathbb{Z}$. Then
\begin{equation}\label{eq:sw_lemma}
    \Pr_{(a_1,\dots,a_n)\in \{0,1\}^n}[P(a_1,\dots,a_n)\neq 0] \geq 2^{-d}\,.
\end{equation}
Let $H$ and $G$ (with $H$-colouring $c$) be an instance of $\cphomsprob(\mathcal{H})$. Let furthermore $k=|V(H)|$ and assume that $V(G)=\{1,\dots,n\}$. Consider the following polynomial over $\mathbb{F}_p$:
\[P(x_1,\dots,x_n):= \sum_{\varphi \in \cphoms{H}{G}} ~\prod_{v\in V(H)} x_{\varphi(v)} \,.\]
Observe that $P$ is multilinear since $\varphi$ must be injective due to being colour-prescribed. Note further that, given $(a_1,\dots,a_n)\in \{0,1\}^n$, the evaluation $P(a_1,\dots,a_n)$ is equal to $\#\cphoms{H}{G'}$, where $G'$ is the graph obtained from $G$ by deleting vertex $i$ if and only if $a_i=0$, and which is coloured by the restriction $c'$ of $c$ to $V(G')$; it is clear that $c'$ is still an $H$-colouring since $G'$ is a subgraph of $G$.

Invoking (\ref{eq:sw_lemma}), we obtain the following reduction; let us assume that we are given oracle access to $\#_p\cphomsprob(\mathcal{H})$: Just delete every vertex of $G$ uniformly at random to obtain $G'$ and keep the restriction of the $H$-colouring to the remaining vertices. If no colour-prescribed homomorphism from $H$ to $G$ exists then there is none from $H$ to $G'$ as well. Otherwise, $P$ is not a zero-polynomial and thus, with probability at least $2^{-d}$, we have that $\#\cphoms{H}{G'}\neq 0 \mod p$. Since the degree $d$ is bounded by $k=|V(H)|$, we obtain, via standard probability amplification, a constant positive success probability by (independently) repeating the experiment $O(2^{|V(H)|})$ many times, and querying the oracle for each pair $H$ and $G'$. If at least one trial yields $\#\cphoms{H}{G'}\neq 0 \mod p$, we report that $\cphoms{H}{G}\neq \emptyset$. Otherwise, we report that $\cphoms{H}{G}= \emptyset$. Since no oracle query modifies $H$, and since $G'$ is always a subgraph of $G$, we conclude that any algorithm for $\#_p\cphomsprob(\mathcal{H})$ running in time $f(|H|)\cdot |G|^{o(\mathsf{tw}(H)/\log \mathsf{tw}(H))}$, yields a randomised algorithm with one-sided and constant error-probability\footnote{We can even obtain an exponentially small (in $|G|$) error probability by repeating for $O(2^{|V(H)|}) \cdot |G|^{O(1)}$ many trials.} for $\cphomsprob(\mathcal{H})$ running in time $O(2^{|V(H)|}) \cdot f(|H|)\cdot |G|^{o(\mathsf{tw}(H)/\log \mathsf{tw}(H))}$, contradicting Marx' lower bound as mentioned above.

Finally, $\pW{1}$-hardness of $\#_p\cphomsprob(\mathcal{H})$ follows from a known parsimonious reduction from counting $k$-cliques to $\#\cphomsprob(\mathcal{H})$~\cite[Section~1.2.2 and Chapter~5]{Curticapean15}.
\end{proof}

\subsection{Fractured Graphs}\label{sec:obstructions}
Following the terminology of \cite{RothSW20unpub}, we define a \emph{fracture} of a graph $H$ to be a tuple $\rho = (\rho_v)_{v\in V(H)}$, where~$\rho_v$ is a partition of the set of edges $E_H(v)$ of $H$ incident to $v$. 

Given a fracture $\rho$ of $H$, the \emph{fractured graph} $\fracture{H}{\rho}$ is obtained from $H$ be splitting each vertex $v\in V(H)$ according to $\rho_v$; an illustration is provided in Figure~\ref{fig:simpler_fractures}. Formally, for each $v\in V(H)$ and block $B$ of $\rho_v$, the graph $\fracture{H}{\rho}$ contains a vertex $v^B$, and for each edge $e=\{u,v\}\in E(H)$, we connect $u^B$ and $v^{B'}$ if (and only if) $e\in B\cap B'$. 

\begin{figure}[t!]
    \centering
    \begin{tikzpicture}[scale=1]
    \node[circle,draw] (m) at (0,0) {\LARGE $v$};
    \draw[ultra thick,red] (m) -- (-2,2);
    \draw[ultra thick,black!30!green] (m) -- (-2,0);
    \draw[ultra thick,blue] (m) -- (-2,-2);
    
    \draw[ultra thick,black!10!yellow] (m) -- (2,2);
    \draw[ultra thick] (m) -- (2,0);
    \draw[ultra thick,brown] (m) -- (2,-2);
    \begin{scope}[shift={(7,0)}]
    \node[circle,draw] (b1) at (0,0) {$B_1$};
    \draw[ultra thick,red] (b1) -- (-2,2);
    \draw[ultra thick,black!30!green] (b1) -- (-2,0);
    \draw[ultra thick,blue] (b1) -- (-2,-2);
    \begin{scope}[shift={(1,0)}]
    \node[circle,draw] (b2) at (0,0) {$B_2$};
    \draw[ultra thick,black!10!yellow] (b2) -- (2,2);
    \draw[ultra thick] (b2) -- (2,0);
    \draw[ultra thick,brown] (b2) -- (2,-2);
    \end{scope}
    \end{scope}
    \end{tikzpicture}
    \caption{\label{fig:simpler_fractures} Illustration of the construction of a fractured graph. The left picture shows a vertex $v$ of a graph~$H$ with incident edges $E_H(v)=\{  \redc{2pt},\greenc{2pt},\bluec{2pt},\yellowc{2pt},\blackc{2pt},\brownc{2pt}\}$. The right picture shows the splitting of $v$ in the construction of the fractured graph~$\fracture{H}{\rho}$ for a fracture $\rho$ satisfying that the partition $\rho_v$ contains two blocks $B_1 =\{ \redc{2pt},\greenc{2pt},\bluec{2pt}\}$, and $B_2=\{\yellowc{2pt},\blackc{2pt},\brownc{2pt}\}$.}
\end{figure}
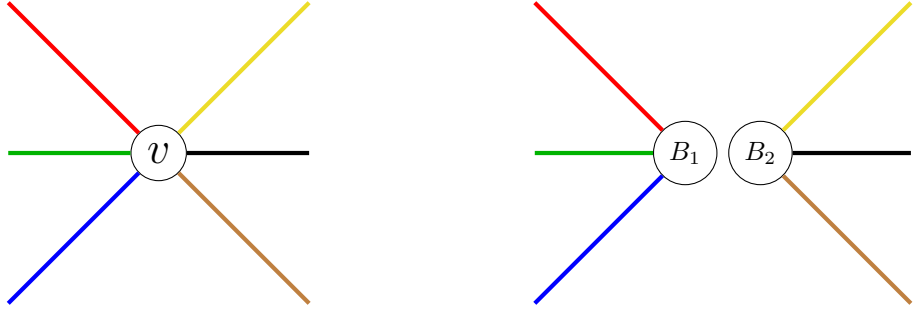

Given a graph property $\Phi$ and a graph $H$, we write $\mathcal{L}(\Phi,H)$ for the set of all fractures $\rho$ of $H$ such that $\fracture{H}{\rho}$ satisfies $\Phi$. Furthermore, the \emph{indicator} $a(\Phi,H)$ of $\Phi$ and $H$ is defined as follows:
\[ a(\Phi,H) := \sum_{\sigma \in \mathcal{L}(\Phi,H)} ~\prod_{v \in V(H)} (-1)^{|\sigma_v|-1}\cdot (|\sigma_v|-1)! \,.\]
Now let $\mathcal{G}=\{G_1,G_2,\dots\}$ be a family of $(n_i,d,c)$-expanders for some positive integers $d$ and~$c$. We call $\mathcal{G}$ an \emph{obstruction} for $\Phi$ if $a(\Phi,G_i)\neq 0$ for infinitely many $i$. Given a prime $p$, we call $\mathcal{G}$ a $p$\emph{-obstruction} for $\Phi$ if $a(\Phi,G_i)\neq 0 \mod p$ for infinitely many $i$.

The following result, which we prove to be an immediate consequence of~\cite{RothSW20unpub}, shows that counting small subgraphs satisfying a property $\Phi$ is intractable whenever $\Phi$ has an obstruction.
\begin{lemma}\label{lem:monotonicity}
Let $\Phi$ be a computable graph property. If $\Phi$ has an obstruction, then $\#\coledgesubsprob(\Phi)$ is $\#\W{1}$-hard and, assuming ETH, cannot be solved in time $f(k)\cdot |G|^{o(k/\log k)}$, for any function $f$.

If $\Phi$ has a $p$-obstruction, then $\#_p\coledgesubsprob(\Phi)$ is $\pW{1}$-hard and, assuming rETH, cannot be solved in time $f(k)\cdot |G|^{o(k/\log k)}$, for any function $f$.
\end{lemma}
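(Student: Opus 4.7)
The plan is a standard Complexity Monotonicity reduction, in the colour-prescribed setting, from counting homomorphisms out of (a subfamily of) the obstruction to $\#\coledgesubsprob(\Phi)$, followed by invoking the known hardness of $\#\cphomsprob$ on graph classes of unbounded treewidth.

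First I would invoke the structural identity from \cite{RothSW20unpub}: for every graph property $\Phi$ and positive integer $k$, there is a finitely-supported function $a_{\Phi,k}$ on graphs such that
\[
\#\coledgesubs{\Phi,k}{G} \;=\; \sum_{H} a_{\Phi,k}(H)\cdot \#\cphoms{H}{G}
\]
for every edge-$k$-coloured graph $G$, and for every graph $H$ with exactly $k$ edges the coefficient $a_{\Phi,k}(H)$ is equal to the indicator $a(\Phi,H)$. The obstruction (resp.\ $p$-obstruction) hypothesis therefore produces an infinite subfamily $\mathcal{G}' \subseteq \mathcal{G}$ of graphs $G_i$ with $k_i=|E(G_i)|$ edges that appear with $a_{\Phi,k_i}(G_i)\neq 0$ (resp.\ $\not\equiv 0 \bmod p$) in the expansion.

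Next I would apply the Complexity Monotonicity principle of Curticapean--Dell--Marx in the colour-prescribed setting, as spelled out in \cite{CurticapeanDM17,RothSW20unpub}: given oracle access to $\#\coledgesubs{\Phi,k}{\cdot}$, one can recover each individual term $\#\cphoms{H}{G}$ for every $H$ in the support via a parameterized Turing reduction, using polynomially-many oracle calls on suitably gadget-expanded instances followed by Gaussian elimination on an explicit full-rank linear system. The modular analogue is the variant of Complexity Monotonicity over $\mathbb{F}_p$ announced in this paper; since the coefficients in the colour-prescribed expansion are integers (no automorphism-count denominators appear), the modular inversion step goes through. This yields parameterized Turing reductions $\#\cphomsprob(\mathcal{G}')\fptred\#\coledgesubsprob(\Phi)$ and $\#_p\cphomsprob(\mathcal{G}')\fptred\#_p\coledgesubsprob(\Phi)$.

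Finally, I would combine this with hardness of $\#(_p)\cphomsprob(\mathcal{G}')$. The key auxiliary fact is that any infinite family of bounded-degree expanders has unbounded treewidth: a $d$-regular $(n_i,d,c)$-expander on $n_i$ vertices has treewidth $\Theta(n_i)$ via the standard balanced-separator lower bound for expanders, and since $k_i=|E(G_i)|=dn_i/2$ is linear in $n_i$, one has $\mathsf{tw}(G_i)=\Theta(k_i)$. Marx's lower bound on $\#\cphomsprob$ (for the exact case) and \cref{lem:hardness_basis_mod} (for the modular case) then yield $\#\W{1}$-hardness (resp.\ $\pW{1}$-hardness) of $\#(_p)\cphomsprob(\mathcal{G}')$ and exclude algorithms running in time $f(|H|)\cdot |G|^{o(\mathsf{tw}(H)/\log \mathsf{tw}(H))}$ under ETH (resp.\ rETH). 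Transporting this lower bound along the Complexity Monotonicity reduction above, and using $\mathsf{tw}(G_i)=\Theta(k_i)$, gives the claimed $|G|^{o(k/\log k)}$ lower bound on $\#(_p)\coledgesubsprob(\Phi)$. The main obstacle is more technical than conceptual: it is the modular Complexity Monotonicity step, which requires the Vandermonde-type system arising from the tensor-power gadgets to remain invertible over $\mathbb{F}_p$; working throughout in the colour-prescribed setting sidesteps the automorphism-count denominators that would otherwise obstruct this invertibility modulo small primes.
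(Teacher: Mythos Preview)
Your approach matches the paper's: extract the infinite subfamily $\mathcal{G}'\subseteq\mathcal{G}$ with nonvanishing indicator, reduce $\#(_p)\cphomsprob(\mathcal{G}')$ to $\#(_p)\coledgesubsprob(\Phi)$ via Complexity Monotonicity in the colour-prescribed setting, and conclude from Marx's bound (resp.\ \cref{lem:hardness_basis_mod}) together with $\mathsf{tw}(G_i)=\Theta(k_i)$ for bounded-degree expanders.

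One point in your justification of the modular step is off and worth correcting. You identify the obstacle as a Vandermonde/tensor-power system remaining invertible over $\mathbb{F}_p$ and claim that avoiding automorphism denominators in the cp-setting resolves it. But a Vandermonde system over $\mathbb{F}_p$ genuinely fails to be invertible once the number of evaluation points exceeds $p$, and the absence of denominators does not touch this issue. The paper's argument is different: it invokes the specific reduction of \cite[Lemmas~3.5 and~3.6]{RothSW20unpub}, whose associated linear system has a \emph{triangular} matrix with $1$'s on the diagonal. Such a matrix is invertible over any commutative ring, so the inversion step survives modulo $p$ for free. That is the actual reason the modular Complexity Monotonicity goes through here, and you should state it rather than the Vandermonde heuristic.
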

\begin{proof}
Assume first that $\Phi$ has an obstruction $\mathcal{G}$. Let $\mathcal{H}[\Phi,\mathcal{G}]$ be the set of all $G_i\in \mathcal{G}$ for which $a(\Phi,G_i)\neq 0$. Since $\mathcal{G}$ is an obstruction, we have that $\mathcal{H}[\Phi,\mathcal{G}]$ is infinite. There is a (tight) reduction from $\#\cphomsprob(\mathcal{H}[\Phi,\mathcal{G}])$ to $\#\coledgesubsprob(\Phi)$ which yields the desired lower bounds~\cite[Lemma~3.8]{RothSW20unpub}.\footnote{While formally in \cite[Lemma 3.8]{RothSW20unpub} the claimed result is stated for the uncoloured version of the counting problem, the proof given in~\cite{RothSW20unpub} establishes the hardness result for the edge-colourful version formulated in \cref{lem:monotonicity}. Here another technical remark is that in \cite{RothSW20unpub} the problem $\#\coledgesubsprob(\Phi)$ was defined in a more narrow fashion than above: the colouring on the input graph $G$ needed to be induced from a graph homomorphism $G \to H$, see \cite[Section 3]{RothSW20unpub}. However, the hardness results for this restricted problem shown in~\cite{RothSW20unpub} then imply the hardness results for the more general problem we consider here. } 

Now if $\Phi$ has a $p$-obstruction $\mathcal{G}$, and if we let $\mathcal{H}[\Phi,\mathcal{G}]$ be the set of all $G_i\in \mathcal{G}$ for which $a(\Phi,G_i)\neq 0 \mod p$, then the same reduction applies to counting modulo $p$, that is, we obtain a (tight) reduction from $\#_p\cphomsprob(\mathcal{H}[\Phi,\mathcal{G}])$ to $\#_p\coledgesubsprob(\Phi)$. More precisely, the reduction only relies on the existence of a unique solution of a system of linear of equations given by the Complexity Monotonicity principle~\cite[Lemma~3.6]{RothSW20unpub}, the corresponding matrix of which is a triangular matrix with $1$'s on the diagonal~\cite[Lemma~3.5]{RothSW20unpub}. Consequently, we obtain a unique solution even if arithmetic is done modulo $p$. The lower bounds for $\#_p\coledgesubsprob(\Phi)$ thus hold by Lemma~\ref{lem:hardness_basis_mod}.
\end{proof}

As elaborated in~\cite{RothSW20unpub}, proving that a family of expanders is an obstruction for a property~$\Phi$ is the hardest step towards intractability of $\#\edgesubsprob(\Phi)$. The Cayley graph expanders constructed in this work will turn out to be obstructions for a wide range of natural properties~$\Phi$.

\section{Cayley Graph Expanders} \label{Sect:CayleyExpanders}

The {\itshape Cayley graph} of a group $\Gamma$ generated by a symmetric set $S \subseteq \Gamma$, i.e., such that $S^{-1} = S$, is the graph $G = \mathcal{C}(\Gamma,S)$ with vertex set $V(G) = \Gamma$ and edge set 
\[
E(G) = \{(x,xs) \in V(G) \times V(G) ; x \in \Gamma, \ s \in S\}.
\]
Since $S$ is symmetric, with any edge $(x,xs)$ the Cayley graph also contains the edge with opposite orientation $(xs,x) = (xs,(xs)s^{-1})$. Hence we consider Cayley graphs as the underlying unoriented graph. 

In this section we recall the construction of certain discrete groups from \cite{SV,RSV} that are lattices in generalized quaternion algebras. Their group theory is controlled by representations with values in power series rings with coefficients in $\mathbb{F}_p$. In particular, these representations lead to well chosen finite congruence quotients that are $p$-groups. With a natural set of generators as in \cite{SV,RSV} these lead to sequences of Cayley graph expanders of valency $2(p+1)$ and vertex set of size a power of $p$, here $p$ is odd. We explain the known fact that a change of generators for the lattice does not destroy the expander property. Moreover, we analyse the relations and find that we can reduce the number of generators to $(p+3)/2$. For $p \ge 5$ (with an extra argument for $p=5$) we deduce the existence of a series of Cayley graph expanders of $p$-power order and now valency $2(p-2)$ that is necessary for the applications in this paper. Theorem~\ref{thm:savegeneratorsforpquotients} gives a more precise statement for the range of possible valencies of our construction.

\subsection{Change of Generators in Cayley Graphs}\label{sec:3_change_of_gens}

This section is concerned with the expansion property of Cayley graphs under change of generators. In general, the expansion property is not preserved (see, e.g., \cite[Section 11.4]{HooryLinialWigderson2006} for a brief discussion). However, if the two families of generators remain bounded and can be mutually written in words of bounded lengths of each other, then the expansion property is preserved (see, e.g., \cite[Prop. 3.5.1]{Kowalski2018}).
The remainder of this section is devoted to an easy self-contained proof in the following special case: Let $\Gamma$ be a finitely generated infinite group generated by two finite symmetric sets of generators $S_1$ and $S_2$, that is, $S_1^{-1} = S_1$ and $S_2^{-1} = S_2$. Let $\Gamma_k$ be finite index subgroups of $\Gamma$ with $[ \Gamma : \Gamma_k ] \to \infty$ and $H_k = \Gamma / \Gamma_k$ be the corresponding finite quotients. The sets $S_i$ of generators of $\Gamma$ can also be viewed as sets of generators of the quotients $H_k$. Then the following holds:

\begin{proposition}[see, e.g., {\cite[Prop. 3.5.1]{Kowalski2018}}]
If the Cayley graphs ${\mathcal{C}}(H_k,S_1)$ represent a family of expander graphs then 
${\mathcal{C}}(H_k,S_2)$ is also a family of expander graphs. 
\end{proposition}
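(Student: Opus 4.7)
The plan is to reduce the statement to a spectral inequality via the reformulation of the expander property that is recalled in this subsection: a family of uniformly regular Cayley graphs $\mathcal{C}(H_k,S)$ is expanding if and only if the second eigenvalue $\lambda_2(\Delta_{S,k})$ of the combinatorial Laplacian on $\ell^2(H_k)$ is bounded away from zero uniformly in $k$. Once this reformulation is at hand, it suffices to show that a uniform spectral gap for the Laplacians $\Delta_{S_1,k}$ forces a uniform spectral gap for $\Delta_{S_2,k}$.

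The heart of the argument is a comparison of the Dirichlet quadratic forms $Q_{S_i}(f) := \sum_{x \in H_k,\, s \in S_i} |f(xs) - f(x)|^2$ associated to $\Delta_{S_i,k}$. Since $S_2$ generates $\Gamma$, every $s \in S_1$ admits a factorization $s = u_1(s) \cdots u_{\ell(s)}(s)$ with $u_j(s) \in S_2$ and $\ell(s) \leq L$, where $L$ depends only on $S_1$ and $S_2$ and, crucially, not on $k$ (the factorization is chosen once and for all in $\Gamma$ and descends to each quotient $H_k$). A telescoping identity
\[
f(xs) - f(x) = \sum_{j=1}^{\ell(s)} \bigl[f(xu_1(s) \cdots u_j(s)) - f(xu_1(s) \cdots u_{j-1}(s))\bigr],
\]
combined with the elementary Cauchy--Schwarz bound $|\sum_{j=1}^\ell a_j|^2 \leq \ell \sum_{j=1}^\ell |a_j|^2$ and the translation invariance of the summation over $x \in H_k$ (so that $x \mapsto xu_1(s)\cdots u_{j-1}(s)$ is a bijection of $H_k$), would then yield the comparison
\[
Q_{S_1}(f) \;\leq\; L^2 \, |S_1| \cdot Q_{S_2}(f)
\]
for every $f : H_k \to \mathbb{R}$.

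The min-max characterization of $\lambda_2(\Delta_{S_i,k})$, applied to the subspace of functions orthogonal to constants, transfers this inequality of quadratic forms to the eigenvalue bound $\lambda_2(\Delta_{S_1,k}) \leq L^2 |S_1| \cdot \lambda_2(\Delta_{S_2,k})$. The uniform lower bound $\lambda_2(\Delta_{S_1,k}) \geq c > 0$ provided by the expander hypothesis on the family $\mathcal{C}(H_k,S_1)$ then yields the uniform lower bound $\lambda_2(\Delta_{S_2,k}) \geq c/(L^2 |S_1|)$, and the spectral reformulation converts this back into the combinatorial expansion property for the family $\mathcal{C}(H_k,S_2)$.

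The only real obstacle is bookkeeping: one must ensure that the spectral and combinatorial versions of the expander property are quantitatively equivalent with constants depending only on the fixed valencies $|S_1|$ and $|S_2|$, so that uniformity in $k$ is preserved when passing back and forth through the spectral gap. This is standard via Cheeger-type inequalities for graphs once the normalization conventions for $\Delta_{S,k}$ are fixed, and it is the only place where the hypothesis that $|S_1|, |S_2|$ are finite (and not merely bounded) intervenes.
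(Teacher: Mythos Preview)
Your proposal is correct and follows essentially the same route as the paper: pass to the spectral characterization via the Rayleigh quotient of the Laplacian, write each $s \in S_1$ as a bounded-length word in $S_2$, telescope, apply Cauchy--Schwarz, and use the bijectivity of left translation on $H_k$ to compare the Dirichlet forms uniformly in $k$. The paper obtains the constant $K = \bigl(\sum_{s \in S_1} n(s)^2\bigr)^{-1}$ where $n(s)$ is the $S_2$-word length of $s$, which your bound $L^2|S_1|$ majorizes; your closing remark about Cheeger-type inequalities is slightly more than needed, since the paper simply cites the spectral--combinatorial equivalence as a black box and works entirely on the spectral side.
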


The proof is based on a well-known spectral description of expander graphs: Let $G = (V,E)$ be a connected graph with vertex set $V$ and edge set $E$. The degree of a vertex $x \in V$ is denoted by $d_x$. The Laplacian $L_G$
of a function $f: V \to \mathbb{R}$ is defined as follows
\[ L_G f(x) = \sum_{y \sim x} (f(x)-f(y)) = d_x f(x) - \sum_{y \sim x} f(y)\,.\]
Then $L_G$ is a symmetric operator with non-negative real eigenvalues. Let $\mu_1(G) > 0$ be the smallest non-zero eigenvalue of $L_G$. Since Cayley graphs are regular graphs, there is a close relationship between the eigenvalues of $L_G$ and the adjacency matrix $A_G$ (see, e.g., \cite[Lemma 4.7]{HooryLinialWigderson2006}), namely $\mu_1(G) = \lambda_1(A_G)-\lambda_2(A_G)$, where $\lambda_1(A_G),\lambda_2(A_G)$ are the two largest eigenvalues of $A_G$. The Cayley graphs ${\mathcal{C}}(H_k,S_1)$ from above are an expander family if and only if there exists a positive constant $C > 0$ such that, for all $k \in \mathbb{N}$ (see, e.g., \cite[Theorem 4.11]{HooryLinialWigderson2006}):
\[ \mu_1({\mathcal{C}}(H_k,S_1)) \ge C\,.\]
Via the Rayleigh quotient 
\[
\mathcal{R}(f)  :=
\frac{\sum_{x \in V} f(x) \cdot L_Gf(x)}{\sum_{x \in V} f(x)^2} =
\frac{\sum_{\{x,y\} \in E} (f(x)-f(y))^2}{\sum_{x \in V} f(x)^2} 
\]
the eigenvalue $\mu_1(G)$ has the following variational description (see, e.g., \cite[Prop. 1.82]{KrebsShaheen2011} and \cite[Prop. 3.4.3]{Kowalski2018}) 
\[
\mu_1(G)  = \inf \Big\{ \mathcal{R}(f) : \, \sum_{x \in V} f(x) = 0, f \neq 0 \Big\}.
\]

The aim is to show that there exists a positive constant $K > 0$ such that, for all $k \in \mathbb{N}$,
\[ \mu_1({\mathcal{C}}(H_k,S_2)) \ge K \mu_1({\mathcal{C}}(H_k,S_1))\,.\]
Then the expanding property of the family ${\mathcal{C}}(H_k,S_1)$ implies a similar expanding property of the family ${\mathcal{C}}(H_k,S_2)$, albeit with a different spectral expansion constant.

Before starting the proof, we first consider the description of the Rayleigh quotient $\mathcal{R}(f)$ in the case of a Cayley graph ${\mathcal{C}}(H,S)$: The vertex set of this graph is given by $V = H$ and the set of edges is given by $E = \{ \{x,xs\}: \, x \in V, s \in S \}$, where $E$ is understood as a multiset. Then the Rayleigh quotient for a function $f: H \to \mathbb{R}$ can be written as
\[ \mathcal{R}_S(f) = \frac{1}{2}\, \frac{\sum_{x \in H} \sum_{s \in S} (f(x)-f(xs))^2}{\sum_{x \in H} f(x)^2}. \]
The Rayleigh quotient is equipped with the index $S$, that is, the set of generators of the Cayley graph, since it depends on this choice of generators.
Note that ${\mathcal{C}}(H,S)$ is vertex transitive since $H$ acts on the vertex set $V = H$ by group left-multiplication. The proof is complete if there exists a constant $K > 0$ such that, for all $k \in \mathbb{N}$ and all $f: H_k \to \mathbb{R}$,
\begin{equation} \label{eq:RS2KRS1} 
\mathcal{R}_{S_2}(f) \ge K \mathcal{R}_{S_1}(f). 
\end{equation}
This is the case if there exists, for every generator
$s \in S_1$, a constant $K'(s) > 0$ such that
\begin{equation} \label{eq:toshow} 
\sum_{x \in H_k} (f(x)-f(xs))^2 \le K'(s) \sum_{x \in H_k} \sum_{t \in S_2} (f(x)-f(xt))^2, 
\end{equation}
and that his constant $K'(s)$ does not depend on $k \in \mathbb{N}$. Since $S_2$ is a set of generators of $\Gamma$, any $s \in S_1 \subset \Gamma$ can be written in the form 
\[ s = t_{1} t_{2} \cdots t_{n} \]
with $t_{1},\dots,t_{n} \in S_2$. The same relation between the generators holds in each of the quotients $H_k$, $k \in \mathbb{N}$. We abbreviate $s_j = t_1 \cdots t_j$, so $s_0 = 1$ and $s_n = s$, and $s_{j} = s_{j-1} t_{j}$. Let us now show \eqref{eq:toshow}:
\begin{multline*}
\sum_{x \in H_k} (f(x)-f(xs))^2 = \sum_{x \in H_k} \Big( \sum_{j=1}^{n} (f(xs_{j-1}) - f(xs_{j})) \Big)^2 \\ \le n \sum_{x \in H_k} \sum_{j=1}^{n} \big(f(xs_{j-1}) - f(xs_{j-1}t_{j})\big)^2
 = n \sum_{y \in H_k} \sum_{j=1}^{n} (f(y) - f(yt_{j}))^2 \\
\le n^2 \sum_{y \in H_k} \sum_{t \in S_2} (f(y)-f(yt))^2.
\end{multline*}
This shows \eqref{eq:toshow} with $K'(s) = n^2$, where $n$ is the length of a word in $S_2$ expressing $s$.

Let us finally show \eqref{eq:RS2KRS1}:
\begin{multline*}
\mathcal{R}_{S_1}(f) = 
\frac{\sum_{s \in S_1} \sum_{x \in H_k} (f(x)-f(xs))^2}{2 \sum_{x \in H_k} f(x)^2} \le 
\frac{\sum_{s \in S_1} K'(s) \sum_{x \in H_k} \sum_{t \in S_2} (f(x)-f(xt))^2}{2 \sum_{x \in H_k} f(x)^2} \\
\le \Big( \sum_{s \in S_1} K'(s) \Big)\, 
\frac{\sum_{x \in H_k} \sum_{t \in S_2}(f(x)-f(xt))^2}{2 \sum_{x \in H_k} f(x)^2}
= \Big( \sum_{s \in S_1} K'(s) \Big)\, \mathcal{R}_{S_2}(f).
\end{multline*}
This implies that \eqref{eq:RS2KRS1} holds with $K = \left( \sum_{s \in S_1} K'(s) \right)^{-1}$.

\subsection{Unbounded Torsion in Congruence Quotients}

Let $\Gamma$ be a group and let $R$ be a ring. Let $\GL_n(R)$ denote the \emph{general linear group} of $n \times n$ matrices with entries in $R$ and whose determinant is a unit in $R$. A \emph{representation} of $\Gamma$ with coefficients in $R$ of dimension $n$ is a group homomorphism
\[
\rho \colon \Gamma \to \GL_n(R),
\]
i.e., $\rho(g_1 \cdot g_2) = \rho(g_1) \cdot \rho(g_2)$ for all $g_1,g_2 \in \Gamma$. The representation $\rho$ is said to be \emph{faithful} if $\rho$ is injective.

The ring $\mathbb{F}_p[[t]]$ is the ring of formal power series in the variable $t$ and coefficients in the finite field $\mathbb{F}_p$ of $p$ elements. Considering a formal power series $f(t) = \sum_{k \geq 0} a_k t^k$ only up to terms of order $t^{i+1}$ or higher (fixed precision) as 
\[
f(t) = \sum_{k = 0}^i a_k t^k + O(t^{i+1})
\]
is described by a ring homomorphism $\pi_i \colon \mathbb{F}_p[[t]] \to \mathbb{F}_p[[t]] / (t^{i+1}) = \mathbb{F}_p[t] / (t^{i+1})$.

A group is said to be \emph{torsion-free} if all non-trivial elements have infinite order. 

\begin{lemma}
\label{lem:growingorderintrucatedquotients}
Let $\Gamma$ be an infinite, torsion-free group and let 
\[
\rho \colon \Gamma \to \GL_n(\mathbb{F}_p[[t]])
\]
be a faithful representation. For any $i\geq 0$ let $\Gamma_i$ be the finite group which is the image of the composition
\[
\Gamma \xrightarrow{\rho} \GL_n(\mathbb{F}_p[[t]]) \xrightarrow{\pi_i} \GL_n(\mathbb{F}_p[t] / (t^{i+1}))\,.
\]
and we denote by $\psi_i = \pi_i \circ \varphi: \Gamma \to \Gamma_i$ the corresponding surjective map from $\Gamma$ to $\Gamma_i$. 

Then for $g \in \Gamma$ a nontrivial element, the order of $\psi_i(g)$ in $\Gamma_i$ tends to infinity as $i$ increases.
\end{lemma}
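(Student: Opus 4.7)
The plan is to argue by contradiction, converting a bounded order in the quotients into a torsion element of $\Gamma$. Concretely, I would fix a nontrivial $g \in \Gamma$, set $A := \rho(g) \in \GL_n(\mathbb{F}_p[[t]])$, and unpack the definition: $\psi_i(g) = \pi_i(A)$, and the order $m_i$ of $\psi_i(g)$ in $\Gamma_i$ is the least positive integer $m$ such that $A^m \equiv I_n \pmod{t^{i+1}}$, meaning that every entry of the matrix $A^m - I_n$ lies in the ideal $(t^{i+1}) \subseteq \mathbb{F}_p[[t]]$.

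Next, I would suppose for contradiction that $m_i$ does not tend to infinity. Then there is some integer $N$ and an infinite sequence of indices $i_1 < i_2 < \cdots$ with $m_{i_j} \leq N$ for all $j$. Setting $M := \mathrm{lcm}(1,2,\ldots,N)$, we get $\psi_{i_j}(g)^M = 1$ in $\Gamma_{i_j}$, i.e.\ $A^M \equiv I_n \pmod{t^{i_j+1}}$ for every $j$.

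The key step is now to upgrade the congruences to an equality in $\mathbb{F}_p[[t]]$. Each entry of $A^M - I_n$ is a formal power series lying in the ideals $(t^{i_j + 1})$ for arbitrarily large $i_j$. Since $\bigcap_{i\geq 0} (t^{i+1}) = 0$ in $\mathbb{F}_p[[t]]$, this forces all entries of $A^M - I_n$ to vanish, and hence $A^M = I_n$ in $\GL_n(\mathbb{F}_p[[t]])$. Equivalently, $\rho(g^M) = I_n$.

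Finally I would close the argument: by faithfulness of $\rho$, this gives $g^M = 1$ in $\Gamma$, but $g \neq 1$ and $\Gamma$ is assumed torsion-free, a contradiction. There is no real obstacle here; the only thing to be careful about is formulating the pigeonhole step cleanly (using $M = \mathrm{lcm}(1,\ldots,N)$ is tidier than extracting a common order $k$ via a second pigeonhole on the finitely many possible values of $m_{i_j}$), and making explicit that the Krull intersection $\bigcap_i (t^{i+1}) = 0$ is what converts ``congruent modulo arbitrarily high powers of $t$'' into genuine equality in $\mathbb{F}_p[[t]]$.
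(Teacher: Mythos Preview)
Your proof is correct and follows essentially the same approach as the paper's: argue by contradiction, bound the orders, pass to a common multiple (you use $\mathrm{lcm}(1,\ldots,N)$, the paper uses $N!$), and then use that congruence modulo arbitrarily high powers of $t$ forces equality in $\mathbb{F}_p[[t]]$, whence faithfulness and torsion-freeness yield the contradiction. Your handling of the negation of ``tends to infinity'' via a bounded subsequence is slightly more careful than the paper's (which tacitly uses that the orders $m_i$ are non-decreasing in $i$), but the arguments are otherwise the same.
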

\begin{proof}
We argue by contradiction. If $N$ is an upper bound for the orders of all $\psi_i(g)$, then with $\ell = N!$ we have $\psi_i(g^\ell) = \psi_i(g)^\ell = 1$ for all $i$. Since $\rho(g^\ell)$ is described formally to arbitrary precision by $\psi_i(g^\ell) = 1$, we conclude that $\rho(g^\ell) = 1$. Because $\rho$ is faithful, it follows that $g^\ell = 1$. Moreover, as $\Gamma$ is torsion-free, we conclude $g=1$, a contradiction.
\end{proof}

\subsection{Quartic Cayley Graph Expanders of 3-Groups}\label{sec:3_cayley_expanders}
We start with the least complex example underlying our construction of Cayley graph expanders, namely a group $\Gamma$ generated by $a_1,a_2,a_3,a_4$ subject
to relations 
\[
a_1a_3a_1a_4, \quad 
a_1a_3^{-1}a_2a_3^{-1}, \quad
a_1a_4^{-1}a_2^{-1}a_4^{-1}, \quad
a_2a_3a_2a_4^{-1}\,.
\]
The group $\Gamma$ agrees with the group $\Gamma_2$ on page 457 of \cite{SV}, where the correspondence is $a_1= g_0$, $a_2=g_2$, $a_3=g_3$, $a_4= g_1$. 
The group $\Gamma$ can be obtained as the group $\Gamma_{3;0,1}$ described in Section~\ref{sec:morelattices} (a particular case of the group $\Gamma_{M,\delta}$ of \cite[Section 2.8]{RSV} for $q=3$) by means of the identification $\Gamma \cong \Gamma_{3;0,1}$ via $a_1 \mapsto a_0$, $a_2 \mapsto a_2$, $a_3 \mapsto b_3^{-1}$ and $a_4 \mapsto b_1^{-1}$.
It was shown by Stix and Vdovina in \cite{SV}, that this group is a quaternionic lattice.
The group $\Gamma$ is torsion-free by \cite[Theorem 30]{SV}.

Consider the function field $\mathbb{F}_3(x)$ over the finite field $\mathbb{F}_3$ in one variable $x$. 
Using computer algebra, it is easy to verify that the following assignments give a well-defined representation $\Psi: \Gamma \to \mathrm{GL}_3(\mathbb{F}_3(x))$: 

\begin{align}
\Psi(a_1) & = \left(\begin{array}{rrr}
\frac{1}{x} & 1 + \frac{1}{x} & -x +1 - \frac{1}{x} \\
- \frac{1}{x} & - \frac{1}{x} & -x + \frac{1}{x} \\
- \frac{1}{x} & 1 - \frac{1}{x} & x + 1 + \frac{1}{x}
\end{array}\right) \label{align:matricesPsi}
\\[3ex]
\Psi(a_2) & =  \left(\begin{array}{rrr}
\frac{1}{x} & - x + 1 & - x^{3} - x^{2} -  x \\
0 & 1 & - x^{2} + x \\
0 & 0 & x
\end{array}\right) \notag
\\[3ex]
\Psi(a_3) & =  \left(\begin{array}{rrr}
0 & 0 & x^{2} + 1 \\
0 & -1 & - x - 1 \\
\frac{1}{x^{2} + 1} & \frac{x + 1}{x^{2} + 1} &
-1 + \frac{x}{x^{2} + 1}
\end{array}\right) \notag
\\[3ex]
\Psi(a_4) & =  \left(\begin{array}{rrr}
\frac{2 x}{x^{2} + 1} - 1 & -x - \frac{1}{x^{2} +1} & 
x^2 - x + 1 - \frac{x}{x^{2} + 1} \\
\frac{2x + 1}{x^{2} + 1} & \frac{2 x - 1}{x^{2} + 1} &
\frac{2x + 1}{x^{2} + 1} - x  - 1\\
\frac{1}{x^{2} + 1} & \frac{2 x}{x^{2} + 1} &
\frac{1}{x^{2} + 1} - 1
\end{array}\right) \notag
\end{align}
In fact, all matrices $\Psi(a_i)$, for $i=1,\ldots, 4$ 
 have determinant $1$.  
The denominators of the matrix entries are nonzero at $x = 1$. Therefore we can substitute $x = 1 + t$ and expand the matrix entries, which are now rational functions in $t$, as formal power series in $\mathbb{F}_3[[t]]$.
The determinant is still $1$, so that the resulting matrices are invertible as matrices with values in $\mathbb{F}_3[[t]]$. We thus obtain a representation 
\[
\widetilde \Psi: \Gamma \to \mathrm{GL}_3(\mathbb{F}_3[[t]]).
\]
\begin{lemma} \label{lem:faithfulreptoSL3}
The group homomorphism $\widetilde \Psi: \Gamma \to \mathrm{GL}_3(\mathbb{F}_3[[t]])$ given above is injective and defines a faithful representation of $\Gamma$.
\end{lemma}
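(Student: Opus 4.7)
The strategy is to reduce faithfulness of $\widetilde{\Psi}$ to the faithful representation already constructed in \cite{RSV}. First, one verifies that $\Psi$ as defined by the matrices in (\ref{align:matricesPsi}) gives a well-defined group homomorphism $\Gamma \to \GL_3(\mathbb{F}_3(x))$. Since $\Gamma$ is presented by four relators in the generators $a_1,a_2,a_3,a_4$, it suffices to substitute the four matrices $\Psi(a_i)$ into each relator word and check that the result is the identity matrix. All entries are rational functions in $\mathbb{F}_3(x)$, so this is a finite symbolic computation (the text already asserts it is easy to verify by computer algebra).

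Next, I would justify the specialization $x \mapsto 1 + t$. The only denominators appearing in (\ref{align:matricesPsi}) are $x$ and $x^2 + 1$; both are nonzero at $x = 1$ (note $1^2+1 = 2 \neq 0$ in $\mathbb{F}_3$). Hence each matrix entry is a rational function regular at $x = 1$, and its Taylor expansion in $t = x - 1$ is an element of $\mathbb{F}_3[[t]]$. Since each $\Psi(a_i)$ has determinant $1$, so does $\widetilde{\Psi}(a_i)$, which therefore lies in $\SL_3(\mathbb{F}_3[[t]]) \subseteq \GL_3(\mathbb{F}_3[[t]])$. This produces a well-defined group homomorphism $\widetilde{\Psi}$.

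For injectivity, the key step, I would invoke the explicit injective representation $\Psi_{\text{RSV}} \colon \Gamma_{3;0,1} \to \GL_3(\mathbb{F}_3[[t]])$ from \cite{RSV} (the specialization at $p = 3,$ $\alpha = 0,$ $\beta = 1$ of the homomorphism $\Psi$ described in (\ref{eqn:PsiIntro}) of the introduction). Under the identification $\Gamma \cong \Gamma_{3;0,1}$ fixed in the text ($a_1 \mapsto a_0,$ $a_2 \mapsto a_2,$ $a_3 \mapsto b_3^{-1},$ $a_4 \mapsto b_1^{-1}$), both $\widetilde{\Psi}$ and $\Psi_{\text{RSV}}$ are three-dimensional representations of the same group with values in the same target. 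The plan is to show that, after a suitable change of basis in $\GL_3(\mathbb{F}_3[[t]])$, the two representations coincide on all four generators. Since inner conjugation is an automorphism of $\GL_3(\mathbb{F}_3[[t]])$, it preserves kernels, so faithfulness of $\Psi_{\text{RSV}}$ forces $\ker \widetilde{\Psi} = 1$.

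The main obstacle will be exhibiting this conjugation (equivalently, an isomorphism of representations) in the final step. This is a linear-algebra problem in $\SL_3(\mathbb{F}_3[[t]])$: compare traces and other conjugation-invariants $\operatorname{tr} \widetilde{\Psi}(w)$ versus $\operatorname{tr} \Psi_{\text{RSV}}(w)$ on enough words $w$ to identify the representations up to isomorphism, and then produce the conjugating matrix explicitly. Alternatively, one could bypass the comparison by arguing geometrically: $\widetilde{\Psi}$ encodes the action of $\Gamma$ on the Bruhat--Tits building of the underlying generalized quaternion algebra of \cite{SV,RSV} at the place $t = 0$, and this action is faithful because $\Gamma$ is a torsion-free lattice therein by \cite[Theorem 30]{SV}. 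Either route concludes the proof.
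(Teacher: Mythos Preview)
Your approach differs substantially from the paper's, and your primary route has a genuine gap.

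The paper's proof does not compare $\widetilde\Psi$ to the representation of \cite{RSV} at all. Instead it argues as follows: the matrix $\widetilde\Psi(a_2)$ is upper triangular with the diagonal entry $x$, hence of infinite order, so $\widetilde\Psi(\Gamma)$ is infinite. Since $\Gamma$ is an arithmetic lattice of rank $2$, Margulis' normal subgroup theorem forces every homomorphic image of $\Gamma$ to be either finite or to have finite kernel. The image being infinite, $\ker\widetilde\Psi$ is finite; but $\Gamma$ is torsion-free, so the kernel is trivial. This is a three-line structural argument that never touches the explicit form of $\Psi_{\text{RSV}}$.

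Your plan to conjugate $\widetilde\Psi$ into $\Psi_{\text{RSV}}$ is not obviously viable: two faithful $3$-dimensional representations of the same group into $\GL_3(\mathbb{F}_3[[t]])$ need not be conjugate, and matching traces does not force isomorphism over a local ring of positive characteristic without a semisimplicity argument you do not supply. So the step you flag as ``the main obstacle'' is not merely a computation to be carried out but an assertion whose truth is unclear. Your geometric alternative, interpreting $\widetilde\Psi$ as the conjugation action of the quaternion algebra on its trace-zero part, is in fact the content of the paper's Remark immediately following the proof, and does give a valid alternative --- but it requires identifying $\Psi$ with that adjoint action, which you have not done either. The Margulis argument sidesteps all of this.
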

\begin{proof}
Because $\widetilde{\Psi}(a_2)$ agrees with $\Psi(a_2)$ under the necessary identifications, it has the same order, which is visibly infinite (due to the upper triangular shape and the entry $x$ on the diagonal). Therefore $\widetilde{\Psi}$ has infinite image. 

The group $\Gamma$ is an arithmetic lattice in a group of rank $2$ by construction in \cite{SV}. By \cite{Margulis}, therefore  all homomorphic images of $\Gamma$ are finite, or the kernel of the homomorphism is finite. As $\widetilde{\Psi}(\Gamma)$ is infinite, it follows that the kernel of $\widetilde{\Psi}$ is finite. But 
$\Gamma$ is torsion-free, hence all finite subgroups are trivial, and trivial kernel means that $\widetilde{\Psi}$ is faithful.
\end{proof}

As in Lemma~\ref{lem:growingorderintrucatedquotients} we consider the truncated representations up to orders $t^{i+1}$ and higher
\[
\widetilde{\Psi}_i : \Gamma \xrightarrow{\widetilde{\Psi}} 
\mathrm{GL}_3(\mathbb{F}_3[[t]]) \xrightarrow{\pi_i} \mathrm{GL}_3(\mathbb{F}_3[t]/(t^{i+1}))
\]
and denote the image by 
$\Gamma_i := \widetilde{\Psi}_i(\Gamma)$.

\begin{remark}
As an alternative to the proof of Lemma~\ref{lem:faithfulreptoSL3} one may observe that $\Gamma$ is a lattice in a quaternion algebra $D$ over a field $K$ (or rather an arithmetic lattice in $D^\times/K^\times$) that is split by $\mathbb{F}_3(x)$ (as an extension of $K$). The homomorphism $\Psi$ is nothing but the one induced from $D$ acting on the $3$-dimensional purely imaginary quaternions in $D$ by conjugation. It follows that $\Gamma$ acts faithfully since $D^\times/K^\times$ acts faithfully. 

More importantly, it follows from this construction that the truncated homomorphisms $\Gamma \to \Gamma_i$ are finite congruence quotients of $\Gamma$.
\end{remark}

\begin{lemma} \label{lem:residuallypro-3}
The finite groups $\Gamma_i$ are $3$-groups, i.e, the order of $\Gamma_i$ is a power of $3$.
\end{lemma}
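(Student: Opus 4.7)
The plan is to decompose $\Gamma_i$ along the reduction modulo $t$ and show separately that the kernel and image are $3$-groups. Let $R_i = \mathbb{F}_3[t]/(t^{i+1})$, and consider the group homomorphism $\pi \colon \mathrm{GL}_3(R_i) \to \mathrm{GL}_3(\mathbb{F}_3)$ induced by setting $t = 0$. Denote its kernel by $K_i$. Every element of $K_i$ has the form $I + N$ where $N$ is a $3 \times 3$ matrix with entries in the maximal ideal $(t) \subseteq R_i$. Entries of $N^r$ then lie in $(t)^r$, and since $(t)^{i+1} = 0$ in $R_i$ we have $N^{i+1} = 0$. Combined with the characteristic $3$ identity $(I+N)^{3^k} = I + N^{3^k}$ (valid because $I$ and $N$ commute), this gives $(I+N)^{3^k} = I$ as soon as $3^k \geq i+1$. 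Hence $K_i$ is a finite $3$-group, and so is any subgroup of it, including $\Gamma_i \cap K_i$.

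Next, I would identify the image $\overline{\Gamma}_i := \pi(\Gamma_i) \subseteq \mathrm{GL}_3(\mathbb{F}_3)$. Since $\Gamma_i$ is generated by $\widetilde{\Psi}_i(a_1), \ldots, \widetilde{\Psi}_i(a_4)$, the image $\overline{\Gamma}_i$ is generated by the four matrices $\Psi(a_j)\big|_{x = 1}$ obtained by substituting $x = 1$ in the explicit matrices \eqref{align:matricesPsi}; the denominators $x$ and $x^2+1$ evaluate to $1$ and $2$ at $x = 1$, both nonzero in $\mathbb{F}_3$, so these specialisations are well-defined. A direct calculation then shows that $\Psi(a_2)\big|_{x=1} = I$ and that, setting $M := \Psi(a_1)\big|_{x=1}$, one has $M^3 = I$ together with $\Psi(a_3)\big|_{x=1} = \Psi(a_4)\big|_{x=1} = M^2$. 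Consequently $\overline{\Gamma}_i = \langle M \rangle$ is cyclic of order dividing $3$, hence a $3$-group.

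Putting the pieces together, $\Gamma_i$ fits into the short exact sequence
\[
1 \longrightarrow \Gamma_i \cap K_i \longrightarrow \Gamma_i \longrightarrow \overline{\Gamma}_i \longrightarrow 1
\]
where both outer terms are $3$-groups, so $\Gamma_i$ is a $3$-group, as claimed.

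The only genuinely computational step is the verification of $M^3 = I$ and the coincidence $\Psi(a_3)\big|_{x=1} = \Psi(a_4)\big|_{x=1} = M^2$ in $\mathrm{GL}_3(\mathbb{F}_3)$, which is a routine finite check on $3 \times 3$ matrices with entries in $\mathbb{F}_3$. I would regard this as the main (though fairly mild) obstacle: the structural nilpotent-kernel argument giving that $K_i$ is a $3$-group, and the extension step at the end, are standard once this reduction-modulo-$t$ computation is in hand. Conceptually, the point is that the \emph{only} source of non-$3$-torsion in $\mathrm{GL}_3(R_i)$ sits in $\mathrm{GL}_3(\mathbb{F}_3)$, and the particular choice of lattice $\Gamma$ together with the representation $\widetilde{\Psi}$ is arranged so that $\Gamma$ lands in a Sylow $3$-subgroup modulo $t$.
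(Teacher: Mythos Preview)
Your proof is correct and follows essentially the same strategy as the paper: both reduce modulo $t$ to land in $\mathrm{GL}_3(\mathbb{F}_3)$, verify by direct computation that the image there is cyclic of order $3$ (with $\widetilde\Psi_0(a_2)=I$ and $\widetilde\Psi_0(a_3)=\widetilde\Psi_0(a_4)=\widetilde\Psi_0(a_1)^2$), and argue that the kernel of this reduction is a $3$-group. The only cosmetic difference is that the paper handles the kernel by induction on $i$, using that each successive kernel $\ker(\Gamma_i\to\Gamma_{i-1})$ embeds in the additive group $\mathrm{M}_3(\mathbb{F}_3)$, whereas you dispatch the full kernel $K_i$ in one stroke via the characteristic-$3$ identity $(I+N)^{3^k}=I+N^{3^k}$ and nilpotence of $N$; both routes are standard and equally short.
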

\begin{proof}
Forgetting the term of order $t^i$ yields a group homomorphism $\Gamma_i \to \Gamma_{i-1}$. Its kernel is naturally a subgroup of the additive group $\mathrm{M}_3(\mathbb{F}_3)$ of $3 \times 3$ matrices with coefficients in $\mathbb{F}_3$, hence a $3$-group. Indeed, the elements in $\ker(\Gamma_i \to \Gamma_{i-1})$ have the form $\mathbb{1}_3 + At^i$ for an $A \in \mathrm{M}_3(\mathbb{F}_3)$ and the identity matrix $\mathbb{1}_3$. The assignment $\mathbb{1}_3 + At^i \mapsto A$ is an injective group homomorphism $\ker(\Gamma_i \to \Gamma_{i-1}) \to \mathrm{M}_3(\mathbb{F}_3)$. 

It remains to prove the claim for $i=0$, the leading constant term, and to conclude by induction thanks to Lagrange's theorem on group orders.

For $i=0$, we simply plug in $t=0$, or what is the same $x=1$, into \eqref{align:matricesPsi} to get formulas for $\widetilde{\Psi}_0: \Gamma \to \Gamma_0 \subseteq \GL_3(\mathbb{F}_3)$ as follows:
\[
\widetilde{\Psi}_0(a_1) = \left(\begin{array}{rrr}
1 & - 1 & -1 \\
- 1 & - 1 & 0 \\
- 1 & 0 & 0
\end{array}\right),
\quad
\widetilde{\Psi}_0(a_2)  =  \left(\begin{array}{rrr}
1 & 0 & 0 \\
0 & 1 & 0\\
0 & 0 & 1
\end{array}\right) ,
\]
\[
\widetilde{\Psi}_0(a_3)  =  \widetilde{\Psi}_0(a_4)  = \left(\begin{array}{rrr}
0 & 0 & - 1 \\
0 & -1 & 1 \\
- 1 &  1 & 1
\end{array}\right) ,
\]
which generate a cyclic group of order $3$, namely $\Gamma_0$.
\end{proof}

In fact, the kernel $N_i = \ker(\widetilde{\Psi}_i \colon \Gamma \to \Gamma_i)$ is by construction described by congruences modulo $t^{i+1}$, and --- unraveling the definition of \cite{SV,RSV} --- we see that $N_i$ is a congruence subgroup of the lattice $\Gamma$. There is a corresponding infinite series of square complexes $P_i$, the quotient of the product of trees $T_4 \times T_4$ by $N_i$, with the number of vertices being a power of 3. Indeed, the $3$-group $\Gamma_i = \Gamma/N_i$ acts simply transitively on the vertices of $P_i$. The $1$-skeleton of $P_i$ is a the Cayley graph $G_i = \mathcal{C}(\Gamma_i,S_i)$, for $S_i$ the image in $\Gamma_i$ of  the set of generators $\{a_1^{\pm 1}, a_2^{\pm 1}, a_3^{\pm 1}, a_4^{\pm 1}\}$ of $\Gamma$ as considered in \cite{SV,RSV}. It follows from \cite{RSV} that these $G_i$ form an infinite series of Cayley graphs expanders of valency $8$ for $3$-groups.

\begin{theorem}\label{thm:good_expanders}
The quaternionic lattice $\Gamma_2$ of page 457 of \cite{SV}, introduced above as $\Gamma$, is an infinite group with two generators $x_0,x_1$, and an infinite sequence $\{N_i\}_{i\in \mathbb{N}}$ of normal subgroups such that the following are true:
\begin{enumerate}
    \item 
    The indices $[\Gamma:N_i]=n_i$ are powers of $3$.
    \item 
    Let $v_0=x_0N_i$ (resp.\ $v_1=x_1N_i$) be the image of the generator $x_0$ (resp.\ $x_1$) under the quotient map $\Gamma \to \Gamma_i := \Gamma/N_i$.
    The orders of 
    the four subgroups $H^1_i$, $H^2_i$, $H^3_i$, $H^4_i$ of $\Gamma_i$, generated by $v_0$, $v_1$, $v_1^{-1}v_0$, $v_1v_0$, respectively, converge to infinity as $i$ increases. 
    \item 
    There exists a positive constant $c>0$ such that the set $\mathcal{G}=\{G_1,G_2,\dots\}$ of Cayley graphs 
    \[
    G_i=\mathcal{C}(\Gamma_i,\{v^{\pm 1}_0,v^{\pm 1}_1\})
    \] 
    is a family of $(n_i,4,c)$-expanders.
\end{enumerate}
\end{theorem}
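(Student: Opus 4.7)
The plan is to first reduce $\Gamma$ to a $2$-generator presentation, after which each of (1)--(3) follows from material already assembled in the paper. The key algebraic input is that the defining relators of $\Gamma$ permit solving for $a_2$ and $a_4$: from $a_1 a_3^{-1} a_2 a_3^{-1} = 1$ one obtains $a_2 = a_3\, a_1^{-1}\, a_3$, and from $a_1 a_3 a_1 a_4 = 1$ one obtains $a_4 = a_1^{-1}\, a_3^{-1}\, a_1^{-1}$. Setting $x_0 := a_1$ and $x_1 := a_3$ therefore gives $\Gamma = \langle x_0, x_1 \rangle$; the remaining two defining relations of $\Gamma$ translate into relations in $x_0, x_1$ but play no further role.

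Claim (1) is immediate from Lemma~\ref{lem:residuallypro-3}: $\Gamma_i = \Gamma/N_i$ is a $3$-group, so $n_i = |\Gamma_i|$ is a power of $3$. For Claim (2), I would verify that each of $x_0,\, x_1,\, x_1^{-1} x_0,\, x_1 x_0$ is nontrivial in $\Gamma$; one direct way is to compare the explicit matrices $\widetilde{\Psi}(a_1)$ and $\widetilde{\Psi}(a_3)$ (which differ already in the $(1,1)$-entry) and invoke the faithfulness of $\widetilde{\Psi}$ (Lemma~\ref{lem:faithfulreptoSL3}). Since $\Gamma$ is infinite and torsion-free, Lemma~\ref{lem:growingorderintrucatedquotients} then applies to each of these four elements and shows that the order of its image in $\Gamma_i$ tends to infinity. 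The order of the cyclic group generated by such an image coincides with $|H^j_i|$, which therefore also tends to infinity.

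For Claim (3) I would invoke the change-of-generators argument from Section~\ref{sec:3_change_of_gens}. By the result of \cite{RSV} recalled at the end of Section~\ref{sec:3_cayley_expanders}, the Cayley graphs $\mathcal{C}(\Gamma_i, \widetilde{S}_i)$, with $\widetilde{S}_i$ the image of $\{a_1^{\pm 1}, a_2^{\pm 1}, a_3^{\pm 1}, a_4^{\pm 1}\}$, form an $8$-valent expander family. The identities above express every element of $\{a_1^{\pm 1}, a_2^{\pm 1}, a_3^{\pm 1}, a_4^{\pm 1}\}$ as a word of length at most~$3$ in $\{x_0^{\pm 1}, x_1^{\pm 1}\}$, with bound independent of $i$. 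The Rayleigh-quotient estimates of Section~\ref{sec:3_change_of_gens} therefore yield a uniform positive lower bound on the Laplacian spectral gap $\mu_1\bigl(\mathcal{C}(\Gamma_i, \{v_0^{\pm 1}, v_1^{\pm 1}\})\bigr)$; combining this with the standard equivalence between spectral gap and combinatorial vertex expansion for bounded-degree graphs (as recalled in Section~\ref{sec:3_change_of_gens}) produces the required constant $c > 0$.

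In my view the only step with genuine content is the first one: the reduction from four to two generators is specific to the explicit shape of the relators of $\Gamma_{3;0,1}$ and does not generalise to arbitrary quaternionic lattices. Once it is in hand, the rest of the theorem is a routine combination of Lemmas~\ref{lem:residuallypro-3}, \ref{lem:growingorderintrucatedquotients}, and~\ref{lem:faithfulreptoSL3} with the change-of-generators principle and the expander family of \cite{RSV}.
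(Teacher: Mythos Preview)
Your proposal is correct and follows essentially the same route as the paper. The only cosmetic differences are that the paper picks $x_0=a_2,\ x_1=a_3$ (using $a_1=a_3a_2^{-1}a_3$ and $a_4=a_2a_3a_2$) rather than your $x_0=a_1,\ x_1=a_3$, and it verifies nontriviality of the four elements via the homomorphism $\Gamma\to\mathbb{Z}/2\mathbb{Z}\times\mathbb{Z}/2\mathbb{Z}$ counting parity of even- and odd-indexed $a_i$'s rather than by inspecting matrix entries; otherwise the invocation of Lemmas~\ref{lem:residuallypro-3}, \ref{lem:growingorderintrucatedquotients}, \ref{lem:faithfulreptoSL3} and the change-of-generators argument is identical.
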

\begin{proof}
The generators are the elements $x_0 = a_2$ and $x_1 = a_3$ of above. These two elements generate $\Gamma$ because the other defining generators can be written (using the defining relations) as
\[
a_1 = a_3 a_2^{-1} a_3, \quad a_4 = a_2 a_3 a_2.
\]
The normal subgroups are the groups 
$N_i= \ker(\widetilde{\Psi}_i)$ of above. The indices $[\Gamma:N_i] = \#\Gamma_i$ are powers of $3$ due to Lemma~\ref{lem:residuallypro-3}. Lemma~\ref{lem:growingorderintrucatedquotients} shows the assertion on the asymptotic of the orders of the images of specific elements. In order to apply this to $v_0$, $v_1$, $v_1^{-1}v_0$ and $v_1v_0$ we must shoe that $x_0$, $x_1$, $x_1^{-1}x_0$ and $x_1x_0$ are all non-trivial. Consider the homomorphism $\Gamma \to \mathbb{Z}/2\mathbb{Z} \times \mathbb{Z}/2\mathbb{Z}$ given by counting modulo $2$ the number of even and odd indexed $a_i$ occurring in a word representing a group element of $\Gamma$. Then $x_0$, $x_1$, $x_1^{-1}x_0$ and $x_1x_0$ all have nontrivial image: $(1,0)$, $(0,1)$, $(1,1)$ and $(1,1)$ respectively, hence these elements are non-trivial.

The expander property for the Cayley graphs $\mathcal{C}(\Gamma_i,S_i)$, for the larger set of generators $S_i$ that is the image in $\Gamma_i$ of  the set of generators $\{a_1^{\pm 1}, a_2^{\pm 1}, a_3^{\pm 1}, a_4^{\pm 1}\}$ of $\Gamma$, was recalled before the theorem. The result of Section~\ref{sec:3_change_of_gens} shows that the Cayley graphs 
$G_i=\mathcal{C}(\Gamma_i,\{v^{\pm 1}_0,v^{\pm 1}_1\})$
of $\Gamma_i$ with respect to the generating set given by the images of $\{a_2^{\pm}, a_3^{\pm}\}$ in $\Gamma_i$ is still a sequence of expander graphs (although with a different expansion constant). The valency is now $4$ and the vertex set has cardinality $n_i$, a power of $3$. 
\end{proof}

\subsection{\texorpdfstring{Explicit Cayley Graphs $p$-Expanders with $p-2$ Generators}{Explicit Cayley Graphs p-Expanders with p-2 Generators}}
\label{sec:morelattices}
We recall the explicit description of the lattices from \cite[Section 2.8]{RSV}. For details we refer to loc. cit.

Let $p \ge 2$ be a prime number and let $\mathbb{F}_{p^2}$ be the field with $p^2$ elements. Its multiplicative group $\mathbb{F}_{p^2}^\times$ is cyclic, and we fix a generator $\delta \in \mathbb{F}_{p^2}^\times$. We define for $k,j \in \mathbb{Z}/(p^2-1)\mathbb{Z}$ such that $k \not\equiv j \pmod {p-1}$ the elements $x_{k,j}, y_{k,j} \in \mathbb{Z}/(p^2-1)\mathbb{Z}$ uniquely by 
\[
\delta^{x_{k,j}} = 1 + \delta^{j-k}, \quad \delta^{y_{k,j}} = 1 + \delta^{k-j}.
\]
(This is possible since $\delta^{j-k} \not= -1$.) 
We set further in $\mathbb{Z}/(p^2-1)\mathbb{Z}$:
\[
i(k,j) = j - y_{k,j}(p-1), \quad \ell(k,j) = k - x_{k,j}(p-1).
\]
We now fix two elements $\alpha \not= \beta \in \mathbb{Z}/(p-1)\mathbb{Z}$, consider the reduction modulo $p-1$ given by 
\[
\pr: \mathbb{Z}/(p^2-1)\mathbb{Z} \to \mathbb{Z}/(p-1)\mathbb{Z},
\]
and define $q+1$-element sets $K = \pr^{-1}(\alpha)$ and $J = \pr^{-1}(\beta)$. Since $p-1$ divides $\mu = (p^2-1)/2$, the sets $K$ and $J$ are preserved under translation by $\mu$.

The group $\Gamma_{p;\alpha, \beta}$ (the dependence on $\delta$ is implicit) is defined by generators
$a_k$ for $k \in K$ and $b_j$ for $j \in J$ 
subject to the relations: for all $k \in K$ and $j \in J$ we have
$a_{k} a_{k + \mu} = 1$, and $b_j b_{j+\mu} = 1$, and 
\begin{equation} \label{eq:squarerelation}
 a_k b_j  a_{\ell(k,j)}^{-1} b_{i(k,j)}^{-1} = 1.
\end{equation}

It was proven in \cite{RSV} that $\Gamma_{p;\alpha,\beta}$ is a quaternionic arithmetic lattice of rank $2$, and residually pro-$p$ by congruence $p$-group quotients. Moreover, congruence quotients yield Cayley graphs with respect to the given generators $A = \{a_k; k \in K\}$ together with $B = \{b_j; j \in J\}$ that form a sequence of expanders (as the $1$-skeleton of $2$-dimensional Ramanujan expander complexes). The valency of these graphs is $\#(K \cup J) = 2(p+1)$.

For the applications in this note we must reduce the number of essential generators (not counting inverses).

\begin{proposition}
\label{prop:generateWithBandoneA}
The group $\Gamma_{p;\alpha,\beta}$ can be generated by $(p+3)/2$ elements. More precisely, half of the elements of $B$ (omitting inverses) together with one $a \in A$ generate $\Gamma_{p;\alpha,\beta}$.
\end{proposition}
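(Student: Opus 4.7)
The plan is to reduce the proposition to a combinatorial transitivity statement about the permutation action on the index set $K$ defined by the square relations, and then to invoke the analysis of the local permutation structure of the underlying square complex from \cite{SV}.

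First, I would fix any $a := a_{k_0}$ with $k_0 \in K$ and pick $B' \subseteq B$ consisting of one representative from each pair $\{b_j, b_{j+\mu}\}$; since $J$ is preserved under translation by $\mu$ and $|J| = p + 1$ is even (as $p$ is odd), this gives $|B'| = (p+1)/2$, and hence $|B' \cup \{a\}| = (p+3)/2$. Let $\Delta := \langle \{a\} \cup B'\rangle \leq \Gamma_{p;\alpha,\beta}$. From the relation $b_j b_{j+\mu} = 1$ we obtain $b_{j+\mu} = b_j^{-1} \in \Delta$ for every $j \in J$, and therefore $B \subseteq \Delta$. It thus suffices to prove that $A \subseteq \Delta$.

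Second, I would rewrite the square relation \eqref{eq:squarerelation} as
\[
a_{\ell(k,j)} = b_{i(k,j)}^{-1}\, a_k\, b_j \qquad (k \in K,\ j \in J).
\]
Since all of $B$ already lies in $\Delta$, this identity shows that the set
\[
K_0 := \{k \in K : a_k \in \Delta\}
\]
is stable under every ``link map'' $\lambda_j \colon K \to K$, $k \mapsto \ell(k,j)$, contains $k_0$, and is also closed under $k \mapsto k + \mu$ (because $a_{k+\mu} = a_k^{-1}$). The proposition therefore reduces to the purely combinatorial assertion that the orbit of $k_0$ under the family of maps $\{\lambda_j : j \in J\}$ together with translation by $\mu$ exhausts $K$.

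Finally, this combinatorial transitivity is where I expect the main difficulty to lie, and it is where I would appeal to \cite{SV}. Geometrically, the $\lambda_j$ encode the local permutation structure at a vertex of the square complex associated to $\Gamma_{p;\alpha,\beta}$: each corner pair $(a_k, b_j)$ at a vertex closes up to a unique square with opposite sides labelled $a_{\ell(k,j)}$ and $b_{i(k,j)}$, and \cite{SV} studies the permutation group on $K \sqcup J$ generated by these maps; the required input is that its restriction to $K$ acts transitively, which combined with the reductions above forces $K_0 = K$. As an alternative to invoking \cite{SV} directly, one could attempt a hands-on verification of transitivity from the explicit formula $\ell(k,j) = k - x_{k,j}(p-1)$ with $\delta^{x_{k,j}} = 1 + \delta^{j-k}$, exploiting the transitive action of $\mathbb{F}_{p^2}^\times$ on $\mathbb{F}_{p^2}^\times$ by multiplication to show that the differences $\ell(k,j) - k$ realise enough residues $(p-1) \mathbb{Z}/(p^2-1)\mathbb{Z}$ within the coset $K = \pr^{-1}(\alpha)$ to connect any two elements of $K$.
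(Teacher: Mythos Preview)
Your proposal is correct and follows essentially the same route as the paper: both arguments rewrite the square relation as $a_{\ell(k,j)} = b_{i(k,j)}^{-1} a_k b_j$, observe that once all of $B$ lies in the subgroup the set $\{k : a_k \in \Delta\}$ is closed under the link maps $k \mapsto \ell(k,j)$, and then reduce to transitivity of the permutation group these maps generate on $K$, which is supplied by \cite[Proposition~35]{SV}. Your write-up is in fact slightly more explicit than the paper's (counting $(p+3)/2$ carefully and noting closure under $k \mapsto k+\mu$), but the key idea and the appeal to \cite{SV} are the same.
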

\begin{proof}
The relations \eqref{eq:squarerelation} comes from the squares in the one vertex square complex with complete bipartite link whose fundamental group $\Gamma_{p;\alpha,\beta}$ is. It follows that for all $j \in J$ the map
\[
\sigma_j \colon K \to K, \quad \sigma_j(k) =  \ell(k,j)
\]
is bijective. The argument of \cite[Proposition 35]{SV} works also for the groups $\Gamma_{p;\alpha,\beta}$ and shows that the group $P_B$ generated by all $\sigma_j$, for $j \in J$ in the symmetric group on the set $K$ acts transitive on $K$. Since the relation \eqref{eq:squarerelation} shows that 
\[
a_{\sigma_j(k)} = b_{i(k,j)}^{-1} a_k b_j,
\]
a subgroup containing all $b \in B$ will contain with any $a_k$ also the $a_{k'}$ for $k' \in K$ in the $P_B$-orbit of $k$. By transitivity of $P_B$ on $K$ this automatically involves all of $A$. This proves the theorem.
\end{proof}

\begin{example} \label{example:p5actionOnT6xT6generators3}
The following is an explicit example for the above construction for $p=5$ and $\alpha = 0, \beta = 2$ in $\mathbb{Z}/4\mathbb{Z}$. The group $\Gamma_{5; 0,2}$ is generated by elements (indices are to be considered modulo $24$)
\[
a_0,a_4,a_8, 
b_2,b_6,b_{10}, 
\]
and $9$ relations of length $4$
\[
a_{0}b_{2}a_{0}b_{10}, \ 
a_{0}b_{6}a_{0}^{-1}b_{6}^{-1}, \ 
a_{0}b_{2}^{-1}a_{4}^{-1}b_{2}^{-1}, \ 
a_{0}b_{10}^{-1}a_{8}b_{10}^{-1}, \ 
a_{4}b_{6}a_{4}b_{2}^{-1},  
\]
\[
a_{4}b_{10}a_{4}^{-1}b_{10}^{-1}, \ 
a_{4}b_{6}^{-1}a_{8}^{-1}b_{6}^{-1}, \ 
a_{8}b_{2}a_{8}^{-1}b_{2}^{-1}, \ 
a_{8}b_{10}a_{8}b_{6}^{-1}.  
\]
A direct inspection shows that $\Gamma_{5;0,2}$ can be generated by $2$ elements, for example $a_0, b_2$.
\end{example}

\begin{theorem}
\label{thm:savegeneratorsforpquotients}
Let $p \ge 3$ be a prime number, and let $d \geq 2$ be an integer. We assume that $d \ge (p+3)/2$ if $p \ge 7$.

There are infinitely many finite $p$-groups $\Gamma_i$ with order tending to infinity and  generating sets $T_i$ of cardinality $d$ such that the Cayley graphs $\mathcal{C}(\Gamma_i, T_i)$ form a family of expanders with number of vertices a power of $p$ and valency $2d$.

In particular, for $p \ge 5$ we may take $d = p-2$. 
\end{theorem}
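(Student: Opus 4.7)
The plan is to assemble the theorem from three ingredients already developed in this section: the arithmetic lattices $\Gamma_{p;\alpha,\beta}$ have congruence $p$-group quotients whose Cayley graphs with respect to the natural generating set $A \cup B$ of size $2(p+1)$ form an expander family (by \cite{RSV}, recalled before \cref{prop:generateWithBandoneA}); these lattices can in fact be generated by $(p+3)/2$ elements (\cref{prop:generateWithBandoneA}); and the expander property is preserved under changes of generating set that are uniformly bounded in word length on both sides (\cref{sec:3_change_of_gens}). I will proceed by cases in the prime $p$, first establishing the minimal-valency base case and then inflating to arbitrary admissible $d$.

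For the base case at minimal valency, I would handle each prime separately. When $p=3$ and $d=2$, the required family is provided verbatim by \cref{thm:good_expanders}. When $p=5$ and $d=2$, \cref{example:p5actionOnT6xT6generators3} exhibits a two-element generating set $\{a_0, b_2\}$ of $\Gamma_{5;0,2}$; combining the expander property of the congruence $5$-group quotients with the natural $12$-element symmetric generating set with the change-of-generators principle of \cref{sec:3_change_of_gens}, the Cayley graphs of the same quotients with respect to the symmetric closure $\{a_0^{\pm 1}, b_2^{\pm 1}\}$ remain expanders, yielding a family of valency $4$ with $5$-power vertex sets. When $p \ge 7$ and $d=(p+3)/2$, \cref{prop:generateWithBandoneA} provides a generating set of this cardinality formed by half of $B$ and one element of $A$; the missing generators $a_k \in A$ are recovered from the uniformly short relation $a_{\sigma_j(k)} = b_{i(k,j)}^{-1} a_k b_j$ used inside its proof, so all elements of the original $(2p+2)$-element symmetric set are words of bounded length in the reduced symmetric set and conversely. \cref{sec:3_change_of_gens} again transfers expansion, producing an expander family of valency $p+3$ with $p$-power vertex sets.

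To pass from the minimal-valency family to any strictly larger admissible valency $2d$, I adjoin $d - d_0$ additional generators (say, fixed short words in the base generators, such as distinct powers of a single fixed generator) to the symmetric set; because the $\Gamma_i$ form a residually pro-$p$ tower that approximates the infinite group $\Gamma_{p;\alpha,\beta}$, for all sufficiently large $i$ these new elements are pairwise distinct from each other and from the original generators, and since the $\Gamma_i$ are $p$-groups of odd order they contain no involutions, so the symmetric closure has cardinality exactly $2d$. Each new generator is a bounded-length word in the old and vice versa, so \cref{sec:3_change_of_gens} yields an expander family of valency $2d$. The last special claim, that $d = p - 2$ works for all $p \ge 5$, follows because $p - 2 \ge (p+3)/2$ is equivalent to $p \ge 7$, and is therefore handled either by the $p=5$ base case lifted by the above inflation to $d=3$, or by the $p \ge 7$ base case directly. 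The conceptual obstacles — the deep number-theoretic input from \cite{RSV} producing the arithmetic Ramanujan-type expanders and the combinatorial reduction of generators in \cref{prop:generateWithBandoneA} — have already been overcome; what remains here is essentially a bookkeeping argument that combines these results through the quasi-isometry invariance of spectral expansion.
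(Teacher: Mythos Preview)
Your proposal is correct and follows essentially the same route as the paper: split into the base cases $p=3$ (via \cref{thm:good_expanders}), $p=5$ (via \cref{example:p5actionOnT6xT6generators3}), and $p\ge 7$ (via \cref{prop:generateWithBandoneA}), invoke the residually pro-$p$ congruence quotients from \cite{RSV}, transfer expansion through \cref{sec:3_change_of_gens}, and then pad the generating set to reach larger $d$. Your treatment is in fact slightly more careful than the paper's on one point: you explicitly argue that the enlarged symmetric set has cardinality exactly $2d$ in $\Gamma_i$ for large $i$ (distinctness via residual pro-$p$, absence of involutions since $p$ is odd), whereas the paper simply adds ``arbitrary elements if necessary'' without comment.
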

\begin{proof}
We consider the groups constructed in \cite[Section 2.8]{RSV} as recalled in Section~\ref{sec:morelattices} with notation $\Gamma_{p;\alpha,\beta}$. By Proposition~\ref{prop:generateWithBandoneA}, these groups can be generated by $(p+3)/2$ elements. Moreover, for $p=5$ we take the group $\Gamma_{5;0,2}$ considered in 
Example~\ref{example:p5actionOnT6xT6generators3}, which can be generated by $2$ elements. And for $p=2$ we take the $2$-generated group $\Gamma_{3,0,1}$ of Theorem~\ref{thm:good_expanders}. For any $d$ as in the statement of the theorem, we may therefore choose a set $T$ of $d$ generators of the respective infinity group $\Gamma_{p;\alpha,\beta}$ (adding arbitrary elements if necessary for values of $d$ larger than the given minimal values).

If $p \ge 7$, then $p-2 \ge (p+3)/2$ and $d=p-2$ is a possible choice. For $p=5$, we have $p-2 \ge 2$, so $d=p-2$ is a valid choice for all $p \ge 5$.

It follows from \cite[Proposition 2.22]{RSV} that these groups $\Gamma_{p;\alpha,\beta}$ are residually pro-$p$ with respect to a suitable infinite sequence of congruence subgroup quotients of $p$-power order. 

The standard generating sets $A \cup B$ of $\Gamma_{p;\alpha, \beta}$ yield for the sequence of congruence quotients of $\Gamma_{p;\alpha, \beta}$ that the corresponding Cayley graphs form a series of expanders (as the $1$-skeleton of $2$-dimensional Ramanujan expander complexes), see \cite[Section 6]{RSV}. 

Indeed, by the results of Section~\ref{sec:3_change_of_gens}, all these Cayley graphs of the finite congruence $p$-group quotients of $\Gamma_{p;\alpha,\beta}$ with respect to the alternative set $T$ of $d$ generators are still expanders  (but not necessarily Ramanujan). This proves the theorem.
\end{proof}

Let us conclude this section by emphasizing that, in combination, Theorems~\ref{thm:good_expanders} and~\ref{thm:savegeneratorsforpquotients} yield Theorem~\ref{thm:main_expanders_intro}.

\section{Lower Bounds for (modular) Subgraph Counting}\label{sec:lowerbounds}

\subsection{Counting Subgraphs with Minor-Closed Properties}\label{sec:minorclosed}
In this section, we will prove the full classification for counting small subgraphs with minor-closed properties, which we restate for convenience:

\minclosehard*

We will begin with an outline of the proof.
Let $\Phi$ be a minor-closed graph property of unbounded matching number that is not trivially true. We will establish hardness of the colourful version $\#\coledgesubsprob(\Phi)$; the reduction to the uncoloured version $\#\edgesubsprob(\Phi)$ will be an easy consequence of the inclusion-exclusion principle.

In proving hardness of $\#\coledgesubsprob(\Phi)$, we follow the approach of~\cite{RothSW20unpub}: We cast the problem as computing finite linear combinations of (coloured) homomorphism counts and prove that there exists a family of regular expanders that do not vanish in those linear combinations, which is known to be sufficient for hardness as expanders have high treewidth. 
It was shown in~\cite{RothSW20unpub} that the expanders do not vanish in the aforementioned linear combinations whenever their indicators, as defined in Section~\ref{sec:obstructions}, are non-zero, in which case we called the family of expanders an obstruction for $\Phi$. Consequently, the proof of Theorem~\ref{thm:minclose-hard} requires us to show that every non-trivial minor-closed graph property of unbounded matching number has an obstruction; note that this proof strategy is fully encapsulated by Lemma~\ref{lem:monotonicity}.

In~\cite{RothSW20unpub} it was shown that $\Phi$ always has an obstruction if each forbidden minor has a vertex of degree at least $3$. More precisely, the obstruction was given by a family of Cayley graph expanders of $2$-groups. However, this family of expanders is not an obstruction if $\Phi$ has a forbidden minor of degree at most $2$.

In what follows, we will therefore show that our novel Cayley graph expanders of $3$-groups constructed in Section~\ref{sec:3_cayley_expanders} are obstructions for the remaining cases.

For technical reasons, we have to consider two special cases which are dealt with separately in the following subsection: Properties that are true on all cycles, which we will call ``unsuitable'', and properties of so-called bounded wedge-number.

\subsubsection{Unsuitable Properties and Wedge-Numbers}
Let us write $H'\prec H$ if $H'$ is a minor of $H$. The degree of a graph is the maximum degree of its vertices. Graph union is denoted by $+$. Furthermore, given a graph $G$ with $k$ (distinct) edge colours and a graph property $\Phi$, it will be convenient to write $\coledgesubs{\Phi,k}{G}$ for the set of all $k$-edge subsets $A\subseteq E(G)$ such that $A$ contains each colour (precisely) once and $\Phi(G[A])=1$; observe that $\#\coledgesubsprob(\Phi)$ then just rewrites to the problem of computing the cardinality $\#\coledgesubs{\Phi,k}{G}$.

Given a minor-closed graph property $\Phi$, write $\mathcal{F}(\Phi)$ for the set of minimal forbidden minors of $\Phi$. By the Robertson-Seymour-Theorem~\cite{RobertsonS04}, we have that $\mathcal{F}(\Phi)$ is finite.
We furthermore write $\mathcal{F}_2(\Phi)$ for the subset of $\mathcal{F}(\Phi)$ containing only graphs of degree at most $2$. We say that $\Phi$ is \emph{suitable} if $\Phi(C_k)=0$ for some positive integer $k\geq 3$; otherwise $\Phi$ is called \emph{unsuitable}.

Observe that every unsuitable minor-closed graph property $\Phi$ has unbounded matching number: Assuming otherwise, we obtain that $\Phi(M_d)=0$ for some positive integer $d$. But $M_d \prec C_{2d}$, thus $\Phi(C_{2d})=0$ since $\Phi$ is minor-closed.

The following lemma allows us to show that every unsuitable property can be reduced from a suitable one; we also include a modular version of the reduction which will be required later.

\begin{lemma}\label{lem:minor_trick}
	Let $\Phi$ be a minor-closed graph property. If $\mathcal{F}_2(\Phi)\neq \emptyset$, but $\Phi$ is unsuitable, then there exists a graph property $\Psi$ such that
	\begin{enumerate}
		\item $\Psi$ is minor-closed and of unbounded matching number,
		\item $\Psi$ is suitable, and
		\item $\#\coledgesubsprob(\Psi) \fptred \#\coledgesubsprob(\Phi)$ and, on input $G$ and $k$, every oracle query $(G',k')$ of the reduction satisfies $|G'|\in O(|G|)$ and $k' \in O(k)$.
		\item For each prime $p$ we have $\#_p\coledgesubsprob(\Psi) \fptred \#_p\coledgesubsprob(\Phi)$ and, on input $G$ and $k$, every oracle query $(G',k')$ of the reduction satisfies $|G'|\in O(|G|)$ and $k' \in O(k)$.
	\end{enumerate}
\end{lemma}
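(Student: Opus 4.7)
The plan is to set $\Psi(H):=\Phi(H+D)$ for $D$ obtained by removing one cycle component from a carefully chosen $F^* \in \mathcal{F}_2(\Phi)$, and to reduce $\#\coledgesubsprob(\Psi)$ to $\#\coledgesubsprob(\Phi)$ by adjoining $D$ to the host graph using fresh edge colours.

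First I would record the structural consequences of the hypotheses. Combining unsuitability with the observation that the minors of a cycle $C_m$ are either a single cycle or a disjoint union of paths forces every $F \in \mathcal{F}_2(\Phi)$ to have at least two components, at least one of which is a cycle: if $F$ were a single cycle then $\Phi(F)=0$ directly, and if $F$ were a disjoint union of paths then $F \prec C_m$ for large $m$ and minor-closedness would give $\Phi(C_m)=0$, both contradicting unsuitability. Writing $c(F)$ for the number of cycle components of $F$, I would then pick $F^* \in \mathcal{F}_2(\Phi)$ minimising $c(F^*)$, fix any cycle component $F_0$ of $F^*$, and set $D := F^* - F_0$. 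Minor-closedness of $\Psi$ follows from that of $\Phi$ together with $H' \prec H \Rightarrow H'+D \prec H+D$, and suitability is immediate: for $a = |V(F_0)|$ we have $C_a + D \succeq F_0 + D = F^*$, whence $\Psi(C_a) = \Phi(C_a + D) = 0$.

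The main step is verifying that $\Psi$ has unbounded matching number, i.e.\ that no $F' \in \mathcal{F}(\Phi)$ is a minor of $M_k + D$ for any $k$. Since $M_k + D$ has maximum degree at most $2$, any such $F'$ must lie in $\mathcal{F}_2(\Phi)$. Each cycle component of $F'$ is connected, so its branch sets lie inside a single component of $M_k + D$, and that component must contain a cycle to host the cycle minor; since $M_k$ is acyclic, it must be a cycle component of $D$. Because a single cycle cannot admit two vertex-disjoint cycle minors (again using that minors of $C_m$ are single cycles or disjoint unions of paths), distinct cycle components of $F'$ must occupy distinct cycle components of $D$. This yields $c(F') \leq c(D) = c(F^*) - 1 < c(F^*)$, contradicting the minimality of $c(F^*)$; hence $\Psi(M_k) = 1$ for every $k$.

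For the Turing reduction, given an instance $(G,k)$ of $\#\coledgesubsprob(\Psi)$, I would form $G' := G \sqcup D$ and colour the edges of $D$ by $|E(D)|$ fresh pairwise distinct colours. Since each new colour appears only once in $G'$, every colour-respecting $(k+|E(D)|)$-edge subset of $G'$ has the form $A \sqcup E(D)$ for a colour-respecting $k$-edge subset $A$ of $G$, and $\Phi(G'[A \sqcup E(D)]) = \Phi(G[A]+D) = \Psi(G[A])$. Thus a single oracle call on $(G', k+|E(D)|)$ returns $\#\coledgesubs{\Psi,k}{G}$; since $|E(D)|$ is a constant depending only on $\Phi$, the size bounds $|G'| \in O(|G|)$, $k' \in O(k)$ hold. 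The identical reduction gives part~(4) verbatim, as no arithmetic is performed on the oracle's output. The main obstacle is the cycle-packing argument in the unbounded-matching step, which hinges on the restrictive structure of minors of a single cycle.
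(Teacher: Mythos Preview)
Your proposal is correct and follows essentially the same approach as the paper's proof: the paper also picks $F \in \mathcal{F}_2(\Phi)$ with the minimum number of cycle components, writes $F = C + R$ (so your $D$ is the paper's $R$), defines $\Psi(H) := \Phi(H+R)$, and carries out the same reduction by adjoining $R$ with fresh colours. Your treatment of the unbounded matching number step is slightly more explicit than the paper's (which simply asserts that any minor of $M_d + R$ has strictly fewer cycles than $F$), but the underlying argument is the same.
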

\begin{proof}
	Since $\Phi$ is not suitable, it is true on arbitrarily large cycles $C_k$. Since any finite union of paths is a minor of a sufficiently large cycle, any such union also satisfies $\Phi$. Now any graph $F \in \mathcal{F}_2$ is a union of circles and paths\footnote{We see an isolated vertex as a path of length $0$.} and by the argument above, it must contain at least one circle. Using again that $\Phi$ is not suitable, we infer that each graph in $\mathcal{F}_2(\Phi)$ is the union of a cycle and a non-empty path-cycle-packing. Since $\mathcal{F}_2(\Phi)\neq \emptyset$, we can choose $F\in\mathcal{F}_2(\Phi)$ such that the number of cycles in $F$ is minimal among all graphs in $\mathcal{F}_2(\Phi)$. Thus $F=C + R$ for a cycle $C$ and a (non-empty) path-cycle-packing~$R$. We define
	\[ \Psi(H)=1 :\Leftrightarrow \Phi(H+R)=1 \,.\]
	Let us now prove 1.\ - 4.:
	\begin{enumerate}
		\item First assume that $\Psi(H)=1$ and $H'\prec H$; thus $H'+R \prec H+R$. Then \[\Psi(H)=1 \Rightarrow \Phi(H+R)=1 \Rightarrow \Phi(H'+R)=1 \Rightarrow \Psi(H')=1\,.\] 
		The second implication holds as $\Phi$ is minor-closed. Consequently, $\Psi$ is minor-closed as well.
		
		Next assume for contradiction that $\Psi$ has bounded matching number. Then there exists a positive integer $d$ such that $\Psi(M_d)=0$ which implies that $\Phi(M_d + R)=0$. Since $M_d +R$ is of degree $2$, and every minor of a graph of degree $2$ has degree at most $2$, one of the graphs in $\mathcal{F}_2(\Phi)$ must be a minor of $M_d + R$. Observe that every minor of $M_d +R$ contains strictly fewer cycles than $F(=C+R)$. However, we choose $F$ in such a way that the number of its cycles is minimal among all graphs in $\mathcal{F}_2(\Phi)$, which yields the desired contradiction.
		
		\item Since $0=\Phi(F)=\Phi(C+R)$ we immediately obtain that $\Psi(C)=0$. Hence $\Psi$ is suitable.
		\item Let $r= \#E(R)$. The reduction is straightforward: Given $G$ with $k$ edge-colours for which we wish to compute $\#\coledgesubs{\Psi,k}{G}$, we construct $G'$ as follows: We add a disjoint copy of $R$ to $G$ and colour the edges of $R$ arbitrarily with $r$ fresh colours, yielding a $k+r$-edge-coloured graph $G'$ of size $|R|+|G|\in O(|G|)$ --- recall that $|R|$ is a constant. Then every $k+r$-edge-colourful subset of edges in $G'$ consists precisely of all edges of $R$ and a $k$-edge-colourful subset of edges in $G$. By definition of $\Psi$, we immediately obtain that
		\[\#\coledgesubs{\Psi,k}{G} = \#\coledgesubs{\Phi,k+r}{G'} \,,\]
		which completes the reduction.
		\item The construction used in the previous case applies to counting modulo $p$ as well.
	\end{enumerate}
With all cases verified, the proof is concluded.
\end{proof}

In the next part, we will consider properties of bounded \emph{wedge-number}: A \emph{wedge}, denoted by $P_2$, is a path with two edges, and a $k$\emph{-wedge-packing}, denoted by $kP_2$, is the disjoint union of $k$ wedges. We say that a graph property $\Phi$ has \emph{bounded wedge-number} if there exists a constant $d$ such that $\Phi(kP_2)=0$ for all $k\geq d$.

In what follows, we write $\#\colmatch$ for the problem of counting edge-colourful $k$-matchings, that is, on input a graph $G$ with $k$ edge-colours, the goal is to compute the number of $k$-matchings in $G$ that contain each colour precisely once. Similarly, given a prime~$p$, we write $\#_p\colmatch$ for the problem of counting edge-colourful $k$-matchings modulo~$p$. The problem $\#\colmatch$ is known to be hard~\cite[Theorem III.1]{CurticapeanM14}.\footnote{To be precise, \cite[Theorem III.1]{CurticapeanM14} establishes hardness if $G$ has \emph{at least} $k$ edge-colours. The case of $G$ having \emph{precisely} $k$ edge-colours is shown to be hard in~\cite[Section 5.2]{Curticapean15}.} Furthermore, $\#_p\colmatch$ is hard for each prime $p\geq 3$~\cite{CurticapeanDH21}. Observe that $\#\colmatch$ is equivalent to the problem $\#\coledgesubsprob(\Psi)$ for the property $\Psi$ of excluding $P_2$ as a minor; the same holds if counting is done modulo $p$. 

\begin{lemma}\label{lem:minor_trick_wedges}
Let $\Phi$ be a minor-closed graph property. If $\Phi$ has unbounded matching number, but bounded wedge-number, then
\[\#\colmatch \fptred \#\coledgesubsprob(\Phi) \text{ and } \#_p\colmatch \fptred \#_p\coledgesubsprob(\Phi)\,,\]
for each prime $p$.
For both reductions, on input $G$ and $k$, every oracle query $(G',k')$ satisfies $|G'|\in O(|G|)$ and $k' \in O(k)$.
\end{lemma}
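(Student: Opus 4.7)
The plan is to reduce $\#\colmatch$ to $\#\coledgesubsprob(\Phi)$ by augmenting the input graph with a fixed small gadget $R$, mirroring the reduction of Lemma~\ref{lem:minor_trick}; since the identity we establish will be exact, the same construction handles the modular version verbatim. To find the right $R$, I first identify a convenient forbidden minor of $\Phi$. Since $\Phi$ has bounded wedge-number, $\Phi(dP_2) = 0$ for some $d$, so some element of $\mathcal{F}(\Phi)$ is a minor of $dP_2$ and hence of the form $aP_2 + bM_1 + cP_0$ (a disjoint union of wedges, matching edges, and isolated vertices). A forbidden minor with $a = 0$ would be a minor of some matching $M_\ell$, contradicting that $\Phi$ is true on every matching. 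Therefore
\[
a^* := \min\bigl\{a \geq 1 : aP_2 + bM_1 + cP_0 \in \mathcal{F}(\Phi) \text{ for some } b,c \geq 0\bigr\}
\]
is well defined, and I fix one specific forbidden minor $F = a^*P_2 + bM_1 + cP_0 \in \mathcal{F}(\Phi)$ achieving this minimum. I then define $R := (a^*-1)P_2 + (b+c)M_1$, with $r := |E(R)| = 2(a^*-1) + b + c$.

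The key claim is that for every graph $H$, $\Phi(H + R) = 1$ if $H$ is a matching and $\Phi(H + R) = 0$ if $H$ contains a wedge. For the matching case, $H + R$ is acyclic and has maximum degree $2$; every minor of such a graph is itself a disjoint union of wedges, matching edges, and isolated vertices, so any forbidden minor of $\Phi$ realised in $H + R$ is of the form $a'P_2 + b'M_1 + c'P_0$. Minimality of $a^*$ forces $a' \geq a^*$, yet $H + R$ contains only $a^*-1$ vertex-disjoint wedges, a contradiction. For the wedge case, $H + R$ contains $a^*P_2 + (b+c)M_1$ as a subgraph, and $F$ is realised as a minor of the latter by deleting one endpoint from each of $c$ of the $M_1$-summands, thereby exposing the $c$ isolated vertices of $F$.

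The reduction now parallels Lemma~\ref{lem:minor_trick}: given $(G,k)$ for $\#\colmatch$, I form $G' := G \sqcup R$ and assign the edges of $R$ the $r$ fresh colours. Every $(k+r)$-edge-colourful subset of $E(G')$ decomposes uniquely as $A \sqcup E(R)$ for a $k$-edge-colourful $A \subseteq E(G)$, and the key claim gives
\[
\#\coledgesubs{\Phi, k+r}{G'} = \#\colmatch(G,k),
\]
from which both the exact reduction and, by reducing the identity modulo $p$, the modular reduction follow, with $|G'| = |G| + O(1)$ and new parameter $k+r = k + O(1)$.

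The main obstacle is the possibility that $F$ contains isolated vertices: a naïve $R = (a^*-1)P_2 + bM_1$ would fail to exhibit $F$ as a minor of $H + R$ when $|V(H+R)|$ falls below $|V(F)|$. Padding $R$ with $c$ additional matching edges supplies $2c$ extra vertices, $c$ of which may be freed as isolated vertices in a minor, which closes this gap without introducing any further vertex-disjoint wedges and so keeps the matching case intact.
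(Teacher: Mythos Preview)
Your proof is correct and follows essentially the same approach as the paper: both pin down the minimal number of wedges whose disjoint union with some matching falsifies $\Phi$, attach one fewer wedge (plus the requisite matching edges) as a fixed gadget $R$, and observe that a colourful edge set in $G'=G+R$ satisfies $\Phi$ iff its restriction to $G$ is a matching. The only difference is bookkeeping: the paper defines this minimum directly as $s=\min\{s:\exists b,\ \Phi(sP_2+M_b)=0\}$, which automatically avoids isolated vertices, whereas you extract it from a minimal forbidden minor $a^*P_2+bM_1+cP_0$ and then pad $R$ with $c$ extra matching edges to absorb the $P_0$'s---an issue you correctly anticipate and resolve (and indeed one checks $s=a^*$).
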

\begin{proof}
    We only prove $\#\colmatch \fptred \#\coledgesubsprob(\Phi)$; the same construction applies to modular counting.
    Since $\Phi$ has bounded wedge-number, we have $\Phi(dP_2)=0$ for some non-negative integer $d$.
    Let $s$ be the minimum non-negative integer such that
    \[\exists b\geq 0: \Phi(sP_2 +M_b) =0 \,.\]
    We have that $s\leq d$ (for $b=0$), and observe further that $s>0$, as $\Phi$ has unbounded matching number and is minor-closed.
    Let furthermore $b(s)$ be the minimum over all $b$ such that $\Phi(sP_2 +M_b) =0$. This enables us to construct the reduction as follows:
    
    Let $G$ be the input of $\#\colmatch$, that is, $G$ has $k$ edge-colours and the goal is to compute the number of edge-colourful $k$-matchings in $G$. 
    
    We set $G' := G+ (s-1)P_2 + M_{b(s)}$ and colour $(s-1)P_2 + M_{b(s)}$ with $2(s-1) + b(s)= \#E((s-1)P_2 + M_{b(s)})$ fresh colours. Setting $k' := k+2(s-1)+b(s)$ we observe that every $k'$-edge-colourful set $A'$ of edges of $G'$ decomposes into \[A' = A ~\dot\cup~ E((s-1)P_2 + M_{b(s)})\,,\] where $A$ is a $k$-edge-colourful set of edges in $G$.
    We claim that the number of edge-colourful $k$-matchings in $G$ is equal to $\#\coledgesubs{\Phi,k'}{G'}$.
    
    To verify the latter claim, we prove that $\Phi(G'[A'])=1$ if and only if~$A$ is a matching: If~$A$ is a matching, then $\Phi(G'[A'])=\Phi((s-1)P_2 + M_{k+b(s)})=1$, by our choice of $s$. 
    If $A$ is not a matching, then $G[A]$ contains a single wedge $P_2$ as a subgraph. Consequently $G'[A']$ contains $sP_2+ M_{b(s)}$ as a minor. Then $\Phi(G[A'])=0$ by our choice of $b(s)$.
    
    Our reduction thus computes $G'$ and returns $\#\coledgesubs{\Phi,k'}{G'}$ by querying the oracle. Since $s$, and thus $b(s)$ are fixed and independent of the input, we additionally obtain the desired conditions on the size of $G'$ and $k'$, which concludes the proof.
\end{proof}

\subsubsection{The Full Classification}

We start by establishing hardness for minor-closed properties that are both, suitable and of unbounded wedge-number.
\begin{lemma}\label{lem:obstruction}
Let $\Phi$ be a minor-closed graph property. If $\Phi$ is suitable and of unbounded wedge-number, then $\Phi$ has an obstruction and a $3$-obstruction.
\end{lemma}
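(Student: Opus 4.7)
The plan is to take $\mathcal{G} = \{G_i\}_{i\in\mathbb{N}}$ to be the family of $4$-regular Cayley graph expanders of $3$-groups furnished by Theorem~\ref{thm:good_expanders}, with $G_i = \mathcal{C}(\Gamma_i, \{v_0^{\pm 1}, v_1^{\pm 1}\})$ and $|V(G_i)| = n_i$ a power of $3$, and to show that $a(\Phi, G_i) \not\equiv 0 \pmod 3$ for all sufficiently large $i$; this simultaneously establishes $\mathcal{G}$ as a $3$-obstruction and (since nonzero mod $3$ implies nonzero) as an obstruction. I begin by collecting two consequences of the hypotheses. Since $\Phi$ is minor-closed and suitable, fix $k_0 \geq 3$ with $\Phi(C_{k_0})=0$, so any graph containing a subgraph cycle of length at least $k_0$ has $C_{k_0}$ as a minor and hence fails $\Phi$. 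Since $\Phi$ is minor-closed of unbounded wedge-number, for every $k$ there exists $k'\geq k$ with $\Phi(k'P_2)=1$, and then $k P_2 \prec k'P_2$ forces $\Phi(kP_2)=1$ for every $k$.

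Next I apply the standard $\Gamma_i$-equivariance argument exactly as in the bipartite discussion of the technical overview: the left-multiplication action of $\Gamma_i$ on $V(G_i) = \Gamma_i$ preserves $\mathcal{L}(\Phi, G_i)$ and the summand $\prod_v (-1)^{|\sigma_v|-1}(|\sigma_v|-1)!$, and since $|\Gamma_i|=n_i$ is a power of $3$ every non-trivial orbit has size divisible by $3$, so only fixed-point fractures contribute modulo $3$. Under the canonical identification $E_{G_i}(x) \leftrightarrow S_i$ via $\{x,xs\}\leftrightarrow s$ (preserved by the $\Gamma_i$-action), a fixed-point fracture is the same partition $\tau$ of the $4$-element set $S_i = \{v_0^{\pm 1}, v_1^{\pm 1}\}$ at every vertex. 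Iterating $z^3 \equiv z \pmod 3$ gives $z^{n_i} \equiv z \pmod 3$, so the product over $V(G_i)$ collapses and
\[
a(\Phi, G_i) \equiv \sum_{\tau} \bigl[\Phi(\fracture{G_i}{\tau})=1\bigr]\cdot (-1)^{|\tau|-1}(|\tau|-1)! \pmod 3,
\]
where $\tau$ ranges over the $15$ partitions of $S_i$.

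The core of the argument is then a classification of these partitions through a local inspection at each $x \in \Gamma_i$, tracing each of the four edges incident to $x$ through the blocks of $\tau$. The $4$ partitions of type $(2,1,1)$ whose size-$2$ block is one of $\{v_0,v_1\}, \{v_0,v_1^{-1}\}, \{v_0^{-1},v_1\}, \{v_0^{-1},v_1^{-1}\}$ (equivalently, no block contains an inverse pair) each produce $\fracture{G_i}{\tau}= n_i P_2$, a disjoint union of $n_i$ wedges centered at the size-$2$ block-copies, so they satisfy $\Phi$ by the second fact above and each contribute $(-1)^2\cdot 2!=2$. The unique finest partition ($|\tau|=4$) produces the perfect matching on $4n_i$ vertices and contributes $(-1)^3\cdot 3! = -6 \equiv 0 \pmod 3$, regardless of whether it lies in $\mathcal{L}(\Phi, G_i)$. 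Every remaining partition produces a fractured graph containing a cycle whose length tends to infinity with $i$: a block of $\tau$ containing $\{g,g^{-1}\}$ yields a $2$-regular subgraph with cycle components of length $\mathrm{ord}(g)$; the two remaining $(2,2)$-partitions $\{\{v_0,v_1\},\{v_0^{-1},v_1^{-1}\}\}$ and $\{\{v_0,v_1^{-1}\},\{v_0^{-1},v_1\}\}$ give $2$-regular graphs whose cycles have lengths $2\cdot\mathrm{ord}(v_0 v_1^{-1})$ respectively $2\cdot \mathrm{ord}(v_0 v_1)$ (obtained by alternating walks); and the trivial partition $|\tau|=1$ is $G_i$ itself, which already contains the cosets of $\langle v_0\rangle$ as cycles of length $\mathrm{ord}(v_0)$. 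By Theorem~\ref{thm:good_expanders}(2) each of these orders goes to infinity with $i$, so for $i$ large enough each such fractured graph contains $C_{k_0}$ as a minor and therefore fails $\Phi$ by the first fact above.

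Combining, for all sufficiently large $i$
\[
a(\Phi, G_i) \equiv 4 \cdot 2 \equiv 2 \not\equiv 0 \pmod 3,
\]
completing the proof. The main obstacle is the last step of the classification: for every non-forest partition one must pinpoint a concrete cyclic subgroup of $\Gamma_i$ whose cosets realise a cycle component of $\fracture{G_i}{\tau}$, and then apply Theorem~\ref{thm:good_expanders}(2) to force its order to infinity. Everything else is a finite and elementary bookkeeping over the $15$ partitions of a $4$-element set, which is exactly why the low-valency Cayley expanders of Section~\ref{sec:3_cayley_expanders} are useful here.
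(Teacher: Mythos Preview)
Your proof is correct and follows essentially the same approach as the paper: reduce the indicator modulo $3$ to a sum over $\Gamma_i$-fixed fractures (i.e., partitions of the $4$-element generating set), eliminate all partitions that produce a long cycle via Theorem~\ref{thm:good_expanders}(2), and observe that the four surviving $(2,1,1)$ partitions yield wedge-packings contributing $2$ each while the finest partition contributes $0\pmod 3$. The only cosmetic differences are that you use Fermat's little theorem to strip the exponent $|V(G_i)|$ up front and organise the case analysis by ``contains an inverse pair'' rather than by the paper's $\{\tl,\tr\}$/$\{\tu,\td\}$ claims; the cycle-length bookkeeping matches (noting $v_0v_1^{\pm 1}$ is conjugate to $v_1^{\pm 1}v_0$, so the same subgroups $H^3_i,H^4_i$ control the two non-inverse $(2,2)$ partitions).
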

\begin{proof}
We show that the family $\mathcal{G}=\{G_1,G_2,\dots\}$ of Theorem~\ref{thm:good_expanders} is a $3$-obstruction for $\Phi$ (since a $3$-obstruction is, by definition, also an obstruction, this proves the lemma). 

Since $\Phi$ is suitable, there exists $t$ such that $\Phi(C_t)=0$. Now choose any $i$ such that the orders of the groups $H^1_i$, $H^2_i$, $H^3_i$, and $H^4_i$ (from Theorem~\ref{thm:good_expanders}) are at least $t$. 
We will show that for each such choice of~$i$, the indicator $a(\Phi,G_i)$ is non-zero. Since, by Theorem~\ref{thm:good_expanders}, the orders of the four groups are unbounded, the indicator will then be non-zero infinitely often, proving that $\mathcal{G}$ is an obstruction.

Recall that
\[ a(\Phi,G_i) := \sum_{\sigma \in \mathcal{L}(\Phi,G_i)} ~\prod_{v \in V(G_i)} (-1)^{|\sigma_v|-1}\cdot (|\sigma_v|-1)! \,.\]
Recall further that $\mathcal{L}(\Phi,G_i)$ is the set of fractures $\rho$ of $G_i$ such that $\fracture{G_i}{\rho}$ satisfies $\Phi$. 

The graph $G_i$ is $4$-regular, and every vertex $v\in V(G_i)$ corresponds to a coset of the quotient group~$\Gamma/N_i$. Moreover, every $v$ is adjacent to $vv_0$, $vv_0^{-1}$, $vv_1$ and $vv_1^{-1}$, where $v_0$ and $v_1$ are the generators of $\Gamma/N_i$. It will be convenient to label the edges incident to $v$ by
\[ \tr = \{v,v v_0\}, \tl =\{v, vv_0^{-1}\}, \tu = \{v,v v_1\}, \td =\{v, vv_1^{-1}\} \]
Thus a fracture of $G_i$ is a tuple $\rho=(\rho_v)_{v\in V(G_i)}$ of partitions of the set $\{\tr,\tl,\tu, \td\}$. 

Similarly as in~\cite[Section 4]{RothSW20unpub}, we observe that the quotient group $\Gamma/N_i$ acts on the graph $G_i$ by setting $g\vdash v:= gv$ for each $g\in \Gamma/N_i$ and $v\in V(G_i) =\Gamma/N_i$. Moreover, this action is transitive and for each $g\in \Gamma/N_i$, the function $g\vdash \star$ is an automorphism of $G_i$.

The action $\vdash$ extends to an action $\Vdash$ of $\Gamma/N_i$ on the set $\mathcal{L}(\Phi,H)$: Given $g\in \Gamma/N_i$ and $\rho\in \mathcal{L}(\Phi,H)$, the fracture $g \Vdash \rho$ is obtained from $\rho$ by permuting its entries according to the automorphism $g\vdash \star$.
This action is well-defined since $\fracture{H}{\rho}$ is isomorphic to $\fracture{H}{(g \Vdash \rho)}$, see Section 4.2 of~\cite{RothSW20unpub}. Thus $\fracture{H}{(g \Vdash \rho)}$ satisfies $\Phi$ if and only if $\fracture{H}{\rho}$ does. In particular, setting
\[ f(\sigma):= \prod_{v \in V(G_i)} (-1)^{|\sigma_v|-1}\cdot (|\sigma_v|-1)! \,,\]
we have that $f(\sigma)=f(\rho)$ whenever $\sigma$ and $\rho$ are in the same orbit. 

This enables us to rewrite
\[ a(\Phi,G_i) := \sum_{[\rho]} \#[\rho]\cdot f(\rho) \,,\]
where the sum is over all orbits $[\rho]$ of the action. We will proceed by considering $a(\Phi,G_i)\mod 3$. Recall that the size of any orbit must divide the order of the group. Since $\Gamma/N_i$ is a $3$-group, only orbits of size~$1$, i.e., fixed-points, survive modulo $3$. Due to transitivity of the group action on the vertices, the only fixed-points are fractures~$\rho$ for which all~$\rho_v$ are equal. In what follows, we will thus abuse notation and write e.g.\ $\rho=\{ \{\tr,\tl\}, \{\tu\},\{\td\}\}$ for the fixed-point $\rho$ in which $\rho_v = \{ \{\tr,\tl\}, \{\tu\},\{\td\}\}$ for all $v\in V(G_i)$.

We will analyse the contribution to $a(\Phi,G_i)$ (modulo $3$) of any possible fixed-points in the subsequent series of claims.

\begin{claim}
If $\tl$ and $\tr$ are in the same block of $\rho$, then $\rho\notin \mathcal{L}(\Phi,G_i)$.
\end{claim}
\begin{claimproof}
If $\tl$ and $\tr$ are in the same block, then $\fracture{G_i}{\rho}$ contains the (simple) cycle $C_b$ given by
\[ e \rightarrow v_0 \rightarrow v_0^2 \rightarrow \dots \rightarrow  v_0^{b-1} \rightarrow e\,,\]
where $b$ is the order of $v_0$ (and thus equal to the order of $H^1_i$). By our choice of $i$, we have $b\geq t$. Consequently $C_t \prec \fracture{G_i}{\rho}$. Since $\Phi(C_t)=0$ and $\Phi$ is minor-closed, we conclude that $\Phi(\fracture{G_i}{\rho})=0$ and thus $\rho\notin \mathcal{L}(\Phi,G_i)$.
\end{claimproof}
\begin{claim}
If $\tu$ and $\td$ are in the same block of $\rho$, then $\rho\notin \mathcal{L}(\Phi,G_i)$.
\end{claim}
\begin{claimproof}
Analogously to the previous claim; substitute $v_0$ by $v_1$, and $H^1_i$ by $H^2_i$.
\end{claimproof}

\begin{claim}
$\{ \{\tu,\tr\},\{\td,\tl\} \}\notin \mathcal{L}(\Phi,G_i)$.
\end{claim}
\begin{claimproof}
The fractured graph $\fracture{G_i}{\{\{\tu,\tr\},\{\td,\tl\}\}}$ contains the (simple) cycle $C_{2b}$ given by
\[ e \rightarrow v_1^{-1} \rightarrow v_1^{-1}v_0 \rightarrow v_1^{-1}v_0v_1^{-1} \rightarrow (v_1^{-1}v_0)^2 \rightarrow \dots \rightarrow  (v_1^{-1}v_0)^{b-1} \rightarrow e\,,\]
where $b$ is the order of $v_1^{-1}v_0$ (and thus equal to the order of $H^3_i$). By our choice of $i$, we have $2b\geq t$. Consequently $C_t \prec \fracture{G_i}{\rho}$. Since $\Phi(C_t)=0$ and $\Phi$ is minor-closed, we conclude that $\Phi(\fracture{G_i}{\rho})=0$ and thus $\rho\notin \mathcal{L}(\Phi,G_i)$.
\end{claimproof}
\begin{claim}
$\{ \{\tu,\tl\},\{\td,\tr\} \}\notin \mathcal{L}(\Phi,G_i)$.
\end{claim}
\begin{claimproof}
Analogously to the previous claim; substitute $v^{-1}_1$ by $v_1$, and $H^3_i$ by $H^4_i$.
\end{claimproof}

The only partitions of $\{\tu,\td,\tl,\tr\}$ not covered by one of the previous four claims are the finest partition $\bot=\{\{\tu\}, \{\td\},\{\tl\},\{\tr\}\}$, as well as the following four:
\begin{itemize}
    \item $\rho_1 = \{\{\tu,\tl\},\{\td\},\{\tr\} \}$ 
    \item $\rho_2 = \{\{\tu,\tr\},\{\td\},\{\tl\} \}$ 
    \item $\rho_3 = \{\{\td,\tl\},\{\tu\},\{\tr\} \}$ 
    \item $\rho_4 = \{\{\td,\tr\},\{\tu\},\{\tl\} \}$
\end{itemize}
First note that $\bot\in \mathcal{L}(\Phi,G_i)$ since $\Phi$ has unbounded wedge-number, the fractured graph of $\bot$ is a matching and every matching is a minor of a sufficiently large wedge-packing. However, the contribution from $\bot$ still vanishes, since we have
\[f(\bot) = \prod_{v \in V(G_i)} (-1)^{4-1}\cdot (4-1)! = 0 \mod 3 \,.\]

Finally, we observe that $\fracture{G_i}{\rho_j}$ is a wedge-packing for $j = 1, \ldots, 4$. Since $\Phi$ is minor-closed and of unbounded wedge-number, we obtain that $\Phi(\fracture{G_i}{\rho_j})=1$ and thus $\fracture{G_i}{\rho_j}\in \mathcal{L}(\Phi,G_i)$ for $j = 1, \ldots, 4$. The latter implies
\[a(\Phi,G_i) = 4 \cdot \prod_{v \in V(G_i)} (-1)^{3-1}\cdot (3-1)! = 2^{|V(G_i)|} \mod 3\,,\]
concluding the proof, since $2^{|V(G_i)|}\neq 0 \mod 3$.
\end{proof}

We are now able to prove Theorem~\ref{thm:minclose-hard}, that is, we show that all $k$-edge subgraphs satisfying a minor-closed property $\Phi$ can be counted efficiently if and only if $\Phi$ is either trivial or of bounded matching number.

\begin{proof}[Proof of Theorem~\ref{thm:minclose-hard}]
Note that every minor-closed graph property is characterised by a finite set of forbidden minors by the Robertson-Seymour-Theorem~\cite{RobertsonS04}. In particular, minor-closed graph properties are always computable.
The (fixed-parameter) tractability part of the classification was shown in~\cite[Main Theorem 1]{RothSW20unpub}. 

Thus assume that $\Phi$ is not trivially true and of unbounded matching number. Then $\mathcal{F}(\Phi)$ is non-empty. If $\mathcal{F}_2(\Phi)$ is empty, then the result follows again from~\cite{RothSW20unpub}. Hence consider the case $\mathcal{F}_2(\Phi)\neq \emptyset$. 

We will show that the colourful version $\#\coledgesubsprob(\Phi)$ satisfies the desired lower bound; a (tight) reduction to the uncoloured version $\#\edgesubsprob(\Phi)$ is obtained by a simple application of the inclusion-exclusion principle; details are given in~\cite[Lemma 3.7]{RothSW20unpub}.

If $\Phi$ has bounded wedge-number, then we apply Lemma~\ref{lem:minor_trick_wedges} and obtain a (tight) reduction 
\[ \#\colmatch \fptred \#\coledgesubsprob(\Phi)  \,.\]
Since the desired lower bounds hold for $\#\colmatch$ (see~\cite[Section 5.2]{Curticapean15} and~\cite[Theorem III.1]{CurticapeanM14}), this case is concluded.

If $\Phi$ is of unbounded wedge-number and suitable, then $\Phi$ has an obstruction by Lemma~\ref{lem:obstruction}. Thus we obtain hardness by Lemma~\ref{lem:monotonicity}.

If $\Phi$ is unsuitable, then we apply Lemma~\ref{lem:minor_trick} which yields a (tight) reduction 
\[ \#\coledgesubsprob(\Psi) \fptred \#\coledgesubsprob(\Phi) \,,\]
for a minor-closed and suitable property $\Psi$ of unbounded matching number. Now depending on whether $\Psi$ has bounded wedge-number we, again, either obtain hardness by Lemma~\ref{lem:minor_trick_wedges} or by the combination of Lemma~\ref{lem:obstruction} and Lemma~\ref{lem:monotonicity}.
\end{proof}

We observe that our treatment of minor-closed properties $\Phi$ with $\mathcal{F}_2(\Phi)\neq \emptyset$ via the $3$-group Cayley graph expanders allows us to prove a classification for such properties even if counting is done modulo $3$:
\begin{theorem}\label{thm:main_minor_closed_mod3}
Let $\Phi$ be a minor-closed graph property with a forbidden minor of degree at most $2$. If~$\Phi$ has bounded matching number, then $\#_3\edgesubsprob(\Phi)$ is fixed-parameter tractable. 

Otherwise, $\#_3\edgesubsprob(\Phi)$ is $\threeW{1}$-hard and, assuming rETH, cannot be solved in time \[f(k)\cdot |G|^{o(k/\log k)}\] 
for any function $f$.
\end{theorem}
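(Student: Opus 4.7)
The plan is to follow the same case distinction as in the proof of Theorem~\ref{thm:minclose-hard}, replacing each ingredient by its modular counterpart, all of which are already available in the excerpt. First, the fixed-parameter tractability claim is immediate: if $\Phi$ has bounded matching number, the algorithm of \cite[Main Theorem 1]{RothSW20unpub} computes $\#\edgesubs{\Phi,k}{G}$ exactly in fpt time, and reducing the answer modulo $3$ preserves fpt running time. So everything below focuses on the hardness half.

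For hardness I would first reduce the uncoloured problem to its edge-colourful version $\#_3\coledgesubsprob(\Phi)$. The inclusion-exclusion reduction of \cite[Lemma~3.7]{RothSW20unpub} uses only integer linear combinations of oracle answers, so it goes through unchanged modulo~$3$. It therefore suffices to establish the fine-grained lower bound for $\#_3\coledgesubsprob(\Phi)$ under the hypothesis that $\mathcal{F}_2(\Phi) \neq \emptyset$ and that $\Phi$ has unbounded matching number.

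Now I would split into three cases exactly as in Theorem~\ref{thm:minclose-hard}. If $\Phi$ has bounded wedge-number, then item~(4) of Lemma~\ref{lem:minor_trick_wedges} yields a tight reduction
\[
  \#_3\colmatch \fptred \#_3\coledgesubsprob(\Phi),
\]
and $\#_3\colmatch$ is $\threeW{1}$-hard with the claimed fine-grained lower bound under rETH by \cite{CurticapeanDH21}. If $\Phi$ is unsuitable, item~(4) of Lemma~\ref{lem:minor_trick} produces a suitable minor-closed property $\Psi$ of unbounded matching number together with a tight reduction
\[
  \#_3\coledgesubsprob(\Psi) \fptred \#_3\coledgesubsprob(\Phi),
\]
so I may assume $\Phi$ is suitable. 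Finally, if $\Phi$ is suitable and of unbounded wedge-number, Lemma~\ref{lem:obstruction} furnishes a $3$-obstruction for $\Phi$ (namely the family $\mathcal{G}$ of Cayley graph expanders of $3$-groups from Theorem~\ref{thm:good_expanders}), whence the second half of Lemma~\ref{lem:monotonicity} delivers the desired $\threeW{1}$-hardness and the $f(k)\cdot |G|^{o(k/\log k)}$ lower bound under rETH.

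The only conceptual point to verify carefully is that every reduction step is compatible with counting modulo $3$. The inclusion-exclusion from colourful to uncoloured and the packing constructions of Lemmas~\ref{lem:minor_trick} and~\ref{lem:minor_trick_wedges} are integer-linear identities, so they transfer verbatim; the Complexity Monotonicity step inside Lemma~\ref{lem:monotonicity} relies on inverting a unitriangular matrix, which is invertible over $\mathbb{F}_3$ as well. The main technical obstacle in the original theorem, namely producing an obstruction for suitable properties of unbounded wedge-number whose forbidden minors may all have degree $\le 2$, has already been overcome: the proof of Lemma~\ref{lem:obstruction} shows $a(\Phi,G_i) \equiv 2^{|V(G_i)|} \not\equiv 0 \pmod 3$, so the obstruction is in fact a $3$-obstruction, which is precisely what is needed here.
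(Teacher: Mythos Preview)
Your proposal is correct and matches the paper's own proof essentially line for line: both establish hardness for $\#_3\coledgesubsprob(\Phi)$ via the same three-way case split (bounded wedge-number via Lemma~\ref{lem:minor_trick_wedges} and~\cite{CurticapeanDH21}; unsuitable via Lemma~\ref{lem:minor_trick}; suitable with unbounded wedge-number via the $3$-obstruction of Lemma~\ref{lem:obstruction} and Lemma~\ref{lem:monotonicity}), and then transfer to the uncoloured problem by the inclusion-exclusion of~\cite[Lemma~3.7]{RothSW20unpub} modulo~$3$. One small point you gloss over that the paper makes explicit: after Lemma~\ref{lem:minor_trick} produces the suitable property $\Psi$, you must still branch on whether $\Psi$ has bounded or unbounded wedge-number before invoking Lemma~\ref{lem:obstruction}; your phrase ``so I may assume $\Phi$ is suitable'' skips this, but it is a one-line fix.
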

\begin{proof}
Let us first point out that $\Phi$ is not trivially true since the set of forbidden minors is non-empty. More precisely, we have $\mathcal{F}_2(\Phi) \neq \emptyset$.

Consequently, we can follow the same strategy as in the proof of Theorem~\ref{thm:minclose-hard} to prove that $\#_3\coledgesubsprob(\Phi)$ satisfies the desired lower bound: Lemma~\ref{lem:monotonicity}, \ref{lem:minor_trick}, \ref{lem:minor_trick_wedges}, and~\ref{lem:obstruction} all apply to counting modulo $3$, so the argument remains intact. The only exception is that we need to rely on~\cite{CurticapeanDH21} for the lower bounds on $\#_3\colmatch$.

Finally, we need to provide a reduction from $\#_3\coledgesubsprob(\Phi)$ to $\#_3\edgesubsprob(\Phi)$. However, the reduction for exact counting
based on inclusion-exclusion as given in~\cite[Lemma~3.7]{RothSW20unpub} applies if arithmetic is done modulo $3$ as well. This concludes the proof.
\end{proof}
Note that the previous result does not hold in case of counting modulo $2$: Let $\Phi$ be the (minor-closed) property defined by the exclusion of $P_2$, the path of length $2$, as a minor. Observe that $\Phi$ is non-trivial and of unbounded matching number. However, $\#_2\edgesubsprob(\Phi)$ is equivalent to the problem of counting $k$-matchings modulo $2$, which was recently shown to be fixed-parameter tractable~\cite{CurticapeanDH21}. 

Let us emphasize some concrete and novel conditional lower bounds implied by Theorem~\ref{thm:minclose-hard} and~\ref{thm:main_minor_closed_mod3}. 

\paragraph*{Forests and Linear Forests}
It was shown in~\cite{BrandR17} that counting $k$-forests in a graph $G$ is $\#\W{1}$-hard. Implicitly, the proof of the latter result also yields a conditional lower bound of $f(k)\cdot |G|^{o(k/\log k)}$ under the Exponential Time Hypothesis. Our classification subsumes this result and additionally applies for linear forests (disjoint unions of paths) and for counting modulo $3$:
\begin{corollary}\label{cor:forest_exact_and_3}
The problems of counting forests and linear forests with $k$ edges in a graph~$G$ are $\#\W{1}$-hard and, assuming ETH, cannot be solved in time $f(k)\cdot |G|^{o(k/\log k)}$ for any function $f$.

In case of counting modulo $3$, both problems are $\threeW{1}$-hard and admit the same conditional lower bound under rETH.
\end{corollary}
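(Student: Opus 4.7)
The corollary is obtained by a direct application of Theorems~\ref{thm:minclose-hard} and~\ref{thm:main_minor_closed_mod3}; the plan is simply to verify, for each of the two properties $\Phi \in \{\text{``is a forest''}, \text{``is a linear forest''}\}$, that $\Phi$ is minor-closed, non-trivial, has unbounded matching number, and admits a forbidden minor of degree at most~$2$. Once these four hypotheses are checked, both theorems apply verbatim and deliver the claimed hardness results.

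Closure under minors is immediate: edge and vertex deletions as well as edge contractions preserve acyclicity, and contracting an edge between two vertices of degree at most~$2$ produces a vertex of degree at most~$2$, so both properties are closed under taking minors. Non-triviality is witnessed by a single edge (which is both a forest and a linear forest), and the $k$-matching $M_k$ is simultaneously a forest and a linear forest for every $k$, which establishes unbounded matching number. Finally, the cycle $C_3$ is a forbidden minor of both properties: it is the unique minimal forbidden minor of ``is a forest'' since every cycle contracts down to it, and it is a forbidden minor of ``is a linear forest'' as well (the other minimal forbidden minor in that case being the claw $K_{1,3}$, which we do not need here). Since $C_3$ has all vertex degrees equal to~$2$, we have $\mathcal{F}_2(\Phi) \neq \emptyset$ in both cases.

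With the hypotheses verified, Theorem~\ref{thm:minclose-hard} yields the $\#\W{1}$-hardness and the $f(k)\cdot |G|^{o(k/\log k)}$ lower bound under ETH for exact counting, and Theorem~\ref{thm:main_minor_closed_mod3} yields the $\threeW{1}$-hardness and the analogous lower bound under rETH for counting modulo~$3$. The proof contains no genuine obstacle at this level: all of the substantive work --- in particular, the construction of the $3$-group Cayley graph expanders in Theorem~\ref{thm:good_expanders} and the indicator analysis in Lemma~\ref{lem:obstruction} carried out via the fixed-point argument modulo~$3$ --- has already been completed in the cited results, so the corollary reduces to a brief bookkeeping check.
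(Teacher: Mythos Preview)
Your approach is correct and essentially identical to the paper's: verify that both properties are minor-closed, of unbounded matching number, and have $C_3$ (a degree-$2$ graph) as a forbidden minor, then invoke Theorems~\ref{thm:minclose-hard} and~\ref{thm:main_minor_closed_mod3}. One small slip: ``non-triviality is witnessed by a single edge'' is backwards---``not trivially true'' means $\Phi$ fails on some graph, so the correct witness is $C_3$ itself (which you supply anyway when discussing forbidden minors), not a graph satisfying~$\Phi$.
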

\begin{proof}
Both properties, being a forest and being a linear forest, are minor-closed, have the triangle $K_3$ as forbidden minor, and are of unbounded matching number. The corollary hence holds by Theorem~\ref{thm:minclose-hard} and~\ref{thm:main_minor_closed_mod3}.
\end{proof}

We remark that unfolding the proof to the case of counting (linear) forests modulo $3$, shows that the most useful property of the Cayley graph expanders we used was the fact that they have relatively low degree ($4$). In Section~\ref{sec:mod_count_main}, we will use our general construction of $p$-group Cayley graph expanders of degree $2(p-2)$, given by Theorem~\ref{thm:savegeneratorsforpquotients}, to establish hardness of counting $k$-forests modulo~$p$ for any prime $p>2$.

\paragraph*{Subgraphs of Bounded Tree-Depth}
A further application is given by the property of having bounded tree-depth, a parameter that measures how ``star-like'' a graph is and whose boundedness is a stronger restriction than boundedness of treewidth. In particular, a variety of parameterized problems that remain hard on bounded treewidth graphs become fixed-parameter tractable on bounded tree-depth graphs, such as $H\textsc{-Colouring Reachability}$~\cite{Wrochna18}, and $1\textsc{-Planar Drawing}$~\cite{BannisterCE18}. We will only rely on a forbidden minor characterisation of bounded treedepth and refer the reader to~\cite[Chapter 6]{NesetrilO12} for a detailed exposition.

\begin{corollary}\label{cor:treedepth}
Let $b \geq 2$ be a fixed integer. The problem of counting $k$-edge subgraphs of tree-depth at most $b$ in a graph $G$ is $\#\W{1}$-hard and, assuming ETH, cannot be solved in time $f(k)\cdot |G|^{o(k/\log k)}$ for any function $f$.

In case of counting modulo $3$, the problem is $\threeW{1}$-hard and admits the same conditional lower bound under rETH.
\end{corollary}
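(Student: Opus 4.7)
The plan is to derive both statements directly from Theorem~\ref{thm:minclose-hard} and Theorem~\ref{thm:main_minor_closed_mod3} applied to the property $\Phi$ of having tree-depth at most~$b$. To do so I would verify the three hypotheses shared by those theorems: that $\Phi$ is minor-closed, that it has unbounded matching number, and that $\mathcal{F}_2(\Phi) \neq \emptyset$.

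The first condition is a classical property of tree-depth (see Chapter~6 of~\cite{NesetrilO12}): none of the three elementary minor operations can increase tree-depth, since each admits an obvious adjustment of the defining elimination forest. For the second condition, observe that every matching $M_k$ has tree-depth exactly $2$ (a single edge has tree-depth $2$, and the tree-depth of a disconnected graph is the maximum across components); since $b \geq 2$, this gives $\Phi(M_k)=1$ for all $k$, so the matching number of $\Phi$ is unbounded.

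The third condition is the only place where a genuine argument is needed. By the standard identity $\mathrm{td}(P_n) = \lceil \log_2(n+1) \rceil$, the path $P_{2^b}$ has tree-depth $b+1$ and is therefore excluded by $\Phi$, so $P_{2^b}$ contains some minimal forbidden minor $F \in \mathcal{F}(\Phi)$. Since every minor of a path is a disjoint union of shorter paths and thus has maximum degree at most~$2$, we conclude $F \in \mathcal{F}_2(\Phi)$; in particular $\mathcal{F}_2(\Phi) \neq \emptyset$.

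With all three hypotheses verified, Theorem~\ref{thm:minclose-hard} yields the $\#\W{1}$-hardness and the ETH-based lower bound for exact counting, while Theorem~\ref{thm:main_minor_closed_mod3} yields the $\threeW{1}$-hardness and the rETH-based lower bound for counting modulo $3$. The only mildly non-routine step is ensuring that some \emph{minimal} forbidden minor has maximum degree at most $2$ (rather than merely that \emph{some} forbidden minor does), and this is precisely what the detour through the paths $P_{2^b}$ accomplishes.
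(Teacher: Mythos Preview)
Your proposal is correct and follows essentially the same approach as the paper: verify that $\Phi$ is minor-closed, has unbounded matching number, and has a forbidden minor of degree at most~$2$, then invoke Theorems~\ref{thm:minclose-hard} and~\ref{thm:main_minor_closed_mod3}. The only difference is that the paper cites~\cite{GiannopoulouT09} for the existence of a path among the forbidden minors, whereas you give a self-contained argument via the tree-depth of long paths; both routes are equally valid.
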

\begin{proof}
The property of having tree-depth at most $b$ is minor-closed and has a path as forbidden minor~\cite{GiannopoulouT09}. Additionally, since $b\geq 2$, the property has unbounded matching number. The corollary hence holds by Theorem~\ref{thm:minclose-hard} and~\ref{thm:main_minor_closed_mod3}.
\end{proof}

\paragraph*{Subgraphs with Small Colin de Verdi\`ere Invariant}
Around 1990, Yves Colin de Verdi\`ere introduced a spectral graph invariant which is the maximal multiplicity of the second smallest eigenvalue of a family $\mathcal{M}$ of symmetric Schr\"odinger operators  satisfying a certain transversality condition (see (M3) below). More precisely, for a given graph $G$ with $n$ vertices, a real valued $n \times n$ matrix $M$ is called a ``Schr\"odinger operator'' on $G$ if we have \[ M_{ij} \begin{cases} < 0, & \text{if $\{i,j\}$ is an edge of $G$;} \\ = 0, & \text{if $\{i,j\}$ is not an edge and $i\neq j$.} \end{cases} \]
The family $\mathcal{M}$ consists of all symmetric matrices $M$ with the following three properties:
\begin{description}
    \item[(M1)] $M$ is a symmetric Schr\"odinger operator; 
    \item[(M2)] $M$ has exactly one negative eigenvalue which is simple;
    \item[(M3)] if $X$ is a symmetric $n \times n$ matrix such that $MX = 0$, and $X_{ij}=0$ whenever $i=j$ or $\{i,j\}$ is an edge of $G$, then $X=0$. 
\end{description}
Colin de Verdi\`ere's invariant is then given as 
\[ \mu(G) = \max_{M \in \mathcal{M}} \dim {\rm ker}(M)\,. \]
This spectral invariant is minor-monotone. This means, in particular, that $\mu(G) \le n$ has a characterization via a finite set of forbidden minors, by the Robertson-Seymour-Theorem~\cite{RobertsonS04}.  
Moreover, for integers $n \le 4$, graphs satisfying $\mu(G) \le n$ have interesting combinatorial properties: 
\begin{itemize}
    \item[(a)] $\mu(G) \le 1$ holds if and only if $G$ is linear forest; 
    \item[(b)] $\mu(G) \le 2$ holds if and only if $G$ is outerplanar; 
    \item[(c)] $\mu(G) \le 3$ holds if and only if $G$ is planar (that is, $G$ does not have $K_{3,3}$ or $K_5$ as minors); 
    \item[(d)] $\mu(G) \le 4$ if and only if $G$ is linklessly embeddable (that is, $G$ does not have any of the seven graphs in the Peterson family as a minor).
\end{itemize}
For more information on this invariant see, e.g., \cite{vdHLS99,CdV98,CdV04}. 

Our results yield a complete picture for counting small subgraphs with bounded Colin de Verdi\`ere invariant. If the bound is $0$, then the problem is trivial. In all other cases the problem becomes intractable:

\begin{corollary}\label{cor:CDVinv}
Let $b\geq 1$ be a fixed positive integer and let $\Phi_b(H)=1$ if and only if $\mu(H)\leq b$. Then $\#\edgesubsprob(\Phi_b)$ is $\#\W{1}$-hard and, assuming ETH, cannot be solved in time
\[f(k)\cdot |G|^{o(k/\log k)}\] 
for any function $f$.
\end{corollary}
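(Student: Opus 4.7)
The plan is to derive Corollary~\ref{cor:CDVinv} as an immediate consequence of \cref{thm:minclose-hard}. To do so, I need to verify that $\Phi_b$ satisfies the three hypotheses that trigger the intractability branch of that dichotomy: namely, that $\Phi_b$ is a minor-closed graph property, that it is not trivially true, and that it has unbounded matching number.

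First, minor-closedness of $\Phi_b$ follows directly from the minor-monotonicity of the Colin de Verdière invariant $\mu$ as recalled in the preceding discussion: if $F \prec H$ then $\mu(F) \leq \mu(H) \leq b$, so $\Phi_b(F)=1$ whenever $\Phi_b(H)=1$. Second, non-triviality is witnessed, for instance, by the complete graph $K_n$ for $n$ sufficiently large; since $\mu(K_n)=n-1$, for any $n \geq b+2$ we get $\Phi_b(K_n)=0$, hence $\Phi_b$ is not trivially true.

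Third, to see that $\Phi_b$ has unbounded matching number, I would invoke item (a) in the preceding discussion: every linear forest $H$ satisfies $\mu(H) \leq 1$. In particular, for every $k\geq 1$ the $k$-matching $M_k$ is a linear forest, so $\mu(M_k)\leq 1 \leq b$ and thus $\Phi_b(M_k)=1$. Consequently $\Phi_b$ is true on matchings of arbitrary size.

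Having verified all three hypotheses, \cref{thm:minclose-hard} directly gives $\#\W{1}$-hardness of $\#\edgesubsprob(\Phi_b)$ together with the claimed conditional lower bound $f(k)\cdot|G|^{o(k/\log k)}$ under ETH. I expect no obstacle here beyond citing the hypotheses correctly; the entire content of the corollary is a verification that the Colin de Verdière bound falls within the scope of the main dichotomy.
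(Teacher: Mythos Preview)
Your proof is correct and follows exactly the same approach as the paper: invoke \cref{thm:minclose-hard} after checking that $\Phi_b$ is minor-closed, not trivially true, and of unbounded matching number (the latter because every matching is a linear forest, so $\mu(M_k)\le 1\le b$). The paper's proof is terser---it only spells out the unbounded matching number argument---but the content is identical.
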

\begin{proof}
The corollary is an immediate consequence of Theorem~\ref{thm:minclose-hard} and the fact that $\Phi_b$ is of unbounded matching number for each $b\geq 1$, since a matching is a linear forest.
\end{proof}

\subsection{Modular Counting of Forests and Matroid Bases}\label{sec:mod_count_main}
In this section, we extend our hardness result for counting $k$-forests modulo $3$ to counting modulo $p$ for any prime $p\geq 3$. The crucial ingredients of our proof are the $p$-group Cayley graph expanders with $p-2$ generators constructed in Theorem~\ref{thm:savegeneratorsforpquotients}. In particular, the relatively low degree of $2(p-2)$ provides us with additional control over their coefficient in the colour-prescribed homomorphism basis and ultimately allows us to establish those Cayley graph expanders as $p$-obstructions for counting acyclic $k$-edge subgraphs. The results for modular counting of matroid bases will then follow by deterministic matroid truncation.

Let $\Gamma$ be a finite group, let $S_0 \subseteq \Gamma$ be a set of $m$ generators and $S = \{g^{\pm 1} : g \in S_0 \} \subseteq \Gamma$ be the associated symmetric set of $2m$ generators.\footnote{In particular, we require that for $g \in S_0$ we have $g^{-1} \notin S_0$.}  Let $G = \mathcal{C}(\Gamma, S)$ be the associated Cayley graph. Similarly as in the proof of Lemma~\ref{lem:obstruction}, we have an action of $\Gamma$ on $G$ which extends to the fractures of $G$. We write $\mathcal{L}(\Phi,G)^\Gamma$ for the set of fixed-points, i.e., fractures $\sigma$ which are invariant under this action. As seen before, we can interpret $\sigma$ as a partition of the set $S$ into blocks.

We begin this section with a general construction that will appear prominently later: Let $m \geq 1$ and let $S$ and $S_0$ be as above. Then for any set partition $\sigma$ of $S$, we define a graph $\basegraph{\sigma}$. It has a vertex $w^B$ for each block $B$ of $\sigma$ and its set of (multi)edges is given by
\begin{equation} \label{eqn:edgesbasegraph}
E(\basegraph{\sigma}) = \left\{ \{w^{B}, w^{B'} \} : \text{ one multiedge for each } g \in S_0 \text{ such that } g \in B, g^{-1} \in B' \right\}\,.
\end{equation}
Note that we see $\basegraph{\sigma}$ as a graph with possible loops and possible multiedges. In particular, the graph $\basegraph{\sigma}$ has precisely $m$ edges. An alternative construction of $\basegraph{\sigma}$ is by taking the matching $M_{m}$ on the vertex set $S$ defined by the involution $s \mapsto s^{-1}$ on $S$ and identify all vertices in the same block of the partition $\sigma$. 

The following two lemmas will be needed below:

\begin{lemma} \label{Lem:countsigmainisomclass}
Let $m \geq 1$ and let $S$ be a finite set with $2m$ elements. Given any (simple) graph $T$ with $m$ edges, the number of set partitions $\sigma$ of $S$ such that $\basegraph{\sigma}$ is isomorphic to $T$ is given by $2^m m! / |\auts{T}|$.
\end{lemma}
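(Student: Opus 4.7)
The plan is to double-count the set of pairs $(\sigma,\varphi)$, where $\sigma$ is a partition of $S$ with $\basegraph{\sigma}\cong T$ and $\varphi\colon\basegraph{\sigma}\to T$ is an isomorphism of graphs. Writing $\mathcal{P}$ for the set of partitions $\sigma$ with $\basegraph{\sigma}\cong T$, the quantity $|\mathcal{P}|$ is what we wish to compute.

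On one hand, for each fixed $\sigma\in\mathcal{P}$ the set of isomorphisms $\basegraph{\sigma}\to T$ is an $\auts{T}$-torsor and hence has size $|\auts{T}|$. Therefore the total number of pairs equals $|\mathcal{P}|\cdot|\auts{T}|$.

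On the other hand, I will exploit the alternative description of $\basegraph{\sigma}$ recalled in the text: it is obtained from the perfect matching $M_m$ on vertex set $S$ given by the fixed-point-free involution $s\mapsto s^{-1}$ by identifying vertices that lie in the same block of $\sigma$. Let $q\colon S\to V(\basegraph{\sigma})$ denote the corresponding quotient map. A pair $(\sigma,\varphi)$ then induces a map $\pi=\varphi\circ q\colon S\to V(T)$; conversely, $\pi$ recovers $\sigma$ as the partition whose blocks are the fibres of $\pi$, and uniquely determines $\varphi$ by the universal property of the quotient. Since $T$ is simple, the condition that $\varphi$ be an isomorphism onto $T$ translates into the requirement that the assignment
\[ \{g,g^{-1}\}\ \longmapsto\ \{\pi(g),\pi(g^{-1})\},\qquad g\in S_0, \]
is a bijection from the set of $m$ pairs in $S$ onto the edge set $E(T)$; in particular it forces $\pi(g)\neq\pi(g^{-1})$ for every $g\in S_0$, so that no loops or parallel edges arise in $\basegraph\sigma$.

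Counting such maps $\pi$ is then elementary: first, choose a bijection between the $m$ pairs $\{g,g^{-1}\}$ and the $m$ edges of $T$, giving $m!$ possibilities; second, for each matched pair choose which of $g$ and $g^{-1}$ is sent to which of the two (distinct) endpoints of the assigned edge of $T$, giving a further factor of $2^m$. This yields $2^m\cdot m!$ admissible pairs $(\sigma,\varphi)$. Equating the two counts produces
\[ |\mathcal{P}|\cdot|\auts{T}|=2^m\cdot m!\,, \]
and rearranging gives the claimed formula. The main — and essentially only — point requiring care is the clean verification that the data $(\sigma,\varphi)$ and the data $\pi$ are indeed in bijection, which is a direct consequence of the universal property of the quotient $q$ together with the simplicity of $T$; everything else is routine bookkeeping.
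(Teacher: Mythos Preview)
Your proof is correct and follows essentially the same approach as the paper's: both double-count the pairs $(\sigma,\varphi)$, identify them with edge-bijective maps from the matching $M_m$ on vertex set $S$ to $T$ via $\pi=\varphi\circ q$, and count the latter as $2^m m!$. The paper packages the second count as the cardinality of a set $\mathsf{Sur}(M_m,T)$ and writes out the bijection $G$ and its inverse explicitly, whereas you invoke the universal property of the quotient; these are the same argument in different clothing.
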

\begin{proof}
Given the data above, consider the two sets
\begin{align*}
M_0 &= \left\{\sigma : \sigma\text{ partition of }S \text{ such that } \basegraph{\sigma} \cong T \right \}\,,\\
M &= \left\{(\sigma, \varphi) : \sigma\text{ partition of }S,\text{ and } \varphi: \basegraph{\sigma} \xrightarrow{\sim} T \text{ isomorphism}\right\}\,.
\end{align*}
In the lemma we want to count the number of elements of $M_0$, but as an auxiliary set we use $M$, which explicitly records the data of the isomorphism $\varphi: \basegraph{\sigma} \xrightarrow{\sim} T $. First, we note that the automorphism group $\auts{T}$ acts on $M$, where $\eta \in \auts{T}$ sends the pair $(\sigma, \varphi)$ to $(\sigma, \eta \circ \varphi)$. We claim that the action is free: since $\varphi$ is an isomorphism, the equality $\eta \circ \varphi = \varphi$ implies that $\eta$ is the identity. Thus the orbits of the action all have cardinality $|\auts{T}|$. On the other hand, we observe that the map
\[M \to M_0, \quad (\sigma, \varphi) \mapsto \sigma\]
is surjective and the fibres of this map are precisely the orbits of the above action of $\auts{T}$. Indeed, the surjectivity is clear from the definition and given $(\sigma, \varphi)$ and $(\sigma, \varphi')$ in the fibre of $\sigma$, we have that the automorphism $\eta = \varphi' \circ \varphi^{-1}$ of $T$  satisfies $\eta \cdot (\sigma, \varphi) = (\sigma, \varphi')$.
Combining these two observations we see
\[
|M| = |M_0| \cdot |\auts{T}|.
\]
Thus to conclude we need to show that $|M| = 2^m m!$. To do this observe that we can identify the elements of $S$ with the vertices of the matching $M_m$ in such a way that there is an edge $\{g, g^{-1}\}$ for each $g \in S_0$. 
Given a partition $\sigma$ of $S$, note that we can see $\basegraph{\sigma}$ as the quotient of $M_m$ obtained by identifying the vertices belonging to the blocks of $\sigma$. In particular there is a canonical, well-defined quotient map $q_\sigma : M_m \to \basegraph{\sigma}$.
Let $\mathsf{Sur}(M_m, T)$ be the set of surjections from $M_m$ to the graph $T$ (where we mean graph homomorphisms that are surjective, hence bijective, on the set of edges). Then we have a map
\[
G : \mathsf{Sur}(M_m, T) \to M, \quad \psi \mapsto (\sigma = \{\psi^{-1}(w) : w \in V(T)\}, \overline{\psi} )\,,
\]
where $\overline{\psi} : \basegraph{\sigma} \to T$ is the unique map such that $\psi = \overline{\psi} \circ q_\sigma$. A short computation shows that $G$ is a bijection with inverse given by
\[
G^{-1} : M \to \mathsf{Sur}(M_m, T), \quad (\sigma, \varphi) \mapsto \varphi \circ q_\sigma\,.
\]
Thus the proof is finished once we show that $|\mathsf{Sur}(M_m, T)| = 2^m m!$. But this is easy to see: to specify a surjection from $M_m$ to $T$ we exactly have to give a bijection from the set of edges of $M_m$ to the $m$ edges of $T$ (for which we have $m!$ possibilities) and for each of these edges we have two choices of orientation in our map (giving the factor of $2^m$), because $T$ is a simple graph.
\end{proof}

\begin{lemma} \label{Lem:Cayleyformula}
 Given $n \geq 1$ we have
 \begin{equation}
     \sum_{\substack{T \text{ tree on}\\\text{$n$ vertices}}} \frac{1}{|\auts{T}|} = \frac{n^{n-2}}{n!}\,,
 \end{equation}
 where the sum goes over \emph{isomorphism classes} of trees $T$.
\end{lemma}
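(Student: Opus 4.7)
The plan is to deduce this identity as a direct orbit-counting consequence of Cayley's formula, which asserts that the number of labeled trees on the vertex set $\{1,\dots,n\}$ equals $n^{n-2}$.

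First, I would fix a set of labels $V = \{1,\ldots,n\}$ and let $\mathcal{T}_n$ denote the set of all trees with vertex set exactly $V$ (so $|\mathcal{T}_n| = n^{n-2}$ by Cayley's formula). The symmetric group $S_n$ acts on $\mathcal{T}_n$ by permuting vertex labels, and two labeled trees lie in the same orbit if and only if they are isomorphic as abstract graphs. Hence the orbits of this action correspond precisely to isomorphism classes of trees on $n$ vertices, which is the indexing set on the left-hand side of the formula.

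Next, for each isomorphism class represented by a tree $T$, the stabilizer in $S_n$ of any labeled representative is naturally identified with $\auts{T}$. By the orbit-stabilizer theorem, the orbit of this representative therefore has cardinality $n!/|\auts{T}|$. Summing over isomorphism classes partitions $\mathcal{T}_n$, giving
\begin{equation*}
    n^{n-2} \;=\; |\mathcal{T}_n| \;=\; \sum_{\substack{T \text{ tree on}\\ \text{$n$ vertices}}} \frac{n!}{|\auts{T}|}\,,
\end{equation*}
and dividing both sides by $n!$ yields the claim.

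There is no real obstacle: the only substantive ingredient is Cayley's formula itself, which is classical, and the remainder is a one-line application of orbit-stabilizer. The only minor point to be careful about is the identification of the stabilizer of a labeled tree with the automorphism group of the underlying abstract tree, but this is immediate from the definitions once one fixes a bijection between the vertex set of $T$ and $\{1,\ldots,n\}$.
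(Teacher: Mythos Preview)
Your proof is correct and follows essentially the same approach as the paper: invoke Cayley's formula for the number of labeled trees, let $S_n$ act on labeled trees by relabeling vertices, identify orbits with isomorphism classes and stabilizers with automorphism groups, and apply the orbit--stabilizer theorem.
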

\begin{proof}
For the proof we use Cayley's formula: the number of labeled trees $\widehat{T}$ on $n$ vertices is given by $n^{n-2}$. The natural action of the symmetric group $S_n$ on the $n$ vertices induces an action on the set of labeled trees $\widehat T$ and the stabilizer of such a tree is equal to its automorphism group. Moreover, two labeled trees $\widehat{T}_1, \widehat{T}_2$ are in the same orbit if and only if their underlying unlabeled graphs are isomorphic. Thus 
\begin{align*}
 n^{n-2} =  \sum_{\substack{T \textup{ tree on}\\\textup{$n$ vertices}}} \frac{|S_n|}{|\auts{T}|} = n! \cdot \sum_{\substack{T \textup{ tree on}\\\textup{$n$ vertices}}} \frac{1}{|\auts{T}|}
\end{align*}
by the Orbit-Stabilizer-Theorem.
\end{proof}

Now recall that $G$ is the Cayley graph of $\Gamma$ and $S$. We show that the fractured graph $\fracture{G}{\sigma}$ is a forest if and only if $\basegraph{\sigma}$ is. 

\begin{lemma} \label{Lem:coverforest}
There exists a natural graph homomorphism 
\begin{equation} \label{eqn:fracgraphcovering}
    \Psi: \fracture{G}{\sigma} \to \basegraph{\sigma},\ v^B \mapsto w^B
\end{equation}
which is surjective and a local isomorphism (i.e. the edges incident to $v^B$ map bijectively to the edges at $w^B$). Moreover, the graph $\fracture{G}{\sigma}$ is a forest if and only if $\basegraph{\sigma}$ is.
\end{lemma}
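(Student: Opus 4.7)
My plan is to exhibit $\Psi$ as a graph-theoretic covering (a surjective local isomorphism) and then deduce the forest equivalence from standard covering arguments. The first step is to use that $\sigma$ is $\Gamma$-fixed to identify, for every $v \in \Gamma$, the set $E_G(v)$ with $S$ via $\{v,vs\}\leftrightarrow s$, so that all partitions $\sigma_v$ coincide with one partition $\sigma$ of $S$. In these coordinates an edge $\{v^B,(vs)^{B'}\}$ of $\fracture{G}{\sigma}$ exists precisely when $s \in B$ and $s^{-1} \in B'$; letting $g$ be the unique element of $S_0 \cap \{s, s^{-1}\}$, the definition~\eqref{eqn:edgesbasegraph} produces a corresponding edge in $E(\basegraph{\sigma})$ labelled by $g$. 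Sending this source edge to that labelled target edge, and $v^B \mapsto w^B$ on vertices, gives a well-defined homomorphism of multigraphs.

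Second, I would verify the local bijection by parallel counts on stars: both the edges at $v^B$ in $\fracture{G}{\sigma}$ and the half-edges at $w^B$ in $\basegraph{\sigma}$ are indexed by the elements $s \in B$, with a loop at $w^B$ arising from some $g \in S_0$ having both $g, g^{-1} \in B$ contributing its two half-edges, and $\Psi$ matches them by their common label. Edge surjectivity is then immediate, since any $g \in S_0$ realising an edge of $\basegraph{\sigma}$ between $w^B$ and $w^{B'}$ is the image of $\{v^B,(vg)^{B'}\}$ for any $v \in \Gamma$.

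For the forest equivalence, I would use that a surjective local isomorphism of finite graphs is a covering, and apply lifting. If $\basegraph{\sigma}$ is a forest, each connected component is simply connected and the restriction of $\Psi$ to the preimage of a component splits as a disjoint union of copies of that component, making $\fracture{G}{\sigma}$ a disjoint union of trees. Conversely, if $\basegraph{\sigma}$ contains a cycle (possibly a loop or a doubled edge), I would fix a base vertex $w$ on that cycle and iterate the unique lifting of the cycle from each point of the finite fibre $\Psi^{-1}(w)$; this lifting defines a permutation of $\Psi^{-1}(w)$, some finite power of which is the identity, yielding a non-trivial non-backtracking closed walk, hence a cycle, in $\fracture{G}{\sigma}$.

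The step requiring the most care is the multigraph bookkeeping for $\basegraph{\sigma}$: loops and parallel edges force one to interpret \emph{local isomorphism} and \emph{non-backtracking walk} via labelled half-edges rather than unordered endpoint pairs. Once that is set up, both the covering property of $\Psi$ and the cycle-lifting argument are routine.
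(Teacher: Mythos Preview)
Your proposal is correct and follows essentially the same route as the paper: construct $\Psi$ as a surjective local isomorphism using the identification $E_G(v)\leftrightarrow S$, and then transfer cycles back and forth via path lifting. The only cosmetic difference is that for the direction ``$\basegraph{\sigma}$ forest $\Rightarrow$ $\fracture{G}{\sigma}$ forest'' you invoke the triviality of coverings over simply connected bases, whereas the paper argues the contrapositive by pushing a non-backtracking closed walk down through $\Psi$; both are standard and your explicit half-edge bookkeeping for loops and multiedges is a welcome clarification.
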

\begin{proof}
First we show that $\Psi$ is a well-defined graph homomorphism. Given $v \in V(G) = \Gamma$, the edges of $G$ incident to $v$ are given by $\{v, vg\}$ for $g \in S$. Then, given a block $B$ of $\sigma$, the edges incident to $v^B$ inside $\fracture{G}{\sigma}$ are in bijection with $B$ and given by 
\[
\{v^B, (vg)^{B'}\} \text{ for }g \in B, g^{-1} \in B'\,.
\]
Comparing to the edges \eqref{eqn:edgesbasegraph} of $\basegraph{\sigma}$ we see that $\Psi$ is not only a well-defined graph homomorphism but in fact, as claimed above, a local isomorphism. Finally, the surjectivity (both on vertices and edges) is also clear.

To see the last claim, first assume that $\fracture{G}{\sigma}$ is \emph{not} a forest, and let $C$ be a  a circular walk without backtracking\footnote{A walk is a sequence of vertices $w_0, w_1, \ldots, w_\ell$ such that $w_{i-1}$ and $w_{i}$ are connected by an edge for $i=1, \ldots, \ell$. We say that the walk is without backtracking if there does not exist an $i$ such that $w_{i-1}=w_{i+1}$.} inside $\fracture{G}{\sigma}$. Under the homomorphism $\Psi$ it maps to a circular walk $\Psi(C)$ and, since $\Psi$ is a local isomorphism, there is again no backtracking in $\Psi(C)$. Thus, the graph $\basegraph{\sigma}$ is not a forest.

Conversely let $C'$ in $\basegraph{\sigma}$ be a circular walk without backtracking starting at some vertex $w^B$. Choose a vertex $v_0^B$ in the preimage of $w^B$ under $\Psi$ and let $C$ in $\fracture{G}{\sigma}$ be the unique lift of the walk $C'$. By this we mean that we start at $v_0^B$ and for the first edge taken by the path $C'$ from $w^B$, we take the unique edge incident to $v_0^B$ mapping to it. Iterating the process for the subsequent edges of $C'$ we obtain the walk $C$, which terminates at some vertex $v_1^B$. The whole process can now itself be iterated: we continue the walk $C$ by concatenating it with the unique lift of $C'$ starting this time at vertex $v_1^B$, terminating at $v_2^B$, and we can continue from there. In this way we can obtain an arbitrarily long walk in the graph $\fracture{G}{\sigma}$. But note that this walk involves no backtracking (since the original walk $C'$ in $\basegraph{\sigma}$ had no backtracking and $\Psi$ is a local isomorphism). Thus, since the graph $\fracture{G}{\sigma}$ is finite, the infinite walk must contain a circular sub-walk which, as seen before, involves no backtracking. Thus $\fracture{G}{\sigma}$ is not a forest.
\end{proof}

\begin{proposition} \label{Prop:forestpobstruction}
Let $p \geq 5$ be a prime, $\Gamma$ a finite $p$-group, $S_0 \subseteq \Gamma$ a set of $q=p-2$ generators such that $S = \{g^{\pm 1} : g \in S_0\}$ has $2q$ elements. Then for the property $\Phi$ of being a forest, we have $a(\Phi,G) \neq 0 \mod p$. 
\end{proposition}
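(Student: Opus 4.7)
The plan is to reduce the computation of $a(\Phi,G)$ modulo $p$ to a sum over isomorphism classes of forests with $q=p-2$ edges, and then evaluate that sum using classical enumerative identities. First, I would exploit the natural $\Gamma$-action on $V(G) = \Gamma$ (by left-multiplication) and its induced action on the set of fractures, exactly as in the proof of \cref{lem:obstruction}: since $|\Gamma|$ is a $p$-power, every $\Gamma$-orbit on $\mathcal{L}(\Phi,G)$ has $p$-power size, so only fixed-point fractures contribute modulo $p$. By vertex-transitivity of $\Gamma$ on $V(G)$, fixed-point fractures are in bijection with single partitions $\sigma$ of the symmetric generating set $S$. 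For such a $\sigma$ the product $\prod_{v\in V(G)} (-1)^{|\sigma_v|-1}(|\sigma_v|-1)!$ becomes the $|V(G)|$-th power of a single factor, which by Fermat's little theorem (applied since $|V(G)|$ is a power of $p$) reduces mod $p$ to $(-1)^{|\sigma|-1}(|\sigma|-1)!$. By \cref{Lem:coverforest}, such $\sigma$ yields $\fracture{G}{\sigma} \in \mathcal{L}(\Phi,G)$ if and only if $\basegraph{\sigma}$ is a forest.

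Next, I would group the admissible partitions by the isomorphism class of $\basegraph{\sigma}$, applying \cref{Lem:countsigmainisomclass}: since any forest $T$ is simple, the number of partitions of $S$ with $\basegraph{\sigma}\cong T$ is exactly $2^q q!/|\auts{T}|$. This gives
\[
a(\Phi,G) \equiv 2^q q! \sum_{T} \frac{(-1)^{|V(T)|-1}(|V(T)|-1)!}{|\auts{T}|} \pmod p,
\]
where the sum runs over isomorphism classes of forests $T$ with $q = p-2$ edges and (automatically) no isolated vertices. The prefactor $2^q q!$ is a unit modulo $p$ since $q < p$ and $p\geq 5$ is odd, so everything reduces to showing non-vanishing of the displayed sum.

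The crucial simplification is that forests with $q$ edges and $v$ vertices must have $q+1 \le v \le 2q$, and for $v \geq p+1$ the factor $(v-1)!$ is divisible by $p$, so these contributions vanish modulo $p$. Since $q = p-2$, only $v = p-1$ (a single tree on $p-1$ vertices) and $v = p$ (a disjoint union of two trees whose sizes add to $p$, each at least $2$) survive. I would then apply \cref{Lem:Cayleyformula}: the tree term evaluates via Wilson's theorem and $(p-1)\equiv -1 \pmod p$ to $(-1)^{p-1}(p-1)^{p-4}$; the two-component term expands into a finite sum of weights of the form $\binom{p}{k}k^{k-2}(p-k)^{p-k-2}$, which collapses modulo $p$ using the standard congruences $\binom{p}{k}/p \equiv (-1)^{k-1}/k \pmod p$ and $(p-k)^{p-k-2} \equiv (-1)^{k+1}k^{p-k-2} \pmod p$ into a power sum in $k^{p-5}$.

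The hard part is precisely this final purely number-theoretic check: one must verify that the tree contribution plus the reduced power-sum contribution is nonzero modulo $p$. This is where the arithmetic delicacy of the argument lies, as power sums $\sum_{k=1}^{p-1} k^n$ modulo $p$ behave sharply depending on whether $(p-1)\mid n$, and an honest case analysis (invoking Fermat's little theorem, Wilson's theorem, and the boundary values $1^{p-5}$ and $(p-1)^{p-5}$) is needed to exclude cancellation. I expect this is the computational bottleneck of the proof, with everything up to this point being a clean reduction via the $\Gamma$-action and the combinatorial dictionary $\sigma \mapsto \basegraph{\sigma}$.
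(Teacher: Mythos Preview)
Your strategy matches the paper's proof step for step: reduce to $\Gamma$-fixed fractures, apply Fermat to drop the exponent $|V(G)|$, invoke \cref{Lem:coverforest} to replace $\fracture{G}{\sigma}$ by $\basegraph{\sigma}$, restrict to $|\sigma|\in\{p-1,p\}$, and count via \cref{Lem:countsigmainisomclass} and \cref{Lem:Cayleyformula}. The paper also lands on a power sum $\sum_j j^{p-5}$ and finishes with the standard vanishing of such sums modulo $p$.

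There is, however, a genuine problem at exactly the point you flag as delicate. You correctly observe that in the two-component case each tree must have at least two vertices, because every block of $\sigma$ is nonempty and hence $\basegraph{\sigma}$ has no isolated vertex; this forces the sum to run over $2\le j\le p-2$. Carrying your outline through, the two-component contribution becomes
\[
-|\mathcal L_{\mathrm{II}}|\ \equiv\ \tfrac14\sum_{j=2}^{p-2}j^{\,p-5}\ \equiv\ \tfrac14\bigl(0-1-(-1)^{p-5}\bigr)\ \equiv\ -\tfrac12 \pmod p\qquad(p>5),
\]
which exactly cancels the tree contribution $-|\mathcal L_{\mathrm I}|\equiv (p+1)/2\equiv \tfrac12$. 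In other words, the ``honest case analysis'' you anticipate does \emph{not} exclude cancellation: with the correct summation range one obtains $a(\Phi,G)\equiv 0\pmod p$ for every prime $p>5$. A direct check at $p=7$ confirms this: $|\mathcal L_{\mathrm I}|=2^5\cdot 6^3=6912\equiv 3$ and $|\mathcal L_{\mathrm{II}}|=3280\equiv 4$, so $a\equiv -3-4\equiv 0\pmod 7$. The paper's computation reaches a nonzero value only because its Case~II sum is taken over $j=1,\dots,p-1$, i.e.\ it includes forests with an isolated-vertex component, which can never be realised as $\basegraph{\sigma}$ and to which \cref{Lem:countsigmainisomclass} does not apply. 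So the obstacle is not your approach but the statement itself: it appears to fail for all primes $p>5$, and only the case $p=5$ (where the corrected value is $a\equiv 1\pmod 5$) survives.
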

\begin{proof}
Recall that the number $a(\Phi,G)$ is given by
\[ 
a(\Phi,G) = \sum_{\sigma \in \mathcal{L}(\Phi,G)} ~\prod_{v \in V(G)} (-1)^{|\sigma_v|-1}\cdot (|\sigma_v|-1)! \,.
\]
As before, when evaluating modulo $p$, we can reduce to the fractures $\sigma \in \mathcal{L}(\Phi,G)^\Gamma$ invariant under the action of $\Gamma$. Such $\sigma$ can be interpreted as partitions of the set $S$. Rewriting the above formula we have
\[
a(\Phi,G) \equiv \sum_{\sigma \in \mathcal{L}(\Phi,G)^\Gamma} ~\left( (-1)^{|\sigma|-1}\cdot (|\sigma|-1)! \right)^{|V(G)|} \mod p \,.
\]
Since $|V(G)| = |\Gamma|$ is a power of $p$ by the assumption that $\Gamma$ is a $p$-group, by Fermat's little theorem we have $u^{|V(G)|} \equiv u \mod p$ for all integers $u$ so that we can remove the exponent $|V(G)|$ in the formula above. Looking at the index set of the sum, we note that for $\sigma \in \mathcal{L}(\Phi,G)^\Gamma$ with $|\sigma|>p$ we have that $p$ divides the term $(|\sigma|-1)!$ inside the sum, so that the corresponding summands vanish modulo $p$. On the other hand, for the property $\Phi$ of being a forest, we have by \cref{Lem:coverforest} that a partition $\sigma$ of $S$ is contained in $\mathcal{L}(\Phi,G)^\Gamma$ if and only if $\basegraph{\sigma}$ is a forest. 

The graph $\basegraph{\sigma}$ has $|S_0|=p-2$ edges and $|\sigma|$ many vertices, where, as seen above, we can assume $|\sigma| \leq p$. If $\basegraph{\sigma}$ is a forest, then the number of trees it contains (the connected components) is its Euler characteristic $|V(\basegraph{\sigma})| - |E(\basegraph{\sigma})| = |\sigma| - (p-2)$. For $|\sigma|\leq p-2$, the graph $\basegraph{\sigma}$ has at least as many edges as it has vertices, and thus it can never be a forest. We are left with two cases:\footnote{As a reassurance to the reader: a priori, the graph $\basegraph{\sigma}$ was allowed to have loops or multi-edges. However, any graph $\basegraph{\sigma}$ having either of those is certainly not a forest (since it has a cycle of length $1$ or $2$ respectively) and thus from here on, only standard (simple) graphs appear.}
\begin{itemize}
    \item Case I : $|V(\basegraph{\sigma})| = |\sigma| = p-1$, which forces $\basegraph{\sigma} \cong T$ to be a tree $T$,
    \item Case II : $|V(\basegraph{\sigma})|= |\sigma| = p$, which forces $\basegraph{\sigma} \cong T_1 + T_2$ to be a union of two trees $T_1, T_2$. We remark for later that, since the total number $p$ of vertices is odd, the two trees $T_1, T_2$ cannot be isomorphic (since one has odd and one has even number of vertices). Therefore we have
    \begin{equation} \label{eqn:T1T2automs}
        |\auts{T_1 + T_2}| = |\auts{T_1}| \cdot |\auts{T_2}|\,.
    \end{equation}
\end{itemize}
Denote by $\mathcal{L}_{\mathrm I}, \mathcal{L}_{\mathrm{II}}$ the set of partitions of $S$ corresponding to the cases I, II above. Then the current status of the calculation is that
\begin{align}
a(\Phi,G) & \equiv \sum_{\sigma \in \mathcal{L}_{\mathrm I} \sqcup \mathcal{L}_{\mathrm{II}}} ~ (-1)^{|\sigma|-1}\cdot (|\sigma|-1)! \mod p \nonumber\\
& \equiv \sum_{\sigma \in \mathcal{L}_{\mathrm I}} ~ (-1)^{p-2}\cdot (p-2)!
+ \sum_{\sigma \in \mathcal{L}_{\mathrm{II}}} ~ (-1)^{p-1}\cdot (p-1)!
\mod p \nonumber\\
& \equiv - |\mathcal{L}_{\mathrm I}| - |\mathcal{L}_{\mathrm{II}}| \mod p
\,. \label{eqn:aPhiGformulaCaseIII}
\end{align}
where we used that $p$ is odd and, due to Wilson's theorem, we have:
\[
- (p-2)! \equiv (p-1) \cdot (p-2)! \equiv (p-1)! \equiv -1 \mod p \,.
\]

To count the number of elements $\sigma \in \mathcal{L}_{\mathrm I}$, we can group them according to the isomorphism class $T$ of the tree $\basegraph{\sigma}$. Then, by \cref{Lem:countsigmainisomclass} we have
\begin{align*}
    |\mathcal{L}_{\mathrm I}| = 2^{p-2} (p-2)! \sum_{\substack{T \text{ tree on}\\\text{$p-1$ vertices}}} \frac{1}{|\auts{T}|} =  2^{p-2} (p-2)! \frac{(p-1)^{p-3}}{(p-1)!} = 2^{p-2} (p-1)^{p-4} \,.
\end{align*}
Here in the third equality we used \cref{Lem:Cayleyformula}. Plugging into the formula above we compute
\begin{align*}
- |\mathcal{L}_{\mathrm I}| & =  - 2^{p-2} (p-1)^{p-4} \equiv - 2^{p-2} (-1)^{p-4}  \equiv 2^{p-1} \cdot \frac{p+1}{2} \cdot (-1)^{p-3} \equiv \frac{p+1}{2} \mod p\,,
\end{align*}
where in the fourth congruence we used Fermat's little theorem.

We now turn to the sum in Case II. We are counting every forest twice by choosing a numbering $T_1$, $T_2$ for the two trees in the forest. It is important that never $T_1 \cong T_2$ as remarked above, so that for all forests we overcount with the factor $2$. Then we have
\begin{align}
    |\mathcal{L}_{\mathrm{II}}| & = \frac{1}{2} \cdot 2^{p-2} (p-2)! \sum_{j=1}^{p-1} \sum_{\substack{T_1 \text{ tree, } j \text{ vertices,}\\T_2 \text{ tree, } p-j \text{ vertices}}} \frac{1}{|\auts{T_1 + T_2}|}\nonumber \\
    & = 2^{p-3} (p-2)! \sum_{j=1}^{p-1} \left( \sum_{\substack{T_1 \text{ tree, } j \text{ vertices}}} \frac{1}{|\auts{T_1}|}\right) \cdot \left( \sum_{\substack{T_2 \text{ tree, } p-j \text{ vertices}}} \frac{1}{|\auts{T_2}|}\right) \nonumber \\
    & = 2^{p-3} (p-2)! \sum_{j=1}^{p-1} \frac{j^{j-2}}{j!} \cdot \frac{(p-j)^{p-j-2}}{(p-j)!} \nonumber \\
    & = 2^{p-3} \sum_{j=1}^{p-1} \big( j^{j-2} (p-j)^{p-j-3} 
    \cdot \prod_{k=2}^{j} \frac{p-k}{k} \big) \,, \label{eqn:Case2middlestep}
\end{align}
where the first equality uses \cref{Lem:countsigmainisomclass}, the second uses the observation \eqref{eqn:T1T2automs} and the third uses \cref{Lem:Cayleyformula}. In the fourth equation we rearranged the factors of the factorials.

To continue, we observe that the final formula \eqref{eqn:Case2middlestep} for $|\mathcal{L}_{\mathrm{II}}|$ in fact \emph{is well-defined modulo p}. That is, we never divide by a number divisible by~$p$. Thus we are allowed to evaluate and simplify this expression modulo $p$:
\begin{align*}
- |\mathcal{L}_{\mathrm{II}}| & = - 2^{p-3} \sum_{j=1}^{p-1} \big( j^{j-2} (p-j)^{p-j-3} \cdot \prod_{k=2}^{j} \frac{p-k}{k} \big) \mod p \\
& \equiv - \frac{1}{4} 2^{p-1} \sum_{j=1}^{p-1}  j^{j-2} (-j)^{p-j-3} \cdot (-1)^{j-1}  \mod p \\
& \equiv \frac{1}{4} \sum_{j=1}^{p-1}  j^{p-5} (-1)^{p-5} 
\equiv  \frac{1}{4} \sum_{j=1}^{p-1}  j^{p-5}  
\mod p \,.
\end{align*}
To simplify the sum from $j=1$ to $p-1$, we observe that $j$ ranges over $\mathbb{F}_p^\times$. The map $\mathbb{F}_p^\times \to \mathbb{F}_p^\times, j \mapsto 2j$ is a bijection. Thus we have
\begin{equation} \label{eqn:doublingtrick}
16 \sum_{j=1}^{p-1} j^{p-5} \equiv 16 \sum_{j=1}^{p-1} (2j)^{p-5} \equiv 16 \cdot 2^{p-5} \sum_{j=1}^{p-1} j^{p-5} \equiv 2^{p-1} \sum_{j=1}^{p-1} j^{p-5} \equiv \sum_{j=1}^{p-1} j^{p-5} \mod p \,, 
\end{equation}
where the last equality is again Fermat's little theorem.
Subtracting the right hand side of \eqref{eqn:doublingtrick} from the left hand side, we see that that $p$ divides $15 \cdot \sum_{j=1}^{p-1} j^{p-5}$. 
If $p>5$, this implies that $p$ divides $\sum_{j=1}^{p-1} j^{p-5}$ and so by the previous computation we conclude $- |\mathcal{L}_{\mathrm{II}}| \equiv 0 \mod p$. 

We are now ready to plug in our computations in the formula \eqref{eqn:aPhiGformulaCaseIII} and to conclude the proof. In the case $p=5$ we obtain
\[
a(\Phi,G) \equiv  - |\mathcal{L}_{\mathrm I}| - |\mathcal{L}_{\mathrm{II}}| \equiv \frac{5+1}{2} + \frac{1}{4} \cdot 4 \equiv 4 \neq 0 \mod 5\,.
\]
On the other hand, for $p>5$ we saw that the contribution from Case II vanishes and thus we have $a(\Phi,G) = (p+1)/2 \neq 0 \mod p$. In any case we can conclude that $a(\Phi,G)$ does not vanish modulo $p$, finishing the proof.
\end{proof}

We are now able to prove the establish hardness of the problems of modular counting of forests and matroid bases, which we restate for convenience. Recall that we write $\#_p\textsc{Forests}$ for the problem of, given a graph $G$ and a positive integer~$k$, computing the number of forests with $k$ edges in $G$, modulo $p$; the parameterization is given by~$k$. Similarly, we write $\#_p\textsc{Bases}$ for the problem of, given a linear matroid $M$ of rank $k$ in matrix representation, computing the number of bases of $M$, modulo $p$; the parameterization is given by the rank $k$ of $M$.

\forestmatroids*
\begin{proof}
We prove the lower bound for $\#_p\textsc{Forests}$; a parsimonious reduction to $\#_p\textsc{Bases}$ follows easily by (deterministic) polynomial-time matroid truncation: Given a graph $G$, the number of $k$-forests in~$G$ is equal to the number of $k$-independent sets in the graphic matroid $M(G)$ (which is a linear matroid). A $k$-truncation of $M(G)$ is a matroid $M$ of rank $k$ whose bases are in one-to-one correspondence to the $k$-independent sets of $M(G)$, and thus, to the $k$-forests in $G$. We refer the reader to~\cite{LokshtanovMPS18} for a detailed exposition of $k$-truncations and, in particular, for a deterministic algorithm which, on input a linear matroid and a positive integer $k$, computes a (linear) $k$-truncation of the matroid in polynomial time.

Now observe that $\#_p\textsc{Forests}$ is equal to the problem $\#_p\edgesubsprob(\Phi)$ for $\Phi$ being the property of being a forest. For $p=3$, the lower bound is established by Corollary~\ref{cor:forest_exact_and_3}. For each prime $p\geq 5$ we have that the family of Cayley graph expanders constructed in Theorem~\ref{thm:savegeneratorsforpquotients} is a $p$-obstruction for $\Phi$ by Proposition~\ref{Prop:forestpobstruction}. Consequently, the lower bound follows by Lemma~\ref{lem:monotonicity}.
\end{proof}

\subsection{Exact Counting of Bipartite Subgraphs}\label{sec:bipartite_hardness}

In what follows, we prove our results on counting bipartite $k$-edge subgraphs, which we restate for convenience:

\bipartitehard*

\begin{table}[htb]
    \centering
    \begin{tabular}{cccc}
        $\basegraph{\sigma}$ & No. of $\sigma$ & $|\sigma|$ & Contribution   \\ \hline
         \begin{tikzpicture}[main node/.style={circle,draw,font=\Large,scale=0.5}]
         \draw[white] (-0.2,0.2)  rectangle (1.2, 1);
         \node[main node] (A) at (0,0) {};
         \node[main node] (B) at (1,0) {};
         \node[main node] (C) at (0,0.3) {};
         \node[main node] (D) at (1,0.3) {};
         \node[main node] (E) at (0,0.6) {};
         \node[main node] (F) at (1,0.6) {};         
         \draw (A) -- (B);
         \draw (C) -- (D);
         \draw (E) -- (F);
         \end{tikzpicture} & 1 & 6 & $-120\cdot 1$\\
         \begin{tikzpicture}[main node/.style={circle,draw,font=\Large,scale=0.5}]
         \draw[white] (-0.2,0.2)  rectangle (1.2, 1);
         \node[main node] (A) at (0,0) {};
         \node[main node] (B) at (1,0) {};
         \node[main node] (C) at (0,0.3) {};
         \node[main node] (D) at (1,0.3) {};
         \node[main node] (E) at (2,0) {};
         \draw (A) -- (B) -- (E);
         \draw (C) -- (D);
         \end{tikzpicture} & 12 & 5 & $24\cdot 12$\\
         \begin{tikzpicture}[main node/.style={circle,draw,font=\Large,scale=0.5}]
         \draw[white] (-0.2,0.2)  rectangle (1.2, 1);
         \node[main node] (A) at (0,0) {};
         \node[main node] (B) at (1,0) {};
         \node[main node] (C) at (3,0) {};
         \node[main node] (E) at (2,0) {};
         \draw (A) -- (B) -- (E) -- (C);
         \end{tikzpicture} & 24 & 4 & $- 6\cdot 24$\\    
         \begin{tikzpicture}[main node/.style={circle,draw,font=\Large,scale=0.5}]
         \draw[white] (-0.2,0.2)  rectangle (1.2, 1);
         \node[main node] (A) at (0,0) {};
         \node[main node] (B) at (1,0) {};
         \node[main node] (C) at (1,0.3) {};
         \node[main node] (D) at (1,-0.3) {};
         \draw (A) -- (B);
         \draw (A) -- (C);
         \draw (A) -- (D);
         \end{tikzpicture} & 8 & 4 & $-6\cdot 8$\\    
         \begin{tikzpicture}[main node/.style={circle,draw,font=\Large,scale=0.5}]
         \draw[white] (-0.2,0.2)  rectangle (1.2, 1);
         \node[main node] (A) at (0,0) {};
         \node[main node] (B) at (1,0) {};
         \node[main node] (C) at (2,0) {};
         \node[main node] (D) at (3,0) {};
         \draw (A) to[bend left] (B);
         \draw (A) to[bend right] (B);
         \draw (C) -- (D);
         \end{tikzpicture} & 6 & 4 & $-6\cdot 6$\\     
         \begin{tikzpicture}[main node/.style={circle,draw,font=\Large,scale=0.5}]
         \draw[white] (-0.2,0.2)  rectangle (1.2, 1);
         \node[main node] (A) at (0,0) {};
         \node[main node] (B) at (1,0) {};
         \node[main node] (C) at (2,0) {};
         \draw (A) to[bend left] (B);
         \draw (A) to[bend right] (B);
         \draw (B) -- (C);
         \end{tikzpicture} & 24 & 3 & $2\cdot 24$\\       
         \begin{tikzpicture}[main node/.style={circle,draw,font=\Large,scale=0.5}]
         \draw[white] (-0.5,0.2)  rectangle (1.2, 1);
         \node[main node] (A) at (0,0) {};
         \node[main node] (B) at (1,0) {};
         \draw (A) to[bend left] (B);
         \draw (A) to (B);
         \draw (A) to[bend right] (B);
         \end{tikzpicture} & 4 & 2 & $-1\cdot 4$\\\hline
         Total contribution &  &  & $-16$\\
         ~
    \end{tabular}
    \caption{List of bipartite graphs $\basegraph{\sigma}$ for $m=3$ generators; here we give the isomorphism class of $\basegraph{\sigma}$, the number of partitions $\sigma$ with the corresponding isomorphism class, the number of blocks of sigma and the total contribution to $a(\Phi,G_i) \mod p$. The number of possible $\sigma$ can be computed by enumeration or via a variant of \cref{Lem:countsigmainisomclass} where the graph is allowed to be a multigraph with loops (using the correct notion of the automorphism group of such a graph).}
    \label{tab:bipartbasegraph}
\end{table}

Similarly as in the previous section, we will use the graph $\basegraph{\sigma}$ to prove that the property of being bipartite has an obstruction. The central argument necessary for the latter is given by the following proposition.

\begin{proposition} \label{Pro:basegraphbipartite}
Let $\Gamma$ be a finite group of odd order, let $S_0 \subseteq \Gamma$ be a set of $m$ generators and $S = \{g^{\pm 1} : g \in S_0 \} \subseteq \Gamma$ be the associated symmetric set of $2m$ generators. Let $G = \mathcal{C}(\Gamma, S)$ be the associated Cayley graph and let $\sigma$ be a partition of $S$.  Then, the graph $\fracture{G}{\sigma}$ is bipartite if and only if $\basegraph{\sigma}$ is.\footnote{The property of being bipartite extends naturally to graphs with loops and multiedges: a graph $H$ is bipartite if there is a partition $V(H) = L \sqcup R$ of the vertices into two disjoint sets such that for an edge connecting vertices $v_1, v_2$, one of them is in $L$ and one of them is in $R$. This is easily seen to be equivalent to the property that the graph has no odd cycle.}
\end{proposition}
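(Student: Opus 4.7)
The plan is to leverage the graph homomorphism $\Psi : \fracture{G}{\sigma} \to \basegraph{\sigma}$ established in \cref{Lem:coverforest}, which is surjective and a local isomorphism (so it behaves like a covering map on the $1$-skeleton). Bipartiteness is equivalent to the absence of closed walks of odd length, so I would prove the two implications separately; the odd-order hypothesis on $\Gamma$ will enter only in the reverse direction.

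For the direction ``$\basegraph{\sigma}$ bipartite $\Rightarrow$ $\fracture{G}{\sigma}$ bipartite'', I would fix a proper $2$-coloring $c : V(\basegraph{\sigma}) \to \{0,1\}$. Since $\Psi$ is a graph homomorphism, the composition $c \circ \Psi$ is a candidate $2$-coloring of $\fracture{G}{\sigma}$. Bipartiteness of $\basegraph{\sigma}$ forbids loops, hence any edge $\{u,v\}$ of $\fracture{G}{\sigma}$ maps to an edge $\{\Psi(u),\Psi(v)\}$ of $\basegraph{\sigma}$ with $\Psi(u)\neq\Psi(v)$, so $c(\Psi(u))\neq c(\Psi(v))$, making $c \circ \Psi$ proper.

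The main content lies in the converse. Suppose $\basegraph{\sigma}$ contains a closed walk $C'$ of odd length $\ell$ based at some vertex $w^B$. Each edge traversed by $C'$ corresponds to a generator in $S_0$ used in a specific direction, so I can form the \emph{monodromy} element $h \in \Gamma$ as the product of the corresponding $g^{\pm 1}$ along $C'$ in order. Fix a preimage $v_0^B \in \Psi^{-1}(w^B)$; by the local isomorphism property of $\Psi$, the walk $C'$ lifts uniquely to a walk in $\fracture{G}{\sigma}$ starting at $v_0^B$ and (by tracing the multiplications of group elements at each step) terminating at $(v_0 h)^B$, which is again a preimage of $w^B$. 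Iterating, the $k$-th such lift ends at $(v_0 h^k)^B$; letting $n$ be the order of $h$ in $\Gamma$, the $n$-fold concatenation produces a closed walk at $v_0^B$ of length $\ell \cdot n$ in $\fracture{G}{\sigma}$. Since $|\Gamma|$ is odd, Lagrange's theorem forces $n$ to be odd; together with $\ell$ odd this gives an odd closed walk in $\fracture{G}{\sigma}$, so $\fracture{G}{\sigma}$ is not bipartite.

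The main subtlety is purely bookkeeping: one must check that the lift of $C'$ really advances the group element by the prescribed $h$, which is straightforward from the description of edges in $\fracture{G}{\sigma}$ as $\{v^B,(vg)^{B'}\}$ for $g \in B$, $g^{-1}\in B'$, and that concatenating lifts at matching endpoints yields an honest walk (no non-backtracking condition is needed to detect non-bipartiteness). The odd-order hypothesis is used only at the final step to guarantee that $n$ is odd; without it the argument would fail, as illustrated by Cayley graphs over groups of even order where $h$ could have even order and $\ell n$ would then be even.
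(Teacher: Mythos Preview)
Your proof is correct and follows essentially the same approach as the paper: pull back a bipartition of $\basegraph{\sigma}$ along the homomorphism $\Psi$ from \cref{Lem:coverforest} for one direction, and for the other lift an odd closed walk, iterate by the order of the monodromy element $h = g_1\cdots g_\ell$, and use that $|\Gamma|$ odd forces the order of $h$ to be odd. Your explicit remark that no non-backtracking condition is needed is a slight sharpening of presentation but not a different argument.
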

\begin{proof}
By \cref{Lem:coverforest} there is a well-defined graph homomorphism $\Psi: \fracture{G}{\sigma} \to \basegraph{\sigma}$. Thus if $\basegraph{\sigma}$ is bipartite with a partition $V(\basegraph{\sigma})=L \sqcup R$ of the vertices, then the partition $V(\fracture{G}{\sigma}) = \Psi^{-1}(L) \sqcup \Psi^{-1}(R)$ shows that $\fracture{G}{\sigma}$ is bipartite.

For the converse direction, assume that $\basegraph{\sigma}$ is not bipartite. This means that there is a cycle 
\[w^{B_0}, w^{B_1}, \ldots, w^{B_\ell}=w^{B_0} \in V(\basegraph{\sigma})\]
in $\basegraph{\sigma}$ of odd length $\ell$, which can be specified by a starting vertex $w^{B_0} \in V(\basegraph{\sigma})$ together with a choice of $\ell$ elements $g_1, \ldots, g_\ell \in S$ of the generating set (corresponding to the oriented edges that our cycle follows). Choose any $v \in V(\fracture{G}{\sigma})=\Gamma$ then we can lift the cycle above to a walk
\[
v^{B_0}, (v\cdot g_1)^{B_1}, \ldots, (v\cdot g_1 \cdots g_\ell)^{B_0} \in V(\fracture{G}{\sigma})
\]
in the graph $\fracture{G}{\sigma}$. Let $o \geq 1$ be the order of the element $g_1 \cdots g_\ell$ in the group $\Gamma$, which must be an odd number since it divides the order of the group which is assumed to be odd. Then we can repeat the lifting procedure above $o$ times (always starting the lift at the endpoint of the previous walk) to obtain a walk in $\fracture{G}{\sigma}$ of length $o \cdot \ell$ which is odd. The endpoint of this walk is the vertex
\[
(v\cdot (g_1 \cdots g_\ell)^o)^{B_0} = v^{B_0}
\]
that we started at, so indeed we found a walk of odd length (which must contain a cycle of odd length) and so the graph $\fracture{G}{\sigma}$ is not bipartite.
\end{proof}

\begin{proposition} \label{Prop:bipartiteobstruction}
Let $\Phi$ be the property of being bipartite, let $p \geq 3$ be a prime such that there exists a family 
$\mathcal{G}=\{G_1,G_2,\dots\}$ of $(n_i,6,c)$-expanders for some positive $c$ such that the $G_i$ are Cayley-graphs for some $p$-groups $\Gamma_i$. Then $\mathcal{G}$ is an obstruction for $\Phi$ modulo $p$.
\end{proposition}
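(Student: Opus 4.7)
The plan is to adapt the orbit-counting strategy used in the proof of \cref{lem:obstruction} to the bipartite property, exploiting the degree-$6$ hypothesis together with \cref{Pro:basegraphbipartite} to collapse the whole computation to a single, universal tabulation that depends only on $|S_i|=6$.

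\emph{Step 1: Reduction to fixed-point fractures.} Each $\Gamma_i$ acts on $V(G_i)=\Gamma_i$ by left multiplication, and this action extends to the set of fractures of $G_i$ as in \cref{sec:obstructions}. Because the isomorphism type of $\fracture{G_i}{\sigma}$ only depends on the orbit of $\sigma$, this action preserves both membership in $\mathcal{L}(\Phi,G_i)$ and the weight $\prod_v (-1)^{|\sigma_v|-1}(|\sigma_v|-1)!$. Since $\Gamma_i$ is a $p$-group, every orbit has cardinality a power of $p$, so modulo $p$ only fixed points contribute. Transitivity of $\Gamma_i$ on vertices identifies fixed-point fractures with single partitions $\sigma$ of the $6$-element symmetric generating set $S_i$, interpreted via the $\Gamma_i$-equivariant identification of $E_{G_i}(v)$ with $S_i$.

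\emph{Step 2: Collapse of the product and passage to $\basegraph{\sigma}$.} Since $|V(G_i)|=|\Gamma_i|$ is a power of $p$, Fermat's little theorem gives $u^{|V(G_i)|}\equiv u \pmod p$ for every integer $u$, so the product over vertices collapses to a single factor. The hypothesis $p\ge 3$ ensures that $\Gamma_i$ has odd order, so \cref{Pro:basegraphbipartite} applies and $\fracture{G_i}{\sigma}$ is bipartite if and only if the $3$-edge multigraph $\basegraph{\sigma}$ is bipartite. Combining these two observations yields the universal formula
\[
a(\Phi,G_i) \equiv \sum_{\substack{\sigma\text{ partition of }S_i\\ \basegraph{\sigma}\text{ bipartite}}} (-1)^{|\sigma|-1}(|\sigma|-1)! \pmod p.
\]

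\emph{Step 3: Evaluation of the universal sum.} The right-hand side depends only on $|S_i|=6$, so it is a finite and purely combinatorial quantity independent of $i$. Partitions with $\basegraph{\sigma}$ containing a loop (i.e., some block containing both $g$ and $g^{-1}$ for some $g\in S_0$) are automatically excluded as non-bipartite. The remaining bipartite multigraphs on $3$ edges fall into the seven isomorphism classes listed in \cref{tab:bipartbasegraph}; for each class, the number of $\sigma$ realising it can be computed either by a mild generalisation of \cref{Lem:countsigmainisomclass} (accounting for multi-edges via the correct notion of automorphism group of a multigraph) or by direct enumeration. Summing the signed weighted counts yields exactly $-16$.

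\emph{Step 4: Conclusion.} Since $-16=-2^4$ and $p\ge 3$ is odd, $-16\not\equiv 0\pmod p$, so $a(\Phi,G_i)\not\equiv 0\pmod p$ for \emph{every} $i$, which certainly gives infinitely many such $i$; hence $\mathcal{G}$ is a $p$-obstruction for $\Phi$. The only non-routine part of the argument is the verification of \cref{tab:bipartbasegraph} in Step~3, but this is a finite enumeration over partitions of a $6$-element set, the vast majority of which are eliminated either by the presence of a loop in $\basegraph{\sigma}$ or by an odd cycle.
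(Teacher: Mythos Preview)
Your proof is correct and follows essentially the same approach as the paper's: reduce to $\Gamma_i$-fixed fractures via the $p$-group orbit argument, collapse the vertex product using Fermat's little theorem, invoke \cref{Pro:basegraphbipartite} (noting that $p\ge 3$ gives odd group order), and then read off the value $-16$ from the tabulation in \cref{tab:bipartbasegraph}. Your presentation is slightly more explicit about why \cref{Pro:basegraphbipartite} applies and about the loop case being automatically non-bipartite, but the argument is otherwise identical.
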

\begin{proof}
By definition, it suffices to show that for all $i$ we have that the number 
\[ 
a(\Phi,G) = \sum_{\sigma \in \mathcal{L}(\Phi,G_i)} ~\prod_{v \in V(G)} (-1)^{|\sigma_v|-1}\cdot (|\sigma_v|-1)!
\]
is nonzero modulo $p$. Since the graph $G_i$ is a $6$-regular Cayley graphs for $\Gamma_i$, there is a set $S_0 \subseteq \Gamma_i$ of three generators such that $G_i = \mathcal{C}(\Gamma,S)$ for $S=\{g^{\pm 1}: g \in S_0\}$.

As before, when evaluating $a(\Phi,G)$ modulo $p$, we can reduce to the fractures $\sigma \in \mathcal{L}(\Phi,G_i)^\Gamma$ invariant under the action of $\Gamma$. Such a $\sigma$ can be interpreted as a partition of the set $S$. Rewriting the above formula we have
\[
a(\Phi,G_i) \equiv \sum_{\sigma \in \mathcal{L}(\Phi,G_i)^{\Gamma_i}} ~\left( (-1)^{|\sigma|-1}\cdot (|\sigma|-1)! \right)^{|V(G)|} \mod p \,.
\]
Since $|V(G)| = |\Gamma|$ is a power of $p$ by the assumption that $\Gamma$ is a $p$-group, by Fermat's little theorem we have $u^{|V(G)|} \equiv u \mod p$ for all integers $u$ so that we can remove the exponent $|V(G)|$ in the formula above:
\begin{equation} \label{eqn:aBipartitefinal}
a(\Phi,G_i) \equiv \sum_{\sigma \in \mathcal{L}(\Phi,G_i)^{\Gamma_i}} ~ (-1)^{|\sigma|-1}\cdot (|\sigma|-1)!  \mod p \,.
\end{equation}
By \cref{Pro:basegraphbipartite}, the graph $\fracture{G_i}{\sigma}$ is bipartite if and only if $\basegraph{\sigma}$ is. In \cref{tab:bipartbasegraph} we list all possible isomorphism classes of bipartite graphs $\basegraph{\sigma}$ as $\sigma$ varies through the partitions of the set $S$ of size $6$ -- here we use that our graphs are $6$-regular. Summing all the contributions to \eqref{eqn:aBipartitefinal} we see 
$
a(\Phi,G_i) = -16 \neq 0 \mod p\,.
$
\end{proof}
\begin{proof}[Proof of \cref{thm:bipartite-hard}]
By \cref{lem:monotonicity} it suffices to show that the property $\Phi$ of being bipartite has an obstruction. From \cref{Prop:bipartiteobstruction} it follows that such an obstruction exists modulo $p \geq 3$ if we can find a family 
$\mathcal{G}=\{G_1,G_2,\dots\}$ of $(n_i,6,c)$-expanders for some positive $c$ such that the $G_i$ are Cayley-graphs for some $p$-groups $\Gamma_i$. By \cref{thm:main_expanders_intro} such a family exists for $p=5$.
\end{proof}

\begin{remark}\label{rem:sage_results}
Instead of considering $6$-regular expanders as above, we could consider more generally $(2m)$-regular expanders for some $m \geq 2$ and use the same method as in the proof of \cref{Prop:bipartiteobstruction} to compute $a(\Phi,G_i)$ modulo $p$. The results for the first few $m$, computed using the software SageMath \cite{sagemath}, are as follows:
\begin{center}
\begin{tabular}{cccccc}
$m$ & 2 & 3 & 4 &5 & 6  \\
$a(\Phi,G_i) \mod p$ & 0 & -16 & 192 & -16576 & 1109760 
\end{tabular}
\end{center}
From this we see two things: firstly, $4$-regular expanders (such as used in \cite{RothSW20unpub}) cannot be used to show hardness since for $m=2$ the value of $a(\Phi,G_i)$ vanishes modulo $p$ for all $p$. Secondly, for $p=2$ the number  $a(\Phi,G_i)$ vanishes for all $m$ that we checked. If it vanishes for all $m$, then the question arises whether the problem of counting bipartite $k$-edge subgraphs modulo $2$ might actually be fixed-parameter tractable, or at least allow for a significant improvement over the brute-force algorithm. \lipicsEnd
\end{remark}

\section{Counting Paths and Cycles modulo 2}
In this part of the paper, our goal is to construct faster algorithms for instances of the modular subgraph counting problem which rely on the algorithmic part of the Complexity Monotonicity, that is, we will count subgraphs modulo $2$ via counting homomorphisms modulo $2$.
More precisely, we find new algorithms for counting $k$-paths and $k$-cycles modulo $2$ as presented by Theorem~\ref{thm:main_cycles_mod_2}, which we restate for convenience; in what follows, a $k$-path (respectively a $k$-cycle) is a path (respectively a cycle) with $k$ edges.
\maincyclesmod*
We emphasize that the algorithm in the previous theorem is faster than the best known algorithms for (non-modular) counting of $k$-cycles/$k$-paths, which run in time $k^{O(k)}\cdot |V(G)|^{13k/75 + o(k)}$~\cite{CurticapeanDM17}. Furthermore, it was recently shown by Curticapean, Dell and Husfeldt~\cite{CurticapeanDH21} that counting $k$-paths modulo $2$ is $\mathsf{Mod}_2\W{1}$-hard, implying that we cannot hope for an algorithm for $k$-paths running in time $f(k)\cdot |V(G)|^{O(1)}$; we will see later (Lemma~\ref{lem:path_cycle_red}) that counting $k$-paths modulo $2$ tightly reduces to counting $k$-cycles modulo $2$, ruling out an algorithm running in time $f(k)\cdot |V(G)|^{O(1)}$ for $k$-cycles as well. 

We will start with the case of $k$-cycles.
The idea of the proof relies on the algorithmic part of the Complexity Monotonicity principle due to Curticapean, Dell and Marx~\cite{CurticapeanDM17}: We aim to express the number of $k$-cycles (mod $2$) as a finite linear combination of homomorphism counts (mod $2$), which allows us to reduce the problem to modular counting of homomorphisms. However, it turns out that this is not possible, which is ultimately due to the fact that a cycle has an even number of automorphisms: Using a transformation due to \lovasz~\cite[Equation~5.18]{Lovasz12}, Curticapean, Dell and Marx~\cite{CurticapeanDM17} proved that
\begin{equation}
    \#\subs{C_k}{\star} = \#\auts{C_k}^{-1}\sum_{\rho\geq \bot} \mu(\bot,\rho) \cdot \#\homs{C_k/\rho}{\star}\,,
\end{equation}
where the sum is over all elements $\rho$ of the partition lattice of $V(C_k)$, $\bot$ is the smallest partition (consisting only of singleton sets), and $\mu$ is the M\"obius function over the partition lattice.

However, we cannot use this transformation in case we wish to count modulo $2$ since $\#\auts{C_k}$ is even. In fact, we can show that it is impossible to express the function $\#_2\subs{C_k}{\star}$ as a finite linear combination of homomorphism counts over $\mathbb{F}_2$:

Indeed, we always have $\#_2\subs{C_k}{C_k}=1$, but for $k=2k'$ even, we know that $\#_2\homs{H}{C_k}=0$ for all graphs $H$, since $C_k$ has an automorphism of even order acting freely on the vertices of $C_k$ (and thus freely on the set of homomorphisms from $H$ to $C_k$). 

Instead, we solve this issue by considering the following intermediate problem: given a graph $G$, two nodes $s,t\in V(G)$, and a positive integer $k$, the goal is to compute the parity of the number of $s$-$t$-paths in $G$ with $k$ edges.

In fact, instead of graphs $G$ with two marked vertices $s,t$ we need to more generally consider graphs equipped with two unary relations $S$ and $T$, i.e. subsets $S, T$ of the vertices of the graph $G$. Those will eventually allow us to make the automorphism group trivial and hence to avoid the aforementioned problem. We write $J=(H,S,T)$, where $H$ is a graph and $S,T\subseteq V(H)$, and we call $J$ a $2$-\emph{labelled graph}. A homomorphism from a $2$-labelled graph $J=(H,S,T)$ to a $2$-labelled graph $J'=(H',S',T')$ is a homomorphism $\varphi$ from $H$ to $H'$ which additionally satisfies that $\varphi(v)\in S'$ if $v\in S$ and $\varphi(v)\in T'$ if $v\in T$. Readers familiar with relational structures might recognise $2$-labelled graphs as finite structures over the signature $\langle E^2,S^1,T^1\rangle$. In particular, homomorphisms between $2$-labelled graphs are precisely the (relational) homomorphisms between those structures. We write $\homs{J}{J'}$ for the set of all homomorphisms from $J$ to $J'$, and we write $\embs{J}{J'}$ for the subset of $\homs{J}{J'}$ containing only injective homomorphisms.

Given a $2$-labelled graph $J=(H,S,T)$ and a partition $\rho$ of $V(H)$, we define a $2$-labelled quotient $J/\rho=(H/\rho,S/\rho, T/\rho)$, where $H/\rho$ is the quotient graph (w.r.t.\ $\rho$) of $H$, and a vertex of $H$ is contained in $S/\rho$ (respectively $T/\rho$) if the corresponding block of $\rho$ contains a vertex in $S$ (respectively $T$).

Similarly as in case of (unlabelled) graphs, we can express $\#\embs{J}{\star}$ as a finite linear combination of homomorphism counts; we refer the interested reader to the argument in Chapter~5.2.3 in~\cite{Lovasz12} and point out that the same argument applies to $2$-labelled graphs: For each pair of $2$-labelled graphs $J$ and $J'$, we have
\begin{equation}\label{eq:mobius_2_labelled}
    \#\embs{J}{J'} = \sum_{\rho \geq \bot} \mu(\bot,\rho) \cdot \#\homs{J/\rho}{J'}\,.
\end{equation}
Now let $G$ be a graph, let $s,t\in V(G)$, and let $P_k$ be the graph with $k$ edges (assume that $V(P_k)=\{0,\dots,k\}$). Clearly, the number of $s$-$t$-paths with $k$ edges in $G$ is equal to \[\#\embs{(P_k,\{0\},\{k\})}{(G,\{s\},\{t\})}\,.\]

We are now able to present our main technical insight of the current section: using~\eqref{eq:mobius_2_labelled}, we can obtain a faster algorithm for counting $s$-$t$-paths of length $k$ modulo $2$, by understanding the parity of the M\"obius function of the partition lattice.
\begin{lemma}\label{lem:parity_paths_algo}
There is a (deterministic) algorithm that, given $G$, $s$, $t$, and $k$, computes \[\#_2\embs{(P_k,\{0\},\{k\})}{(G,\{s\},\{t\})}\] in time $k^{O(k)}\cdot |V(G)|^{k/6 + O(1)}$.
\end{lemma}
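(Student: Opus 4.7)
The plan is to apply the Möbius inversion identity for 2-labelled graphs \eqref{eq:mobius_2_labelled} to the embedding count, reduce it modulo 2, and evaluate the few surviving homomorphism counts by treewidth-based dynamic programming. Setting $J = (P_k,\{0\},\{k\})$ and $J' = (G,\{s\},\{t\})$, we have
\[
\#\embs{J}{J'} = \sum_{\rho \geq \bot} \mu(\bot,\rho)\cdot\#\homs{J/\rho}{J'}\,,
\]
where the sum ranges over partitions $\rho$ of $V(P_k) = \{0,1,\dots,k\}$ and the partition-lattice Möbius function is $\mu(\bot,\rho) = \prod_{B\in\rho}(-1)^{|B|-1}(|B|-1)!$.

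The key observation is that $(|B|-1)!$ is even whenever $|B| \geq 3$; hence modulo $2$ the only partitions $\rho$ contributing to the sum are those all of whose blocks have size at most $2$, and in that case $\mu(\bot,\rho)\equiv 1\pmod{2}$. Moreover, if any block of $\rho$ contains two adjacent vertices $i, i+1$ of $P_k$, then $P_k/\rho$ has a self-loop and therefore $\#\homs{J/\rho}{J'} = 0$, since the target $G$ is loop-free. Consequently, modulo $2$ the sum collapses to
\[
\#_2\embs{J}{J'} \equiv \sum_{\rho} \#_2\homs{J/\rho}{J'} \pmod 2\,,
\]
with $\rho$ ranging over partitions of $\{0,\dots,k\}$whose blocks all have size at most $2$ and contain no pair of consecutive integers.

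For each such surviving $\rho$, the simple quotient graph $P_k/\rho$ has at most $k+1$ vertices and maximum degree at most $4$: every vertex of $P_k$ has degree at most $2$, so identifying two non-adjacent vertices produces a vertex of degree at most $4$. Invoking the treewidth bounds for graphs of bounded maximum degree due to Fomin, Gaspers, Saurabh, and Stepanov~\cite{FominGSS09}, we obtain $\mathsf{tw}(P_k/\rho) \leq k/6 + O(1)$, together with a tree decomposition of this width computable in time depending only on $k$.

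The algorithm now proceeds as follows: enumerate all $k^{O(k)}$ valid partitions $\rho$ of $\{0,\dots,k\}$; for each one, build the $2$-labelled quotient $J/\rho$ and a tree decomposition of width at most $k/6 + O(1)$; then evaluate $\#_2\homs{J/\rho}{J'}$ by the standard treewidth-based dynamic programming for counting homomorphisms, where the $2$-labelling is enforced by restricting the DP states so that the distinguished vertices of $J/\rho$ are mapped to $s$ and $t$ respectively; each such evaluation takes time $|V(G)|^{k/6 + O(1)}$. Summing the resulting values modulo $2$ yields the required output within the claimed time bound. The main technical hurdle is the treewidth analysis: one must carefully exploit the structure of the quotients $P_k/\rho$ (arising from pairwise identifications of vertices of a path, hence being simultaneously sparse and of bounded degree) in conjunction with \cite{FominGSS09} to extract the sharp constant $1/6$ in the exponent.
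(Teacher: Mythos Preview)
Your proposal is correct and takes essentially the same approach as the paper. One small clarification: the $k/6$ constant from \cite{FominGSS09} comes from the fact that $P_k/\rho$ has at most $k$ \emph{edges} (together with maximum degree at most $4$); the vertex count $k+1$ with degree $4$ alone would only give $(k+1)/3$, and your closing remark about sparsity is implicitly addressing exactly this point.
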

\begin{proof}
By~\eqref{eq:mobius_2_labelled}, we have
\[ \#\embs{(P_k,\{0\},\{k\})}{(G,\{s\},\{t\})} = \sum_{\rho \geq \bot} \mu(\bot,\rho) \cdot \#\homs{(P_k,\{0\},\{k\})/\rho}{(G,\{s\},\{t\})}\,.\]
The explicit formula for the M\"obius function over the partition lattice (over a set with $k+1=|V(P_k)|$ elements) reads as follows~\cite{Stanley11} (see also~\cite{CurticapeanDM17}):
\[ \mu(\bot,\rho) = (-1)^{k+1-|\rho|}\cdot \prod_{B\in \rho}(|B|-1)!\,,\]
which is even if and only if there is a block $B\in \rho$ of size at least $3$. Let us thus write $\mathcal{P}^2$ for the set of all partitions $\rho$ of $V(P_k)$ that only contain blocks of size at most $2$. We have
\begin{equation}\label{eq:mobius_hack}
  \#\embs{(P_k,\{0\},\{k\})}{(G,\{s\},\{t\})} = \sum_{\rho \in \mathcal{P}^2} \#\homs{(P_k,\{0\},\{k\})/\rho}{(G,\{s\},\{t\})}\mod 2\,.
\end{equation}
Now observe that for every $\rho\in \mathcal{P}^2$, the graph $P_k/\rho$ has degree at most $4$ since each vertex of $P_k$ has degree at most $2$ and thus a vertex of $P_k/\rho$ corresponding to a block of $\rho$ of size at most $2$ can have degree at most $4$. Note further, that $P_k/\rho$ has at most $k$ edges, since the construction of a quotient graph can never add new edges. Consequently, using well-known bounds on the treewidth of $k$-edge graphs with bounded degree~\cite[Lemma 1]{FominGSS09},\footnote{Lemma 1 in \cite{FominGSS09} provides a bound for pathwidth, but treewidth is bounded (from above) by pathwidth.} we obtain that $\mathsf{tw}(P_k/\rho)\leq k/6 +O(1)$.

Our final algorithm thus computes $\#_2\embs{(P_k,\{0\},\{k\})}{(G,\{s\},\{t\})}$ by evaluating each term in~\eqref{eq:mobius_hack} and returning the sum (modulo $2$). Since the size of $\mathcal{P}^2$ is bounded by $k^{O(k)}$, it only remains to show how the numbers
\[\#_2\homs{(P_k,\{0\},\{k\})/\rho}{(G,\{s\},\{t\})}\] for $\rho \in \mathcal{P}^2$ can be computed in time $k^{O(k)}\cdot |V(G)|^{k/6 + O(1)}$. To this end, we recall that computing $\#_2\homs{(P_k,\{0\},\{k\})/\rho}{(G,\{s\},\{t\})}$ can be cast as counting homomorphisms between two relational structures, the left one of which has treewidth at most $k/6 +O(1)$. It is well-known that the latter can be done in time $k^{O(k)}\cdot |V(G)|^{k/6 + O(1)}$: First, we can compute an optimal tree decomposition\footnote{A tree decomposition of a structure is a tree decomposition of its underlying Gaifman graph. In particular, an optimal tree decomposition of $(P_k,\{0\},\{k\})/\rho$ is an optimal tree decomposition of $P_k/\rho$. We refer the reader to~\cite[Chapter~11]{FlumG06} for a detailed exposition.} of $(P_k,\{0\},\{k\})/\rho$ in time generously bounded by $k^{O(k)}$ --- see, for instance, \cite{FominTV15} for an algorithm running in time $\exp(O(k))$. Afterwards, we use the standard dynamic programming algorithm along the tree decomposition for counting homomorphisms in time $|V(G)|^{\mathsf{tw}(P_k/\rho)} \leq |V(G)|^{k/6 + O(1)}$ (see, for instance, the algorithm in~\cite[Theorem~14.7]{FlumG06}).
\end{proof}

The algorithm for counting $k$-cycles modulo $2$ is now an easy consequence.
\begin{lemma}\label{lem:cycle_count}
There exists a (deterministic) algorithm for counting $k$-cycles in a graph $G$ modulo $2$ in time $k^{O(k)}\cdot |V(G)|^{k/6 + O(1)}$.
\end{lemma}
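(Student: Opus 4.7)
The plan is to leverage \cref{lem:parity_paths_algo} by reducing $k$-cycle counts to counts of labelled $s$-$t$-paths. The first instinct — for each edge $\{s,t\}\in E(G)$ count $(k-1)$-paths from $s$ to $t$ in $G-\{s,t\}$ and sum — counts each $k$-cycle exactly $k$ times and so recovers $M \bmod 2$ only when $k$ is odd, the even case giving the useless identity $0 \equiv 0$. To obtain a uniform algorithm for all $k$, I plan to assign each cycle a canonical ``root'' so that the overcount factor becomes $1$.

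Concretely, I fix any total order $<$ on $V(G)$ and, for each $k$-cycle $C$, designate $v_0:=\min V(C)$ as its root. Letting $u_1,u_2 \in N_G(v_0)$ be the two neighbours of $v_0$ in $C$, minimality forces $u_1,u_2 > v_0$, and deleting $v_0$ from $C$ produces a $(k-2)$-edge path from $u_1$ to $u_2$ inside the induced subgraph $G_{>v_0}:=G[\{v\in V(G):v>v_0\}]$. This sets up a bijection between $k$-cycles of $G$ and triples $(v_0,\{u_1,u_2\},P)$ with $v_0\in V(G)$, $\{u_1,u_2\}\subseteq N_G(v_0)$ satisfying $u_1\ne u_2$ and $u_1,u_2>v_0$, and $P$ a $(k-2)$-edge path from $u_1$ to $u_2$ in $G_{>v_0}$. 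This yields the overcount-free identity
\[
M \;=\; \sum_{v_0\in V(G)}\ \sum_{\substack{\{u_1,u_2\}\subseteq N_G(v_0)\\ u_1\ne u_2,\ u_1,u_2>v_0}}\#\embs{(P_{k-2},\{0\},\{k-2\})}{(G_{>v_0},\{u_1\},\{u_2\})}\,,
\]
which the algorithm will evaluate term by term modulo $2$.

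The algorithm then enumerates the $O(|V(G)|^3)$ triples $(v_0,\{u_1,u_2\})$ and applies \cref{lem:parity_paths_algo} with parameter $k-2$ and host graph $G_{>v_0}$ to compute the parity of each inner summand in time $k^{O(k)}\cdot |V(G)|^{(k-2)/6+O(1)}$. Summing these parities modulo $2$ gives $M \bmod 2$ in total time $O(|V(G)|^3)\cdot k^{O(k)}\cdot |V(G)|^{(k-2)/6+O(1)} = k^{O(k)}\cdot |V(G)|^{k/6+O(1)}$, matching the claimed bound. The only non-routine step is verifying the bijection: closing $(v_0,\{u_1,u_2\},P)$ with the two edges $\{v_0,u_1\}$ and $\{v_0,u_2\}$ produces a simple $k$-cycle, and its minimum vertex is forced to be $v_0$ because $V(P)\cup\{u_1,u_2\}\subseteq V_{>v_0}$; the inverse construction from $C$ is equally forced. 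This bijection is the main (but minor) obstacle; the rest is bookkeeping.
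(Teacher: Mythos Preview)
Your proof is correct and follows essentially the same approach as the paper: both assign each $k$-cycle a canonical element to avoid overcounting and then invoke \cref{lem:parity_paths_algo} on the remaining path. The only difference is that you root at the minimal \emph{vertex} (giving a $(k-2)$-path in $G_{>v_0}$ and an $O(|V(G)|^3)$ outer loop), whereas the paper roots at the minimal \emph{edge} in a fixed edge ordering (giving a $(k-1)$-path and an $O(|V(G)|^2)$ outer loop); both variants yield the stated running-time bound.
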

\begin{proof}
Given $G$ and $k$, our goal is to count the number of $k$-cycles in $G$, modulo $2$.

Let $e_1,\dots, e_m$ be any ordering of the edges of $G$, and for $i=1, \ldots, m$ let $G_i$ be the graph with edges $e_1,\dots,e_{i}$ deleted. For each $k$-cycle $C$ in $G$ let $i=i(C)$ be the minimal index of an edge $e_i$ contained in $C$. Then the subgraph $C \setminus e_i$ is contained in $G_i$ and is a $u_i-v_i$-path of length $k-1$ for $\{u_i,v_i\}=e_{i}$. The map
\[
\{k\text{-cycles $C$ in $G$}\} \to \bigcup_{i=1}^m \left\{\text{$u_i-v_i$-paths with $k-1$ edges in $G_i$}\right\}, C \mapsto (i=i(C), C \setminus e_i)
\]
is easily seen to be a bijection with inverse $(i,P) \mapsto P \cup e_i$. Thus the sum of parities of the number of $u_i$-$v_i$-paths with $k-1$ edges in $G_i$ equals, modulo $2$, the parity of the number of $k$-cycles in~$G$.

We have seen that the number of $u_i$-$v_i$-paths with $k-1$ edges in $G_i$ is equal to
\[\#\embs{(P_{k-1},\{0\},\{k-1\})}{(G_i,\{u_i\},\{v_i\})} \,,\]
the parity of which can be computed in time
\[ (k-1)^{O(k-1)}\cdot |V(G_i)|^{(k-1)/6 + O(1)} \leq k^{O(k)}\cdot |V(G)|^{k/6 + O(1)}\]
by Lemma~\ref{lem:parity_paths_algo}.
Since there are $O(|V(G)|^2)$ many edges in $G$, the total running time is bounded by
\[O(|V(G)|^2) \cdot k^{O(k)}\cdot |V(G)|^{k/6 + O(1)} = k^{O(k)}\cdot |V(G)|^{k/6 + O(1)}\,,\]
concluding the case of $k$-cycles.
\end{proof}

For the case of $k$-paths, we could reduce to the problem of counting $s$-$t$-paths of length $k$ modulo $2$ as well. However as discussed before, we will instead reduce to counting $k$-cycles modulo $2$, which also lifts the $\mathsf{Mod}_2\W{1}$-hardness from $k$-paths~\cite{CurticapeanDH21} to $k$-cycles.

\begin{lemma}\label{lem:path_cycle_red}
There exists a (deterministic) polynomial-time algorithm equipped with oracle access to the function
\[(\hat{G},\hat{k})\mapsto \#_2\subs{C_{\hat{k}}}{\hat{G}} \]
that, on input a graph $G$ and a positive integer $k$, computes the number of $k$-paths in $G$, modulo $2$. Each oracle query $(\hat{G},\hat{k})$ satisfies that $|V(\hat{G})|\in O(|V(G)|)$ and $\hat{k}\in O(k)$.
\end{lemma}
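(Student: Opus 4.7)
The plan is to reduce counting $k$-paths to counting $(k+2)$-cycles via an apex-vertex gadget. Given the input graph $G$ and parameter $k$, I would construct $G'$ from $G$ by adding a single new vertex $v^\ast$ adjacent to every vertex of $G$. Then I would make exactly two oracle queries, for $\#_2\subs{C_{k+2}}{G'}$ and $\#_2\subs{C_{k+2}}{G}$, and return their difference modulo $2$. Since $|V(G')| = |V(G)|+1$ and $k+2 \in O(k)$, both queries satisfy the required size bounds, and $G'$ is clearly constructible in polynomial time.

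The correctness comes from a bijection. The $(k+2)$-cycles of $G'$ split into those avoiding $v^\ast$ (which are exactly the $(k+2)$-cycles of $G$) and those using $v^\ast$. Any $(k+2)$-cycle $C$ in $G'$ containing $v^\ast$ has exactly two edges incident to $v^\ast$, say $\{v^\ast,u\}$ and $\{v^\ast,w\}$, and removing $v^\ast$ from $C$ leaves a subgraph of $G$ whose edge set is a $u$-$w$-path with $k$ edges. Conversely, any $k$-edge path $P$ in $G$ with endpoints $u,w$ gives rise to a unique $(k+2)$-cycle in $G'$ by adjoining the two edges $\{v^\ast,u\}$ and $\{v^\ast,w\}$. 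Since subgraphs are counted as edge subsets, this map is a bijection between $k$-paths in $G$ and $(k+2)$-cycles in $G'$ through $v^\ast$, and in particular there is no double counting due to reversing path orientation. Hence
\[
\#\subs{P_k}{G} \;=\; \#\subs{C_{k+2}}{G'} - \#\subs{C_{k+2}}{G},
\]
which immediately gives the claimed identity modulo $2$.

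There is no real obstacle beyond verifying the bijection carefully; the only point that requires attention is confirming that the endpoint-pair is always well defined (i.e., that a cycle through $v^\ast$ uses exactly two edges at $v^\ast$, which is automatic since cycles are $2$-regular) and that the correspondence is genuinely one-to-one as a map on \emph{edge subsets} rather than on labelled walks. Once those two sanity checks are performed, the reduction follows in polynomial time with exactly two oracle queries, and the size bounds on $\hat G$ and $\hat k$ are immediate.
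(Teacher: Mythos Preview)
Your argument is correct. The apex-vertex construction yields a clean bijection between $k$-paths in $G$ and $(k+2)$-cycles through $v^\ast$ in $G'$, and subtracting off the $(k+2)$-cycles of $G$ isolates exactly these. The only sanity checks you flag (that $v^\ast$ has degree exactly $2$ in any cycle through it, and that the correspondence is on edge-subsets so no orientation overcount occurs) are routine.

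Your route is genuinely different from the paper's. The paper attaches, for every unordered pair $\{u,v\}$ of vertices of $G$, two pendant vertices $u',v'$ and compares the number of $(k+3)$-cycles with and without the extra edge $\{u',v'\}$; this isolates the number of $u$--$v$ paths of length $k$, and summing over all pairs recovers the total. That approach makes $O(|V(G)|^2)$ oracle queries with $\hat{k}=k+3$, whereas your apex gadget needs only two queries with $\hat{k}=k+2$. So your reduction is more economical in the number of oracle calls while still meeting the same $|V(\hat G)|\in O(|V(G)|)$ and $\hat k\in O(k)$ bounds. The paper's construction, on the other hand, has the minor advantage that the auxiliary graphs have maximum degree at most $\Delta(G)+1$ rather than $|V(G)|$, which could matter if one cared about degree-sensitive oracles; for the statement at hand this is irrelevant.
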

\begin{proof}
Given $G$ and $k$ as input, we construct a graph $G^-_{u,v}$ for every pair of distinct vertices $u,v\in V(G)$ as follows: We add fresh vertices $u'$ and $v'$ and edges $\{u',u\}$ and $\{v',v\}$. We also construct the graph $G^+_{u,v}$ which is obtained from $G^-_{u,v}$ by adding the edge $\{u',v'\}$. 

Now observe that $\#_2\subs{C_{k+3}}{G^+_{u,v}} - \#_2\subs{C_{k+3}}{G^-_{u,v}}$ is equal to the number of $k+3$ cycles in $G^+_{u,v}$ that contain the edge $\{u',v'\}$, modulo $2$. The latter, however, is equal to the number of $k$-paths from $u$ to $v$ in $G$, modulo $2$. 
Consequently, the parity of the total number of $k$-paths in $G$ is given by
\[ \sum_{\substack{u,v\in V(G)\\u\neq v}} \#_2\subs{C_{k+3}}{G^+_{u,v}} - \#_2\subs{C_{k+3}}{G^-_{u,v}} \mod 2\,,\]
where the sum is over all \emph{unordered} pairs of distinct vertices. The latter can easily be computed in polynomial time given access to the oracle $(\hat{G},\hat{k})\mapsto \#_2\subs{C_{\hat{k}}}{\hat{G}}$.
Additionally, each oracle query satisfies $|V(\hat{G})| = |V(G)|+2$ and $\hat{k}= k+3$, concluding the proof.
\end{proof}

\begin{proof}[Proof of Theorem~\ref{thm:main_cycles_mod_2}]
Lemma~\ref{lem:cycle_count} provides the algorithm for $k$-cycles. In combination with Lemma~\ref{lem:path_cycle_red}, we also obtain the desired algorithm for $k$-paths.
\end{proof}

\section{Modular Counting of Homomorphisms}
In the last part of the paper, we will provide an exhaustive complexity classification of parameterized modular counting of homomorphisms. 
While we used colour-prescribed homomorphisms in the previous parts of the paper for our hardness results, we will now study the uncoloured version of the problem. 

In what follows, let $p\geq 2$ be any fixed prime.
Our first goal is to understand under which conditions
\[\#_p\homs{H}{\star} = \#_p\homs{F}{\star}\,.\]

It turns out that, similar to the work of Faben and Jerrum~\cite{FabenJ15} on modular counting of homomorphisms with right-hand side restrictions, the automorphisms of $H$ of order $p$ can be used to successively ``reduce'' $H$ without changing the function $\#_p\homs{H}{\star}$. However, in contrast to~\cite{FabenJ15}, where non-fixed-points of an automorphism of order $p$ could just be deleted, we have to identify them in a quotient graph.

\subsection{Reduced Quotients}

\begin{definition}[$p$-reduced quotients]
Let $H$ be a graph and let $\alpha$ be an automorphism of $H$ of order $p$.
We define $H/\alpha$ to be the quotient graph of $H$ with respect to the partition induced by the orbits of $\alpha$, that is, each block is of the form \[B_v=\{v,\alpha(v),\cdots,\alpha^{p-1}(v)\}\,.\]

Now let $H=H_0,H_1,\dots,H_\ell$ be a sequence of graphs such that for all $i\in\{0,\dots,\ell-1\}$ we have $H_{i+1} = H_i/\alpha_i$ where $\alpha_i$ is an automorphism of $H_i$ of order $p$. If $H_\ell$ does not have an automorphism of order $p$, then it is called the $p$\emph{-reduced} quotient of $H$, denoted by $H^\ast_p$.
\end{definition}
We emphasize that the $p$-reduced quotient may have self-loops. However, we will see later that $p$-reduced quotients with self-loops can be ignored for our analysis.
Observe that we speak of \emph{the} $p$-reduced quotient in the previous definition. Indeed, we will show that it is unique, up to isomorphism.

The proof of Lemma~\ref{lem:homs_lovasz}, which is necessary for what follows and which we restate for convenience, is an easy adaption of \lovasz' well-known result on graph isomorphism via homomorphism vectors~\cite[Chapt.\ 5.4]{Lovasz12}.

\homslovasz*
\begin{proof}
Given two graphs $F$ and $G$, let us define
\[\mathsf{Sur}(F\rightarrow G):= \{ \varphi \in \homs{F}{G}~|~\varphi \text{ is vertex-surjective} \} \,.\]
By the principle of inclusion and exclusion, as well as by the assumption of the lemma, we have that for all $G$
\begin{align*}
\#_p\mathsf{Sur}(H\rightarrow G)  &=  \sum_{S\subseteq V(G)} (-1)^{\#V(G)-\#S} \cdot \#_p \homs{H}{G[S]} \mod p\\
~&=  \sum_{S\subseteq V(G)} (-1)^{\#V(G)-\#S} \cdot \#_p \homs{H'}{G[S]} \mod p\\
~&=\#_p\mathsf{Sur}(H'\rightarrow G) \mod p\,,
\end{align*}
where $G[S]$ is the subgraph of $G$ induced by $S$. Consequently, choosing $H$ and $H'$ for $G$, we obtain
\begin{align*}
\#_p\mathsf{Sur}(H'\rightarrow H)  &= \#_p\mathsf{Sur}(H\rightarrow H) = \#_p\auts{H} \text{ , and}\\
\#_p\mathsf{Sur}(H\rightarrow H')  &= \#_p\mathsf{Sur}(H'\rightarrow H') = \#_p\auts{H'}\,.
\end{align*}
Since neither of $H$ and $H'$ have an automorphism of order $p$, we have that, by Cauchy's Theorem, 
\[\#_p\auts{H} \neq 0 \text{ and } \#_p\auts{H'} \neq 0 \,.\]
Consequently, the sets $\mathsf{Sur}(H'\rightarrow H)$ and $\mathsf{Sur}(H\rightarrow H')$ are non-emtpy, implying the existence of the desired isomorphism.
\end{proof}

Observe that the condition on $H$ and $H'$ not having automorphisms of order $p$ is necessary: Suppose that $H$ is a $k$-matching and that $H'$ is a $k'$-matching for $k\neq k'$. Then, for all graphs $G$, we have \[\#_2\homs{H}{G}=\#_2\homs{H'}{G} = 0\,,\]
even though $H$ and $H'$ are not isomorphic.

The next lemma legitimises us to restrict on $p$-reduced quotients when considering the complexity of counting homomorphisms modulo $p$.
\begin{lemma}\label{lem:quotient_reduction}
Let $H$ be a graph. We have
\[\#_p\homs{H}{\star} = \#_p\homs{H^\ast_p}{\star} \,.\]
\end{lemma}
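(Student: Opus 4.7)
The plan is to reduce to the single-step statement: for any graph $H$ and any automorphism $\alpha$ of $H$ of order $p$, one has
\[
\#_p\homs{H}{G} \equiv \#_p\homs{H/\alpha}{G} \pmod{p}
\]
for every simple graph $G$. Granting this, the lemma follows by iterating along the reducing sequence $H=H_0,H_1,\dots,H_\ell=H^\ast_p$ supplied by the definition of the $p$-reduced quotient, since at each step $H_{i+1}=H_i/\alpha_i$ for some order-$p$ automorphism $\alpha_i$.

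For the single step I would exploit the natural action of the cyclic group $\langle\alpha\rangle\cong\mathbb{Z}/p\mathbb{Z}$ on the set $\homs{H}{G}$ given by $\varphi\mapsto\varphi\circ\alpha$; this is well-defined because $\alpha$ is an automorphism of $H$, so $\varphi\circ\alpha$ is again a homomorphism. Since $|\langle\alpha\rangle|=p$ is prime, every orbit has size $1$ or $p$, and therefore
\[
\#\homs{H}{G} \equiv \#\mathrm{Fix}_\alpha\homs{H}{G}\pmod{p},
\]
where $\mathrm{Fix}_\alpha$ denotes the set of $\alpha$-fixed homomorphisms, i.e.\ those $\varphi$ satisfying $\varphi=\varphi\circ\alpha$. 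The key step is to identify this fixed-point set with $\homs{H/\alpha}{G}$.

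A homomorphism $\varphi\in\homs{H}{G}$ lies in $\mathrm{Fix}_\alpha$ iff it is constant on each $\alpha$-orbit in $V(H)$, which is exactly the condition that $\varphi$ factors through the quotient map $q\colon H\to H/\alpha$. I would make this formal by defining mutually inverse maps: on one side, restriction $\varphi\mapsto\bar\varphi$, where $\bar\varphi$ sends an orbit $B_v$ to the common value $\varphi(v)$; on the other, pullback $\bar\varphi\mapsto\bar\varphi\circ q$. I would then verify that these maps land where claimed. The subtle point concerns self-loops in $H/\alpha$: if $H$ contains an edge $\{v,\alpha^i(v)\}$ for some $i$, then $H/\alpha$ carries a self-loop at $B_v$. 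In this case $G$ being simple forces $\homs{H/\alpha}{G}=\emptyset$, while simultaneously every $\alpha$-fixed $\varphi$ would need to assign $\varphi(v)=\varphi(\alpha^i(v))$ while $\{\varphi(v),\varphi(\alpha^i(v))\}\in E(G)$, which is impossible; thus $\mathrm{Fix}_\alpha\homs{H}{G}=\emptyset$ as well, and both sides are $0$. Away from self-loops the bijection is routine: edges of $H/\alpha$ come precisely from edges of $H$, so $\bar\varphi\circ q$ is a homomorphism iff $\bar\varphi$ is.

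The only real obstacle is the bookkeeping around self-loops in the quotient graph, which I would handle by the case split just described. Once the single-step identity is established, induction on the length $\ell$ of the reducing sequence immediately yields $\#_p\homs{H}{\star}=\#_p\homs{H^\ast_p}{\star}$. I note in passing that this argument, combined with \cref{lem:homs_lovasz}, also proves that $H^\ast_p$ is well-defined up to isomorphism when it has no self-loops: any two $p$-reduced quotients of $H$ without self-loops satisfy the same counting identity and are therefore isomorphic by \cref{lem:homs_lovasz}.
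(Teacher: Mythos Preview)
Your proposal is correct and follows essentially the same approach as the paper: act by $\langle\alpha\rangle$ on $\homs{H}{G}$ via precomposition, reduce modulo $p$ to the fixed-point set, identify the fixed homomorphisms with $\homs{H/\alpha}{G}$, and iterate along the defining sequence. Your treatment is in fact more careful than the paper's, which omits the explicit self-loop case split you provide.
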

\begin{proof}
Let $\alpha$ be an automorphism of order $p$ of $H$ and let $G$ be a graph. We show that $\#_p\homs{H}{G} = \#_p\homs{H/\alpha}{G}$, which implies the lemma. 
Observe first that $\langle\alpha\rangle$ acts on the set $\homs{H}{G}$ via $(\alpha^i \circ \varphi)(v) := \varphi(\alpha^i(v))$ for $i\in \{0,\dots,p-1\}$.

Consequently, $\homs{H}{G}$ can be partitioned into the orbits of this action. Since the size of each orbit must divide the size of the group $\langle\alpha\rangle$, which is the prime $p$, we obtain that only the fixed-points survive modulo $p$. More precisely, we have
\begin{align*}
    ~&\#_p\homs{H}{G} \\
    =& \#_p\{\varphi \in \homs{H}{G}~|~\forall v \in V(H): \varphi(v) = \varphi(\alpha(v))= \dots = \varphi(\alpha^{p-1}(v)) \}\\
    =& \#_p\homs{H/\alpha}{G} \,,
\end{align*}
concluding the proof.
\end{proof}

As the next step, we study the complexity of computing $H^\ast_p$. Here, we rely on the work of Arvind, Beigel and Lozano \cite{ABL2000} on the complexity of modular counting of graph automorphisms.

\begin{lemma}\label{lem:compute_quotient}
Let $p$ be a fixed prime.
The $p$-reduced quotient of a graph $H$ with $k$ vertices can be computed in time $\exp(\mathsf{poly}(k))$.
\end{lemma}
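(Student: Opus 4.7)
The plan is to implement the definition of $H^\ast_p$ directly as an iterative algorithm: starting from $H_0 = H$, at each step search for an automorphism $\alpha_i$ of order $p$ of the current graph $H_i$, replace $H_i$ by the quotient $H_i/\alpha_i$, and halt when no such automorphism exists. Correctness then reduces to verifying a termination bound, the cost of a single iteration, and the well-definedness of the output.

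First I would bound the iteration depth by $k$. Since $p$ is prime, every orbit of $\langle \alpha_i \rangle$ on $V(H_i)$ has cardinality $1$ or $p$, and because $\alpha_i \neq \mathrm{id}$ at least one orbit has size $p$. Hence $|V(H_{i+1})| \leq |V(H_i)| - (p-1) \leq |V(H_i)| - 1$, so the process stops after at most $k$ rounds.

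For a single iteration I would find an order-$p$ automorphism (or certify its nonexistence) by brute force: enumerate all at most $k!$ permutations $\pi$ of $V(H_i)$ and test in polynomial time whether $\pi$ preserves adjacency (and loops), satisfies $\pi \neq \mathrm{id}$, and obeys $\pi^p = \mathrm{id}$. Computing the quotient $H_i/\alpha_i$ is then polynomial. Each iteration therefore runs in time $k! \cdot \mathsf{poly}(k) = \exp(\mathsf{poly}(k))$, and the overall running time is bounded by $k \cdot \exp(\mathsf{poly}(k)) = \exp(\mathsf{poly}(k))$, as required.

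The remaining point is to argue that this algorithm outputs \emph{the} $p$-reduced quotient. By construction the output is obtained via a sequence of order-$p$ quotients terminating at a graph with no order-$p$ automorphism, hence is \emph{a} $p$-reduced quotient of $H$; uniqueness up to isomorphism will have been established earlier by combining Lemma~\ref{lem:quotient_reduction} (both choices of reduction share the function $\#_p\homs{\star}{\cdot}$) with Lemma~\ref{lem:homs_lovasz} (two graphs without order-$p$ automorphisms and equal such counting functions are isomorphic). The main non-obstacle worth highlighting is precisely this: there is no algorithmic difficulty beyond brute force, and the only conceptual ingredient is the already-proved well-definedness of $H^\ast_p$.
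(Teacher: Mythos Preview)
Your proof is correct. The termination bound, the brute-force search over permutations, and the appeal to Lemmas~\ref{lem:quotient_reduction} and~\ref{lem:homs_lovasz} for well-definedness are all sound, and $k!\cdot\mathsf{poly}(k)$ is indeed $\exp(\mathsf{poly}(k))$.

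The paper takes a more refined route for the key step of finding an order-$p$ automorphism: rather than brute force, it invokes a result of Arvind, Beigel and Lozano that produces such an automorphism in polynomial time given an oracle for testing whether $p$ divides $\#\auts{H}$, then reduces that oracle to $\textsc{GI}$, and finally plugs in the $\exp(O(\sqrt{k})\cdot\mathsf{poly}(\log k))$ algorithm for $\textsc{GI}$. This buys a much better (sub-exponential) running time and, more to the point, sets up the subsequent remark that Babai's quasipolynomial $\textsc{GI}$ algorithm would yield a quasipolynomial-time computation of $H^\ast_p$. Your brute-force argument is simpler and entirely sufficient for the lemma as stated, but it does not support that sharper follow-up observation.
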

\begin{proof}
Observe that, given an automorphism $\alpha$ of order $p$ of $H$, we can easily construct $H/\alpha$ in polynomial time. Now consider a sequence $H=H_0,\dots,H_\ell = H^\ast_p$ such that for all $i\in\{0,\ell-1\}$ $H_{i+1} = H_i/\alpha_i$ for an automorphism $\alpha_i$ of order $p$ of $H_i$. In each step, the number of vertices decreases by at least one, hence $\ell\in O(k)$.

Therefore, it remains to show that, given a graph with at most $k$ vertices, we can find an automorphism of order $p$ in time $\exp(\mathsf{poly}(k))$, or correctly decide that none exists; in the latter case, we have found the $p$-reduced quotient.

To this end, we rely on Lemma~13 of \cite{ABL2000} which provides a polynomial-time algorithm\footnote{To be precise, Lemma~13 of \cite{ABL2000} states that a subgroup of order $p$ is returned. However, its proof reveals that an automorphism of order $p$ is returned.} equipped with oracle access to the problem of determining whether the size of the automorphism group of a graph is divisible by $p$. Moreover, the latter problem is polynomial-time many-one reducible to $\textsc{GI}$ \cite[Theorem 5]{ABL2000}. Finally, $\textsc{GI}$ can be solved in time $\exp(O(\sqrt{k})\cdot \mathsf{poly}(\log k))$~\cite{BabaiL83}, concluding the proof.  
\end{proof}
\begin{remark}\label{rem:quasipoly}
Using the quasipolynomial time algorithm for $\textsc{GI}$ due to Babai~\cite{Babai16}, we can strengthen the previous result and obtain a quasipolynomial time ($\exp(\mathsf{poly}(\log k))$) algorithm as well. However, since the revised version of the quasipolynomial time algorithm for $\textsc{GI}$ has not been fully peer-reviewed yet, we decided to state the weaker result, which suffices for all purposes of this paper.
\lipicsEnd
\end{remark}

Finally, we note that computing the $p$-reduced quotient in polynomial time is at least as hard as the graph automorphism problem $\textsc{GA}$, which asks to decide whether a given graph has a non-trivial automorphism.
\begin{lemma}\label{lem:hard_quotient_GA}
Let $p$ be a fixed prime. If the $p$-reduced quotient of a graph can be computed in polynomial time, then $\textsc{GA} \in \ccP$.
\end{lemma}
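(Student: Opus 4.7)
The plan is to derive a polynomial-time algorithm for $\textsc{GA}$ from the assumed polynomial-time computation of $p$-reduced quotients, by chaining two reductions through the graph isomorphism problem.

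The first step is to observe that the hypothesis immediately yields a polynomial-time procedure for deciding ``does the graph $H$ admit an automorphism of order $p$?'': one simply computes $H^\ast_p$ and checks whether $|V(H^\ast_p)| < |V(H)|$. Indeed, each step $H_i \to H_i/\alpha_i$ in a quotient chain identifies the nontrivial orbits of the order-$p$ automorphism $\alpha_i$ and so strictly decreases the vertex count; the chain $H = H_0, H_1, \ldots, H_\ell = H^\ast_p$ is therefore trivial precisely when $H$ has no order-$p$ automorphism. By Cauchy's theorem, this decision problem coincides with asking whether $p$ divides $|\auts{H}|$.

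The second step is to reduce $\textsc{GA}$ to this detection problem. Via the classical polynomial-time Turing equivalence of $\textsc{GA}$ and $\textsc{GI}$, it is enough to reduce $\textsc{GI}$ to the order-$p$ detection problem. Given graphs $G_1, G_2$ on $n$ vertices each, I would form a graph $H$ consisting of a $p$-cycle $c_0, c_1, \ldots, c_{p-1}$, with a copy of $G_1$ attached to $c_0$ (by an edge to a designated vertex) and a copy of $G_2$ attached to each $c_i$ for $i \geq 1$ in the same fashion, together with rigidifying pendants on the $c_i$'s that single out the $p$-cycle as the ``backbone''. The only candidate order-$p$ automorphism of $H$ is then the cyclic shift $c_i \mapsto c_{i+1}$, and this shift extends to an automorphism of $H$ precisely when the copy of $G_1$ at $c_0$ maps isomorphically to the copy of $G_2$ at $c_1$, that is, exactly when $G_1 \cong G_2$.

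The main technical obstacle is to rule out spurious order-$p$ automorphisms arising from internal symmetries of the attached copies of $G_1$ or $G_2$. This is where the full strength of the $p$-reduced quotient oracle (not just the derived decision procedure) is used: one preprocesses $G_1, G_2$ via a careful bookkeeping of the reduction chain so as to preserve the isomorphism question while eliminating internal order-$p$ symmetries, a standard manipulation in the framework of Arvind, Beigel, and Lozano \cite{ABL2000}. I expect this to be the hardest part of writing out the full proof: one must argue that the preprocessing does not collapse genuinely non-isomorphic inputs to isomorphic ones, which is delicate because the $p$-reduced quotient alone is not an isomorphism invariant fine enough to determine $G_i$ up to isomorphism. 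Once handled, combining the two reductions with the observation from the first step yields $\textsc{GA} \in \ccP$.
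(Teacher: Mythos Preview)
Your first step is exactly right and matches the paper: computing $H^\ast_p$ and testing whether $|V(H^\ast_p)| < |V(H)|$ decides in polynomial time whether $H$ has an order-$p$ automorphism, equivalently whether $p \mid |\auts{H}|$.

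The second step, however, is an unnecessary detour with a real gap. The paper simply invokes \cite[Theorem~4]{ABL2000}, which already gives a polynomial-time many-one reduction from $\textsc{GA}$ directly to the problem of deciding whether $p$ divides $|\auts{H}|$. Combining that reduction with your first step finishes the proof in two lines; there is no need to pass through $\textsc{GI}$ at all.

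Your proposed route instead reduces $\textsc{GA}$ to $\textsc{GI}$ (this direction is fine, though the claimed full Turing \emph{equivalence} of $\textsc{GA}$ and $\textsc{GI}$ is an open problem) and then attempts to reduce $\textsc{GI}$ to order-$p$ detection via a $p$-cycle backbone with attached copies of $G_1,G_2$. The gap you yourself identify is genuine and not closed: if either $G_1$ or $G_2$ has an internal automorphism of order $p$, the gadget has an order-$p$ automorphism regardless of whether $G_1\cong G_2$. You propose to ``preprocess'' $G_1,G_2$ to kill such symmetries while preserving the isomorphism question, but no concrete procedure is given, and replacing $G_i$ by its $p$-reduced quotient does not work because (as you note) $G_1^\ast_p \cong G_2^\ast_p$ does not imply $G_1 \cong G_2$. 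Absent such a procedure, the reduction does not go through. The fix is simply to cite the direct $\textsc{GA}\to$ divisibility reduction from \cite{ABL2000} and drop the $\textsc{GI}$ intermediary entirely.
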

\begin{proof}
It is shown in \cite[Theorem 4]{ABL2000}  that $\textsc{GA}$ is polynomial-time many-one reducible to the problem of deciding whether the size of the automorphism group of a graph is divisible by (a fixed) $p$. Recall that the size of a (finite) group is divisible by $p$ if and only if it contains an element of order $p$. Consequently, the automorphism group of a graph $H$ is divisible by $p$ if and only if the $p$-reduced quotient of $H$ is not equal to $H$, the latter of which is equivalent to the $p$-reduced quotient having fewer vertices than $H$.
\end{proof}

\subsection{Classification for Counting Homomorphisms modulo \texorpdfstring{$p$}{p}}
We are now able to prove an exhaustive complexity classification for parameterized modular counting of homomorphisms. Recall that $\#_p\homsprob(\mathcal{H})$ is the problem of, given a graph $H\in \mathcal{H}$ and a graph $G$, to compute the number of homomorphisms from $H$ to $G$ modulo $p$. For what follows, given a class of graphs $\mathcal{H}$, we write $\mathcal{H}^\ast_p$ for the class of all $p$-reduced quotient of graphs in $\mathcal{H}$ that do not have self-loops. We begin with the algorithmic part of our classification, which we restate for convenience:

\mainhomsalgo*
\begin{proof}
On input $H$ and $G$, our algorithm first computes $H^\ast_p$ in time $\exp(\mathsf{poly}(|V(H)|))$ by Lemma~\ref{lem:compute_quotient}. If $H^\ast_p$ has a self-loop then we output $0$ since the input graph $G$ is not allowed to have self-loops. Otherwise, we compute $\#\homs{H^\ast_p}{G}$ in time $\exp(O(|V(H)|))\cdot |V(G)|^{\mathsf{tw}(H^\ast_p)+O(1)}$ using the algorithm of Diaz et al.~\cite{DiazST02} (see also~\cite{CurticapeanDM17}). Finally, we output $\#\homs{H^\ast_p}{G}\mod p$; correctness follows from Lemma~\ref{lem:quotient_reduction}.
\end{proof}

\begin{remark}[A quasipolynomial-time algorithm]
In fact, Dalmau and Jonsson~\cite{DalmauJ04} show that $\#\homsprob(\mathcal{H})$ can be solved in polynomial time (but with a worse exponent of $|V(G)|$) whenever the treewidth of $\mathcal{H}$ is bounded. Using the quasipolynomial time algorithm for $\textsc{GI}$ due to Babai~\cite{Babai16} as explained in Remark~\ref{rem:quasipoly}, we can compute the $p$-reduced quotient in quasipolynomial time in $|V(H)|$ and then invoke the aforementioned algorithm of Dalmau and Jonsson. Consequently, for any fixed prime $p$, and for any class of graphs $\mathcal{H}$ for which $\mathcal{H}^\ast_p$ has bounded treewidth, we obtain an algorithm for $\#_p\homsprob(\mathcal{H})$ running in quasipolynomial time.\lipicsEnd
\end{remark}

Next, we show that that $\#_p\homsprob(\mathcal{H})$ is intractable whenever the treewidth of the $p$-reduced coefficients is unbounded. For convenience, we restate the formal theorem.

\mainhomshard*
\begin{proof}
We reduce from $\#_p\cphomsprob(\mathcal{H}^\ast_p)$ which proves the claim by Lemma~\ref{lem:hardness_basis_mod}; note that $\mathcal{H}^\ast_p$ is recursively enumerable as $\mathcal{H}$ is computable.

Let $H^\ast_p$ and $G$ be an instance of $\#_p\cphomsprob(\mathcal{H}^\ast_p)$, that is, $H^\ast_p$ is the $p$-reduced quotient of a graph $H\in \mathcal{H}$, and $G$ is an $H^\ast_p$-coloured graph with colouring $c$. Observe first that we can find $H$ in time only depending on $H^\ast_p$. For the reduction, we consider \emph{colourful} homomorphisms as an intermediate notion: We write $\cfhoms{H^\ast_p}{G}$ for the set of all homomorphisms $\varphi$ from $H^\ast_p$ to $G$ such that $c(\varphi(V(H^\ast_p)))=V(H^\ast_p)$, that is, each colour is met (exactly) once. It is known that
\begin{equation}\label{eq:cf_cp}
\#\cfhoms{H^\ast_p}{G} = \#\auts{H^\ast_p}\cdot \#\cphoms{H^\ast_p}{G}    
\end{equation}
and that, by inclusion-exclusion,
\begin{equation}\label{eq:cf_incl_excl}
 \#\cfhoms{H^\ast_p}{G} = \sum_{J\subseteq V(H^\ast_p)} (-1)^{|J|} \cdot \#\homs{H^\ast_p}{G-J}\,,   
\end{equation}
where $G-J$ is the graph obtained from $G$ be deleting all vertices coloured (by $c$) with a colour in $J$. We refer the reader for instance to~\cite[Lemma~2.51 and~2.52]{Roth19} for proofs of (\ref{eq:cf_cp}) and (\ref{eq:cf_incl_excl}).

Now recall that, by Lemma~\ref{lem:quotient_reduction}, for all graphs $G'$ we have \[\#\homs{H^\ast_p}{G'} = \#\homs{H}{G'} \mod p\,.\]
In combination with (\ref{eq:cf_cp}) and (\ref{eq:cf_incl_excl}) we thus obtain
\begin{equation}\label{eq:main_hom_hardness}
    \#\auts{H^\ast_p}\cdot \#\cphoms{H^\ast_p}{G} = \sum_{J\subseteq V(H^\ast_p)} (-1)^{|J|} \cdot \#\homs{H}{G-J} \mod p\,.
\end{equation}
Next, we use the fact that $H^\ast_p$ has, by definition, no automorphisms of order $p$. Thus $\#\auts{H^\ast_p} \neq 0 \mod p$ and we can multiply (\ref{eq:main_hom_hardness}) by $\#_p\auts{H^\ast_p}^{-1}$. We obtain
\begin{equation*}
    \#_p\cphoms{H^\ast_p}{G} = \#_p\auts{H^\ast_p}^{-1} \sum_{J\subseteq V(H^\ast_p)} (-1)^{|J|} \cdot \#_p\homs{H}{G-J} \mod p\,,
\end{equation*}
which we can evaluate using our oracle for $\#_p\homsprob(\mathcal{H})$; recall from above that we can find $H$ in time only depending on $H^\ast_p$.
Finally, since the number of terms in the sum only depends on $H^\ast_p$, we obtain the desired parameterized Turing-reduction from $\#_p\homsprob(\mathcal{H}^\ast_p)$ yielding $\pW{1}$-hardness by Lemma~\ref{lem:hardness_basis_mod}. Furthermore, since each graph $G-J$ is of size at most~$|G|$, the conditional lower bound under rETH transfers from Lemma~\ref{lem:hardness_basis_mod} as well. This concludes our proof.
\end{proof}

\bibliography{main}

\end{document}